\documentclass[11pt]{article}

\usepackage[margin=1in]{geometry}
\usepackage{amsmath,amssymb,amsthm,mathtools,bm}
\usepackage{enumitem}
\usepackage{needspace}
\usepackage{tikz}
\usetikzlibrary{arrows.meta,positioning}
\usepackage{xcolor}
\usepackage{cite}
\usepackage{authblk}

\usepackage[
  colorlinks=true,
  linkcolor=blue!55!black,
  citecolor=blue!55!black,
  urlcolor=blue!55!black,
  pdftitle={A Weakly Group Strategy-Proof and Pareto-Stable Matching Mechanism with Two-Sided Indifferences under Gross Substitutes},
  pdfauthor={Yasushi Kawase and Ryosuke Sato}
]{hyperref}
\usepackage[nameinlink,noabbrev,sort&compress]{cleveref}
\crefname{appendix}{appendix}{appendices}
\Crefname{appendix}{Appendix}{Appendices}

\usepackage{titlesec}
\titlespacing*{\paragraph}{0pt}{.8\baselineskip}{1em} 
\usepackage[many]{tcolorbox}
\newtcolorbox{procbox}[1][]{%
  enhanced, sharp corners,
  colback=white, colframe=black, boxrule=0.5pt,
  left=1em, right=1em, top=0.7em, bottom=0.7em,
  #1}

\newtheorem{theorem}{Theorem}[section]
\newtheorem{proposition}[theorem]{Proposition}
\newtheorem{lemma}[theorem]{Lemma}
\newtheorem{corollary}[theorem]{Corollary}
\theoremstyle{definition}

\newtheorem{example}[theorem]{Example}

\newcommand{\Mnat}{M\ensuremath{{}^{\natural}}}
\newcommand{\cA}{\mathcal{A}}
\newcommand{\cC}{\mathcal{C}}
\newcommand{\cD}{\mathcal{D}}

\newcommand{\cI}{\mathcal{I}}

\newcommand{\cM}{\mathcal{M}}

\newcommand{\dom}{\operatorname{dom}}
\newcommand{\argmax}{\operatorname*{arg\,max}}
\renewcommand{\mid}{\mathrel{:}}

\title{A Weakly Group Strategy-Proof and Pareto-Stable Matching Mechanism with Two-Sided Indifferences under Gross Substitutes}
\author{Yasushi Kawase}
\author{Ryosuke Sato}
\affil{Chuo University}
\date{}

\begin{document}
\maketitle

\begin{abstract}
We study finite many-to-one matching markets with indifferences on both sides and publicly known firm preferences over sets of workers.
A standard approach breaks ties uniformly at random and runs deferred acceptance, which is strategy-proof for workers but may select a weakly stable matching that is Pareto dominated under the agents' original preferences.
For the broad class of markets in which each firm's preferences admit a gross substitutes (GS) representation, we construct a deterministic generalized deferred acceptance mechanism that is weakly stable, Pareto efficient with respect to both sides, worker weakly group strategy-proof, and implementable in strongly polynomial time given value oracles for the GS representations.
This domain includes nonresponsive firm preferences, such as additive scores under matroid constraints and assignment values.
Our analysis bridges ordinal matching with indifferences and competitive equilibria under imperfectly transferable utility.
Although the mechanism uses only ordinal comparisons, it returns a seller-optimal equilibrium matching of a tiered auxiliary market, and the equilibrium structure of this market yields the stability, efficiency, and incentive guarantees.
\end{abstract}

\section{Introduction}\label{sec:introduction}

Centralized markets that match workers to firms often face ties in preferences.
A standard approach breaks the ties before running worker-proposing deferred acceptance~\cite{AbdulkadirogluPathakRoth2009}.
With report-independent tie-breaking, this procedure is strategy-proof for workers, and its outcome is weakly stable under the original weak preferences.
However, the outcome need not be Pareto efficient, since treating tied alternatives as strictly ordered can rule out a Pareto improvement that remains weakly stable under the original preferences.
The following example illustrates this.

Consider three workers $a,b,c$ and two firms $A,B$.
Firm $A$ has two tasks, with values $(4,1)$, $(3,3)$, and $(5,0)$ for workers $a,b,c$, respectively.
It values a set by the best assignment of its workers to distinct tasks, allowing workers to remain unassigned.
Firm $B$ ranks $\{a\}\sim_B\{b\}\succ_B\emptyset$ and finds all other sets unacceptable.
Workers $a$ and $b$ have preference $A\sim B\succ\varnothing$, while $c$ has preference $B\succ A\succ\varnothing$.
If $a$'s tie is broken in favor of $A$ and $b$'s in favor of $B$, worker-proposing deferred acceptance returns $\mu=\{(a,A),(b,B),(c,A)\}$.
Exchanging $a$ and $b$ gives $\nu=\{(a,B),(b,A),(c,A)\}$.
All workers and firm $B$ are indifferent, while $A$'s value rises from $5+1=6$ to $5+3=8$, since $c$ performs task 1 and $b$ replaces $a$ at task 2.
Both matchings are weakly stable because $a$ and $b$ are matched to top-ranked firms and $B$ finds $c$ unacceptable, but $\nu$ Pareto dominates $\mu$.

A natural approach is to Pareto improve the initial matching while preserving weak stability.
Stable improvement cycles can yield matchings that admit no Pareto improvement within the set of weakly stable matchings~\cite{ErdilErgin2008,ErdilErgin2017,BIK2025}.
However, a mechanism based on such improvements need not be strategy-proof~\cite{ErdilErgin2008}.
Thus, resolving ties before selecting a matching may sacrifice Pareto efficiency, while improving the resulting matching may sacrifice strategy-proofness.

We study a many-to-one matching market in which workers have weak preferences over firms and firms have public weak preferences over sets of workers.
We evaluate Pareto efficiency with respect to both workers and firms.
When firm preferences contain ties, no worker strategy-proof mechanism always selects a weakly stable matching that is not Pareto dominated for workers by another weakly stable matching~\cite{ErdilErgin2008}.
This impossibility does not directly apply under our criterion, because a Pareto improvement must also leave every firm weakly better off.
Indeed, weak stability, Pareto efficiency for workers and firms, and worker weak group strategy-proofness are compatible in one-to-one matching and in college admissions when each college's preference consists of the comparisons implied by responsiveness from its ranking of individual students~\cite{DLP2017,DLP2017arXiv}.
Igarashi et al.~\cite{IgarashiEtAl2026} obtain the same guarantees in a many-to-one model in which firms compare sets of workers by the stochastic-dominance relation induced by their rankings of individual workers.
We ask whether these properties remain compatible when firm preferences over sets of workers may be nonresponsive.
For example, firm $A$ above prefers $\{a\}$ to $\{b\}$ but prefers $\{b,c\}$ to $\{a,c\}$.

\subsection{Our results}
We assume that each firm's weak order over sets of workers can be represented by a set function satisfying the \emph{gross substitutes (GS)} condition of Kelso and Crawford~\cite{KelsoCrawford1982}.
GS guarantees the existence of stable outcomes in many-to-one matching both with transfers~\cite{KelsoCrawford1982} and without~\cite{HatfieldMilgrom2005}.
This class includes nonresponsive preferences arising from hiring constraints, such as type-specific quotas, and from diversity policies, such as reserves with overlapping eligibility. It also includes preferences based on position-dependent values, as illustrated by firm $A$'s assignment values above (\Cref{sec:firm-examples}).
Our main result is a deterministic generalized deferred acceptance mechanism that, on this domain, is worker weakly group strategy-proof and selects a Pareto-stable matching, that is, a weakly stable matching that is Pareto efficient for workers and firms (\Cref{thm:generalized-da}).

Generalized deferred acceptance uses firm and worker index orders fixed before workers report and proceeds in rounds.
In each round, every unmatched worker with a remaining acceptable tier moves to her next tier, an indifference class of acceptable firms, while matched workers may change firms within their current tiers.
Among the matchings that respect the current tiers, the firms choose by serial dictatorship in their index order, in which each firm retains all matchings that are best for it among those retained by earlier firms.
The workers then do the same in their index order.
The procedure uses only ordinal comparisons, and the number of rounds is at most the number of worker--firm pairs.
Each round can be implemented in strongly polynomial time given a value oracle for each GS representation (\Cref{app:implementation}), although the output does not depend on the choice of representations.

Under serial dictatorship, a firm chooses among the sets it can obtain in the matchings retained by earlier firms rather than among the subsets of an available set, and the usual arguments for deferred acceptance based on choice functions do not apply.
Our analysis instead extends the tiered-slope market of Domani\c{c}, Lam, and Plaxton~\cite{DLP2017}, a one-to-one market with imperfectly transferable utility, to an auxiliary many-to-one market in which workers are sellers and firms are GS buyers.
We show that equilibrium matchings of this market are Pareto-stable under the original preferences and, using Schlegel's existence and incentive results~\cite{Schlegel2022}, that every deterministic mechanism selecting a seller-optimal one is worker weakly group strategy-proof (\Cref{thm:auxiliary-guarantees}).
Our generalized deferred acceptance returns such a matching at every report profile (\Cref{thm:terminal-equivalence}), which establishes both properties.
The outcome of our mechanism also resists deviations by coalitions of several firms, which subsumes both weak stability and Pareto improvements in which some firm strictly improves (\Cref{prop:coalition-stability}).
The profiles of workers' assigned tiers across equilibrium matchings form a distributive lattice, and our mechanism attains its greatest element (\Cref{cor:rank-lattice}).
Moreover, no strategy-proof mechanism that respects the firms' index order within tiers uniformly Pareto dominates our mechanism for workers (\Cref{thm:undominance}).

Our assumption concerns firms' full preferences over sets of workers, not only their choice correspondences, on which the matching literature typically imposes conditions such as substitutability~\cite{HatfieldMilgrom2005}.
\Cref{app:pi-lad-failure} presents a market in which every firm's choice correspondence is induced by a GS function, yet generalized deferred acceptance returns a matching that is not weakly stable.

\subsection{Related work}

Closest to our work are Domani\c{c}, Lam, and Plaxton~\cite{DLP2017} and Igarashi et al.~\cite{IgarashiEtAl2026}, whose guarantees we recalled above.
The latter also obtain firm-side fairness under quotas and unacceptable assignments.
Our procedure moves workers down their preference tiers as the latter does, and we derive our guarantees from a tiered-slope market inspired by the former.
Our framework therefore unifies the two settings by extending both components to GS firm preferences.
Unlike both earlier models, which derive firm preferences from rankings of individual workers, our framework also covers nonresponsive firm preferences (\Cref{sec:firm-examples}).


Early work on matching with indifferences includes Sotomayor~\cite{Sotomayor1999}.
Erdil and Ergin~\cite{ErdilErgin2008,ErdilErgin2017} study Pareto improvements among weakly stable matchings through cycles and chains.
For bundle comparisons induced by individual rankings, Kamiyama~\cite{Kamiyama2014} develops a polynomial-time algorithm for Pareto-stable matching and extends it to one-sided matroid constraints~\cite{Kamiyama2019}.
Erdil and Kumano~\cite{ErdilKumano2019}, with the corrigendum of Erdil et al.~\cite{ErdilEtAl2022}, consider substitutable choice correspondences with ties.
In recent work, Bando, Imamura, and Kawase~\cite{BIK2025} allow ties on the school side, with choice correspondences satisfying path independence and the law of aggregate demand, and evaluate efficiency for students alone within the set of weakly stable matchings.
We address worker incentives under possibly nonresponsive full preferences over sets of workers, and our efficiency criterion ranges over all matchings.

Substitutability and the law of aggregate demand support individual and weak group strategy-proofness of deferred acceptance when worker preferences are strict~\cite{HatfieldMilgrom2005,HatfieldKojima2009}.
Kojima, Tamura, and Yokoo~\cite{KTY2018} aggregate hospital preferences and distributional constraints into a single GS function, which implies both conditions, and show that generalized deferred acceptance is then strategy-proof for doctors and yields the doctor-optimal stable matching.
Their approach is part of a broad literature in which GS, known in discrete convex analysis as \Mnat-concavity, and its generalizations serve as standard assumptions for establishing stability and efficiency (see, e.g.,~\cite{FujishigeTamura2007,MurotaYokoi2015,IwataYokoi2020,ImamuraKawase2024,ImamuraKawase2025,ImamuraKawase2026}).
Most of these works assume strict preferences, under which a GS function yields a unique choice from every set of available workers.
Imamura and Kawase~\cite{ImamuraKawase2026} allow weak preferences but study stable and fair random allocations.
We keep the GS assumption but allow indifferences on both sides, where choices are no longer unique, and obtain a deterministic mechanism that is Pareto efficient with respect to each firm's own preference.


Our auxiliary market has imperfectly transferable utility, and our guarantees come from its competitive equilibria.
Under unit demands, equilibrium existence and the structure of the equilibrium set have long been known~\cite{Kaneko1982,Quinzii1984,DemangeGale1985}, and an equilibrium can be computed~\cite{Alkan1989,Alkan1992}.
More recently, Fleiner et al.~\cite{FleinerEtAl2019} establish existence for combinatorial markets, and Schlegel~\cite{Schlegel2022} analyzes the structure of these equilibria, including the side-optimal existence and incentive results that our proof uses.
Balkanski, En, and Faenza~\cite{BalkanskiEtAl2024} and Oki and Sato~\cite{OkiSato2026} propose algorithms for computing equilibria in such markets, whereas our procedure exploits the tiered structure to reach an equilibrium matching faster, without computing transfers.


\section{Model}\label{sec:matching-model}

\subsection{Ordinal matching market}

We study a finite many-to-one ordinal matching market $(I,J,(\succsim_i)_{i\in I},(\succsim_j)_{j\in J})$ in which preferences may contain ties.
Here, $I$ and $J$ are the sets of workers and firms, respectively.
For each worker $i$, $\succsim_i$ is a weak order (a complete and transitive relation) on $J\cup\{\varnothing\}$, where $\varnothing$ denotes remaining unmatched.
Write $\succ_i$ and $\sim_i$ for the strict-preference and preference-tie parts of $\succsim_i$, respectively.
Thus, $j\sim_i k$ means that worker $i$ is indifferent between $j$ and $k$.
For $S\subseteq I$, write $\succsim_S=(\succsim_i)_{i\in S}$ for the preference profile of workers in $S$.
Workers report their preferences to the mechanism.
For each firm $j$, $\succsim_j$ is a weak order on $2^I$ and represents its preference over sets of workers.\footnote{\label{fn:preorders}Preferences may instead be preorders (reflexive and transitive, but not necessarily complete).
For workers, require comparability with $\varnothing$ to preserve acceptability, and complete each report using a fixed option order while preserving every weak and strict comparison.
For firms, assume analogous GS-representable extensions exist and fix them publicly before reports.
Pareto stability and worker weak group strategy-proofness, defined below, then hold with respect to the original preorders.}
The firm preference profile $(\succsim_j)_{j\in J}$ is public information and firms do not report it.
Write $\succ_j$ and $\sim_j$ for the strict-preference and preference-tie parts of $\succsim_j$, respectively.
Thus, $X\sim_jY$ means that firm $j$ is indifferent between $X$ and $Y$.
For a set $A\subseteq I$ of available workers, firm $j$'s choice correspondence is $\cC_j(A)=\{X\subseteq A\mid\text{there is no }Y\subseteq A\text{ such that }Y\succ_jX\}$.

A matching is a set $\mu\subseteq I\times J$ such that each worker appears in at most one pair.
With a slight abuse of notation, write $\mu(i)=j$ if $(i,j)\in\mu$ and $\mu(i)=\varnothing$ if worker $i$ appears in no pair.
For each firm $j$, write $\mu(j)=\{i\in I\mid(i,j)\in\mu\}$.
A firm may employ several workers.
A mechanism $\varphi$ assigns a matching to every profile of reported worker preferences.
For a fixed profile, write $\mu=\varphi(\succsim_I)$ and $\varphi_i(\succsim_I)=\mu(i)$.
A firm $j$ and a set $T\subseteq I$ of workers form a \emph{blocking coalition} for a matching $\mu$ if $T\succ_j\mu(j)$ and every worker $i\in T\setminus\mu(j)$ strictly prefers $j$ to $\mu(i)$.
A matching $\mu$ is
\begin{itemize}[itemsep=0.3em, topsep=.6em]
 \item \emph{individually rational} if $\mu(i)\succsim_i\varnothing$ for every worker $i$ and $\mu(j)\in\cC_j(\mu(j))$ for every firm $j$, the latter meaning that no firm strictly prefers a subset of its assigned workers,
 \item \emph{weakly stable} if it is individually rational and has no blocking coalition,
 \item \emph{Pareto efficient} if no matching $\nu$ satisfies $\nu(i)\succsim_i\mu(i)$ for every worker $i$ and $\nu(j)\succsim_j\mu(j)$ for every firm $j$ with at least one comparison strict, and
 \item \emph{Pareto-stable} if it is weakly stable and Pareto efficient.
\end{itemize}


A mechanism $\varphi$ is weakly stable or Pareto efficient if its outcome has the corresponding property at every worker preference profile.
It is worker \emph{weakly group strategy-proof (WGSP)} if there do not exist a true worker preference profile $\succsim_I$, a nonempty coalition $S\subseteq I$, and alternative reports $\succsim'_S$ such that $\varphi_i(\succsim'_S,\succsim_{I\setminus S})\succ_i\varphi_i(\succsim_I)$ for every $i\in S$.
Applied to single-worker coalitions, worker WGSP implies that no worker gains by misreporting alone, that is, worker strategy-proofness.%
\footnote{Even in one-to-one markets with strict preferences on both sides, Pareto stability and worker group strategy-proofness are incompatible on the unrestricted domain~\cite{AlcaldeBarbera1994,Ergin2002}.
A mechanism is \emph{worker group strategy-proof} if there do not exist $\succsim_I$, $S\subseteq I$, and $\succsim'_S$ such that $\varphi_i(\succsim'_S,\succsim_{I\setminus S})\succsim_i\varphi_i(\succsim_I)$ for every $i\in S$ and at least one of these comparisons is strict.}

\subsection{Gross substitutes firm preferences}\label{sec:gs-preferences}

Here, we define the GS condition on set functions, which we impose on firm preferences and also use for the buyers' valuations in the auxiliary market of \Cref{sec:transfer-market}.
Let $E$ be a finite ground set.
Consider a set function $u\colon 2^E\to\mathbb{R}\cup\{-\infty\}$ with nonempty effective domain $\dom(u)=\{X\subseteq E\mid u(X)>-\infty\}$.
For a price vector $\bm{p}\in\mathbb{R}^E$, write $p(X)=\sum_{e\in X}p_e$ and define $\cD_u(\bm{p})=\argmax\{u(X)-p(X)\mid X\subseteq E\}$.
The function $u$ satisfies \emph{gross substitutes} (GS) if, for all $\bm{p},\bm{q}\in\mathbb{R}^E$ with $\bm{p}\le\bm{q}$ and every $X\in\cD_u(\bm{p})$, there is $Y\in\cD_u(\bm{q})$ such that $\{e\in X\mid p_e=q_e\}\subseteq Y$. 
This condition is equivalent to the following exchange axiom~\cite{FujishigeYang2003}, \cite[Theorem~3.7]{Murota2016}: 
for all $X,Y\in\dom(u)$ and $e\in X\setminus Y$, there exists $f\in(Y\setminus X)\cup\{\varnothing\}$ such that
\begin{align}
 u(X)+u(Y)\le u(X-e+f)+u(Y+e-f).
\label{eq:exc}
\end{align}
Here and below, when $f=\varnothing$, we interpret $X-e+f$ as $X\setminus\{e\}$ and $Y+e-f$ as $Y\cup\{e\}$.
We call a set function satisfying this exchange axiom a \emph{GS function}.\footnote{These are known in discrete convex analysis as \Mnat-concave functions~\cite{Murota2003,Murota2016}.}

A nonempty family $\cA\subseteq 2^E$ is called a \emph{generalized matroid}, or a \emph{g-matroid}, if, for all $X,Y\in\cA$ and $e\in X\setminus Y$, there exists $f\in(Y\setminus X)\cup\{\varnothing\}$ such that $X-e+f,Y+e-f\in\cA$.
The exchange axiom~\eqref{eq:exc} implies that the effective domain of every GS function $u$ and each demand family $\cD_u(\bm{p})$ are g-matroids. A g-matroid containing $\emptyset$ is the family of \emph{independent sets} of a \emph{matroid}~\cite[Section~3.1]{Murota2016}.
An additive function is minimized over a g-matroid $\cA$ at $X\in\cA$ if and only if no single-element addition, deletion, or exchange within $\cA$ decreases its value~\cite[Theorem~6.26]{Murota2003}.

An integer-valued GS function $v_j\colon 2^I\to\mathbb{Z}\cup\{-\infty\}$ with $v_j(\emptyset)=0$ is a \emph{GS representation} of $\succsim_j$ if, for every $X,Y\subseteq I$, we have $X\succsim_jY$ if and only if $v_j(X)\ge v_j(Y)$.
We say that $\succsim_j$ is \emph{GS-representable} if it admits a GS representation.\footnote{Integer values lose no generality.
Suppose that $\succsim_j$ is represented by a real-valued GS function.
For each instance of the exchange axiom, fix a choice of $f$ for which it holds.
These inequalities, together with equalities for indifferences and gaps of at least one for strict preferences, form a finite system of linear inequalities with integer coefficients on the fixed effective domain.
A scaled copy of the real-valued representation satisfies the system, which therefore has a rational solution, and scaling it gives an integer-valued GS representation.}

\subsection{Examples of GS-representable preferences}\label{sec:firm-examples}

GS-representable preferences can reflect individual rankings, hiring constraints, diversity policies, and position-dependent values.

\paragraph{Additive scores with quotas.}
Additive scores extend individual rankings to preferences over sets of workers.
For a firm $j$ with quota $q_j$, choose integer scores $w_{ij}$ representing its weak ranking of workers, with the outside option assigned score zero.
The resulting function $v_j(X)=\sum_{i\in X}w_{ij}$ for
$|X|\le q_j$, with $v_j(X)=-\infty$ otherwise, is GS.
On sets within the quota, its induced weak order extends the preorder generated by responsiveness, preserving both weak and strict comparisons~\cite{DLP2017,DLP2017arXiv}.
Our guarantees in \Cref{thm:generalized-da} therefore carry over to that preorder, as explained in footnote~\ref{fn:preorders}.

\paragraph{Additive scores with matroid constraints.}
Additive scores also accommodate more general hiring constraints when the feasible sets form a matroid.
If $\cI_j$ is its family of independent sets, the same additive formula, with value $-\infty$ outside $\cI_j$, is GS~\cite[Chapter~6]{Murota2003}.
For example, if every worker has one type, upper quotas for the individual types together with an overall quota define a matroid constraint~\cite[Chapter~4]{Murota2003}.

\paragraph{Laminar concave values.}
Laminar concave values capture preferences over the numbers of workers hired from disjoint or nested groups.
Let $\mathcal{L}$ be a laminar family of subsets of $I$, meaning that any two members are disjoint or one contains the other.
For each $L\in\mathcal{L}$, let $\varphi_L\colon\{0,\ldots,|L|\}\to\mathbb{Z}\cup\{-\infty\}$ be concave with $\varphi_L(0)=0$.
Then $v_j(X)=\sum_{L\in\mathcal{L}}\varphi_L(|X\cap L|)$ is GS~\cite[Note~6.11]{Murota2003}.
For example, if $\varphi_L$ is steeply increasing up to some $t_L$ and constant thereafter, the firm gains from each additional worker from $L$ until it has hired $t_L$ of them, and further workers from $L$ bring no additional value~\cite{KTY2018,BIK2025}.
If $\varphi_L$ instead has a unique maximum at $t_L$, its value increases as the number of hires approaches $t_L$ from either side, reaching its maximum at $t_L$~\cite{BIK2025}.
Taking $\mathcal{L}$ to consist of singletons recovers additive scores.

%

\paragraph{Assignment values.}
Assignment values allow a worker's contribution to depend on the position she fills~\cite{HatfieldMilgrom2005}.
For $X\subseteq I$, define $v_j(X)$ as the maximum total value of an assignment of every worker in $X$ to a distinct eligible position, and $-\infty$ if no such assignment exists.
With integer values for worker--position pairs, this function is GS~\cite[Section~3.6]{Murota2016}.
Adding enough zero-value positions open to all workers allows them to remain unassigned to the original positions, as in~\Cref{sec:introduction}.

Such valuations can also represent policies that reserve some positions for specified worker types while leaving others open to all workers.
Although a worker may be eligible for several reserves, she can fill only one position and therefore counts toward at most one reserve.
Without the zero-value positions above, the number of positions acts as an overall quota, and giving reserved positions sufficiently large bonuses prioritizes filling them within this quota~\cite{BIK2025}.
This treatment of overlapping types arises in affirmative action in India~\cite{SonmezYenmez2022} and Brazil~\cite{AygunBo2021}.

\section{Generalized Deferred Acceptance}\label{sec:generalized-da}

In this section, we construct generalized deferred acceptance using serial dictatorship over feasible matchings.
The procedure lets unmatched workers move down their preference tiers while allowing matched workers to change firms within their current tier.

Fix a firm index order $J=\{j_1,\ldots,j_m\}$ and a worker index order $I=\{i_1,\ldots,i_n\}$ before workers report.
A firm $j$ is acceptable to worker $i$ if $j\succsim_i\varnothing$.
Her \emph{current eligible tier} $F_i$ is either $\emptyset$ or an indifference class of acceptable firms.
Let $\operatorname{next}_i(F_i)$ be worker $i$'s most-preferred acceptable tier when $F_i=\emptyset$ and her most-preferred acceptable tier strictly below $F_i$ otherwise, undefined if no such tier exists.
When worker $i$ advances, $\operatorname{next}_i(F_i)$ replaces $F_i$, and earlier tiers become unavailable.

For a profile $F=(F_i)_{i\in I}$, let $\cM(F)$ be the set of matchings $\nu$ satisfying $\nu(i)\in F_i\cup\{\varnothing\}$ for every worker $i$.
Set $\cM_0(F)=\cM(F)$.
Firms and then workers successively retain their best remaining matchings.
For $t=1,\ldots,m$, firm $j_t$ retains
\begin{align*}
 \cM_t(F)=\{\nu\in\cM_{t-1}(F)\mid
 \nu(j_t)\succsim_{j_t}\xi(j_t)\text{ for every }\xi\in\cM_{t-1}(F)\}.
\end{align*}
This is \emph{serial dictatorship over feasible matchings}, retaining all matchings that are lexicographically maximal in the firm index order.
For $q=1,\ldots,n$, worker $i_q$ then retains
\begin{align*}
 \cM_{m+q}(F)=\{\nu\in\cM_{m+q-1}(F)\mid
 \nu(i_q)\succsim_{i_q}\xi(i_q)\text{ for every }\xi\in\cM_{m+q-1}(F)\}.
\end{align*}
Since a worker is indifferent among the firms in her tier, this choice keeps only the matchings in which she is matched, if any exist, when the tier is strictly above $\varnothing$, and leaves the candidates unchanged otherwise.
Select $\Psi(F)\in\cM_{m+n}(F)$ by lexicographically minimizing $(\nu(i_1),\ldots,\nu(i_n))$, using the order $j_1,j_2,\ldots,j_m,\varnothing$ to compare entries.

The generalized deferred acceptance procedure is as follows.
\begin{procbox}[title=Generalized Deferred Acceptance,
  fonttitle=\bfseries, coltitle=black, colbacktitle=white]
\begin{enumerate}[label=\textbf{Step~\arabic*.},leftmargin=*,topsep=0pt,itemsep=0.35em,parsep=0pt]
 \item Initialize $F_i=\emptyset$ and $\mu(i)=\varnothing$ for every worker $i$.
 \item If $\operatorname{next}_i(F_i)$ is undefined for every unmatched worker $i$, return $\mu$.
 \item For every unmatched worker $i$ for whom $\operatorname{next}_i(F_i)$ is defined, replace $F_i$ by $\operatorname{next}_i(F_i)$.
 Set $\mu=\Psi(F)$ and return to Step~2.
\end{enumerate}
\end{procbox}
Each execution of Step~3, called a round, moves at least one worker to her next acceptable tier.
Since no worker revisits a tier, the procedure stops after at most $|I|\cdot|J|$ rounds.

Whenever $\Psi(F)$ is evaluated, every worker with an acceptable firm has a nonempty tier.
If a worker has a next acceptable tier, her current tier lies strictly above $\varnothing$, and the workers' choices therefore fix whether she is matched throughout $\cM_{m+n}(F)$.
Hence, the workers who advance, and thus all visited tier profiles, do not depend on which matching in $\cM_{m+n}(F)$ is selected.

Under GS-representable firm preferences, if worker preferences are strict and each firm has a unique best subset of every set of available workers, the procedure coincides with worker-proposing deferred acceptance, which here is equivalent to the cumulative offer process of Hatfield and Milgrom~\cite{HatfieldMilgrom2005}.
Each unmatched worker proposes to her next acceptable firm, and each firm chooses from its previously retained workers and new applicants. 
Since each worker is eligible for only one firm, serial dictatorship reduces to the firms' separate choices, which are unique by assumption. 
In the one-to-one setting, allowing indifferences on both sides, the procedure coincides with the algorithm of Igarashi et al.~\cite{IgarashiEtAl2026}.\footnote{In many-to-one markets, firms in the procedure of Igarashi et al.~\cite{IgarashiEtAl2026} choose in a round-robin order rather than by serial dictatorship.
Our analysis also covers this choice once the firms' separate GS representations are replaced by a single GS function for all firms, under the conditions given in \Cref{sec:discussion}.}

\Needspace{6\baselineskip}
Under the GS assumption, we obtain the following guarantees.

\begin{theorem}\label{thm:generalized-da}
Suppose every firm preference is GS-representable.
For any firm and worker index orders fixed before workers report, generalized deferred acceptance is worker WGSP and selects a Pareto-stable matching.
Given a value oracle for a GS representation of each firm preference, it runs in strongly polynomial time.
\end{theorem}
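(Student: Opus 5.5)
The plan follows the roadmap sketched in the introduction: build an auxiliary market with imperfectly transferable utility, show that its seller-optimal equilibrium matchings have all the desired properties, and then show that generalized deferred acceptance always lands on one of them.

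\textbf{Auxiliary market.} Workers are sellers and firms are buyers. Fix a GS representation $v_j$ for each firm. Each worker $i$ receives a tiered utility in her payment: if $i$ is matched to a firm in her $k$-th acceptable tier, she obtains a payment $p_i$. The payment is converted through an increasing piecewise-linear function whose slope and offset depend on $k$. Following the tiered-slope idea of Domani\c{c}, Lam, and Plaxton, the slopes and offsets are chosen to separate scales, so that two conditions hold:
\begin{itemize}
\item any ordinal strict preference of a worker dominates any finite firm value difference;
\item any strict firm preference dominates every transfer that is compatible with indifference.
\end{itemize}
Since firm values are GS, Fleiner et al.\ and Schlegel give existence of equilibria and of a seller-optimal equilibrium. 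Schlegel's incentive result then says that any deterministic mechanism selecting a seller-optimal equilibrium matching is weakly group strategy-proof for sellers. This gives worker WGSP, because the auxiliary market depends on reports only through the tiers.

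\textbf{Pareto stability of equilibrium matchings.} We then show that every equilibrium matching is weakly stable and Pareto efficient under the original preferences.
\begin{itemize}
\item For weak stability, a blocking coalition $(j,T)$ would let $j$ offer slightly higher payments to the workers in $T\setminus\mu(j)$. Those workers strictly prefer $j$, so the tier separation makes them accept. The gain $v_j(T)-v_j(\mu(j))\ge 1$ then contradicts $j$'s demand condition.
\item For Pareto efficiency, take a dominating matching $\nu$. Sum the demand inequalities $v_j(\mu(j))-p(\mu(j))\ge v_j(\nu(j))-p'(\nu(j))$ over firms, where $p'$ are the payments workers would need under $\nu$ to stay equally well off. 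Integrality together with the slope separation then forces all comparisons to be equalities.
\end{itemize}

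\textbf{Terminal equivalence (main obstacle).} The key step is to show that the generalized deferred acceptance output is a seller-optimal equilibrium matching at every profile. My approach is to maintain an invariant over rounds. At each tier profile $F$, the matchings in $\cM_m(F)$ should coincide with the equilibrium matchings of a restricted market. In that market, prices for the workers still in their initial tiers sit at the lowest level compatible with the tier, while workers who have advanced are priced at their reservation level for the tier above.

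Lexicographic serial dictatorship over firms should correspond to selecting an optimum of $\sum_j \lambda_j v_j(\nu(j))$ with steeply decreasing weights $\lambda_j$. This is a weighted GS aggregate, and the optimum is a valuated matroid intersection over the worker--firm bipartite structure. The g-matroid local optimality criterion (single addition, deletion, or exchange, Murota Theorem~6.26) then supplies dual prices certifying equilibrium.

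The hard part is proving monotonicity: as workers descend tiers, the supporting prices can be chosen to only decrease, and at termination every worker is in the highest tier she can attain in any equilibrium. I would argue this by a lattice argument. Suppose some equilibrium gives worker $i$ a strictly better tier. Then at the round in which $i$ left that tier, the GS exchange property lets one construct a feasible matching in $\cM(F)$ that matches $i$ without hurting earlier firms lexicographically. This contradicts her being left unmatched.

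\textbf{Running time.} There are at most $|I||J|$ rounds. Each $\Psi(F)$ is computed as follows. First run a sequence of weighted valuated matroid intersection optimizations for the firms. Because the weights can be encoded with lexicographic tie-breaking rather than numeric scaling, this stays strongly polynomial. Each worker's step then adds a constraint that she is matched, checked by another intersection. Every step uses only the value oracles, which yields strongly polynomial time.
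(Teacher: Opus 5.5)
\paragraph{Gap 1: the auxiliary market and the reduction to Schlegel.}
Your outline follows the paper's architecture, but the auxiliary market as you describe it lacks the properties you claim. Your two separation conditions never handle a firm that is indifferent between hiring a worker and not, while she strictly wants the job. Suppose buyer valuations are scaled copies of $v_j$ and outside payoffs are positive. Then such a worker is unmatched in every equilibrium: with one firm $j$, one worker $i$, $v_j(\{i\})=v_j(\emptyset)=0$, and $j\succ_i\varnothing$, any positive transfer makes $j$ drop her. That equilibrium matching is Pareto dominated by $\{(i,j)\}$. It also disagrees with the workers' stage of the procedure, which matches her. The paper fixes this by adding $\omega(X)$, with index-ordered weights, to each $\tilde v_j$, and by calibrating $\beta_i^0=\omega_i\lambda^{r_i(\varnothing)}$ so that these terms cancel in the Pareto argument.

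For worker WGSP, Schlegel's theorem concerns cardinal utilities. A coalition member who moves to a strictly better firm gains cardinally only if her seller-optimal transfer at the misreported profile is at least $1$. Then her true utility there is at least $\lambda^{r}$, where $r$ is her true rank of the new firm, and this exceeds her truthful payoff, which lies below $\lambda^{r'+1}$ for the rank $r'<r$ of her old firm. This lower bound fails at general equilibria, and it is exactly the content of \Cref{lem:raising,lem:payoff-lower}. If some matched worker's transfer is below $1$, the workers whose adjusted transfers $t_x-\omega_{\iota(x)}$ are congruent to hers modulo $\kappa$ can all be paid more without leaving demand. That argument uses the distinct weights $\omega_i$, the integrality of $v$, the specific outside payoffs, and the g-matroid local optimality criterion. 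Observing that the market depends on reports only through tiers does not supply it.

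\paragraph{Gap 2: terminal equivalence.}
The terminal equivalence, which you flag as the main obstacle, is also not established. Your invariant concerns $\cM_m(F)$ and all equilibrium matchings of a restricted market. But the workers who advance are determined by $\cM_{m+n}(F)$. Moreover, Schlegel's theorem requires the output to support the seller-optimal payoff vector $\bm{p}^+(\bm{\beta}^0)$ of the unrestricted baseline market, where a buyer may demand trades in tiers a worker has already left.

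Your ``lattice argument'' asserts that GS exchange yields a matching in $\cM(F)$ that matches $i$ without hurting earlier firms. It does not explain how an equilibrium at unrelated payoffs produces such a matching, why that matching respects earlier workers' choices, or why the output is a baseline equilibrium matching at all. Even granted, its conclusion that each worker reaches her best equilibrium tier does not by itself make the output a seller-optimal equilibrium matching, which is what Schlegel's theorem needs.

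The paper proceeds as follows:
\begin{enumerate}
 \item It gives each round the outside payoffs $\bm{\beta}(F)$, using $\tfrac12\lambda^{\ell_i(F_i)}$ for workers who still have a lower tier.
 \item It shows that, given an equilibrium matching in $\cM(F)$, the matchings in $\cM(F)$ sharing its payoffs are exactly the welfare maximizers, namely $\cM_{m+n}(F)$ (\Cref{prop:ordered-welfare}).
 \item It proves that $\bm{p}^+(\bm{\beta})$ is monotone in $\bm{\beta}$ by splicing two equilibria with Schlegel's demand comparisons (\Cref{lem:equilibrium-splicing,lem:payoff-monotone}).
 \item Monotonicity keeps a worker's payoff below $\lambda^k$ once she leaves a tier of rank $k$, and the lower bound then excludes left tiers from seller-optimal matchings (\Cref{lem:round-maximizers}).
 \item Finally, it shows $\bm{p}^+(\bm{\beta}(F^T))=\bm{p}^+(\bm{\beta}^0)$ (\Cref{thm:terminal-equivalence}).
\end{enumerate}

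\paragraph{Running time.}
Your plan via successive intersections is a workable alternative to the paper's single valuated matroid intersection, provided you also implement the final lexicographic selection defining $\Psi(F)$.
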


\Cref{fig:roadmap} sketches the proof.
We introduce an auxiliary market in which the workers are sellers and the firms are buyers.
A firm pays a transfer to each worker it hires, but utility is only imperfectly transferable.
The rate at which a worker converts transfers into utility depends on her partner, and this dependence encodes her ordinal preferences. 
In \Cref{sec:equilibrium-analysis}, we show that every deterministic mechanism selecting a seller-optimal equilibrium matching is Pareto-stable and worker WGSP.
In \Cref{sec:gda-analysis}, we show that generalized deferred acceptance returns one of these matchings.
Under imperfectly transferable utility, computing such a matching generally requires sophisticated algorithms~\cite{BalkanskiEtAl2024,OkiSato2026}, but the tiered structure lets generalized deferred acceptance reach it in purely ordinal terms, without computing transfers or payoffs.


\begin{figure}[!ht]
\centering
\begin{tikzpicture}[
  font=\small,
  box/.style={draw, fill=blue!2, rounded corners=2pt, align=center, inner sep=5pt, text width=70mm},
  lab/.style={font=\scriptsize, align=left, inner sep=3pt, text width=62mm,black!80},
  to/.style={->, >={Stealth[length=5pt]}, thick},
]

\node[box] (gda)
  {\textbf{Generalized deferred acceptance}\\[1pt]
   \footnotesize \Cref{sec:generalized-da}};

\node[box, below=15mm of gda] (aux)
  {\textbf{Seller-optimal equilibrium matchings}\\[1pt]
   \footnotesize of the auxiliary market,
   \Cref{sec:equilibrium-analysis}};

\node[box, below=15mm of aux] (prop)
  {\textbf{Pareto stability and worker WGSP}\\[1pt]
   \footnotesize under the original preferences};

\draw[to] ([xshift=12mm]gda.south west) --
  node[lab, right=4pt]
  {\textbf{\Cref{thm:terminal-equivalence}}\\[1mm]
   matchings retained at termination are\\
   the seller-optimal equilibrium matchings}
  ([xshift=12mm]aux.north west);

\draw[to] ([xshift=12mm]aux.south west) --
  node[lab, right=4pt]
  {\textbf{\Cref{thm:auxiliary-guarantees}}\\[1mm]
   deterministic seller-optimal mechanisms are \\\mbox{Pareto-stable}
   and worker WGSP}
  ([xshift=12mm]prop.north west);

\draw[to, dashed] (gda.west)--([xshift=-5mm]gda.west)
  to
  node[lab, left=2pt, text width=22mm, align=right]
  {\small\textbf{\Cref{thm:generalized-da}}}
  ([xshift=-5mm]prop.west)--(prop.west);

\end{tikzpicture}
\caption{Proof sketch of \Cref{thm:generalized-da}, which follows from Theorems~\ref{thm:auxiliary-guarantees} and~\ref{thm:terminal-equivalence}. 
}
\label{fig:roadmap}
\end{figure}

\begin{example}\label{ex:gda-execution}
We illustrate the procedure using the three-worker, two-firm market from \Cref{sec:introduction}, with the firm order $A,B$.

In the first round, the eligible tiers are $F_a=F_b=\{A,B\}$ and $F_c=\{B\}$.
Among the available sets, firm $A$ most prefers $\{a,b\}$, while $B$ prefers remaining empty to hiring $c$.
The selected matching is $\{(a,A),(b,A)\}$, leaving $c$ unmatched.

In the second round, $c$ advances to $F_c=\{A\}$, while $F_a=F_b=\{A,B\}$ remain unchanged.
Firm $A$ is indifferent between $\{b,c\}$ and $\{a,b,c\}$ and strictly prefers both to every other available set.
After $A$'s choice, three matchings remain: $\{(a,B),(b,A),(c,A)\}$, $\{(b,A),(c,A)\}$, and $\{(a,A),(b,A),(c,A)\}$.
Firm $B$ prefers hiring $a$ to remaining empty and therefore retains only the first matching.
All workers are matched, and the procedure stops after this round at $\nu$ from the Introduction.
Worker $a$ moves from $A$ to $B$ within her unchanged tier.
\end{example}

\section{Equilibria of the Auxiliary Market}
\label{sec:equilibrium-analysis}

In this section, we construct the auxiliary market outlined in \Cref{sec:generalized-da} and study its equilibria.
In this market, the workers are sellers, the firms are buyers, and utility is only imperfectly transferable.
The market is parameterized by the workers' outside payoffs, that is, their utilities from remaining unmatched.
For a baseline choice of these payoffs (\Cref{sec:worker-tiers}), we show that every equilibrium matching is Pareto-stable under the original preferences and that every deterministic mechanism selecting a seller-optimal equilibrium matching is worker WGSP (\Cref{thm:auxiliary-guarantees}).
\Cref{sec:gda-analysis} varies the outside payoffs to track the rounds of generalized deferred acceptance.

\subsection{Markets with transfers}\label{sec:transfer-market}

Consider a market with transfers $(I,J,E,(U_i)_{i\in I},(V_j)_{j\in J})$, where $I$ and $J$ are finite disjoint sets of sellers and buyers and $E$ is a finite set of trades, each between one seller and one buyer.
Write $\iota(x)$ for the seller of a trade $x$, and let $E_i$ and $E_j$ collect the trades of seller $i$ and buyer $j$.
Each seller supplies at most one trade, while a buyer may take several.
A transfer vector $\bm{t}\in\mathbb{R}^{E}$ specifies the payment $t_x$ from the buyer to the seller on every possible trade $x$.

Seller $i$ has a utility function $U_i\colon(E_i\times\mathbb{R})\cup\{\varnothing\}\to\mathbb{R}$.
The value $U_i(x,t_x)$ is her utility from trade $x$ and transfer $t_x$, and $U_i(\varnothing)$ is her utility from remaining unmatched.
For every $x\in E_i$, the function $U_i(x,\cdot)$ is continuous and strictly increasing.
Buyer $j$ has a quasilinear utility function $V_j\colon 2^{E_j}\times\mathbb{R}^{E_j}\to\mathbb{R}\cup\{-\infty\}$ given by $V_j(X,\bm{t}_j)=\tilde{v}_j(X)-\sum_{x\in X}t_x$, where $\bm{t}_j=(t_x)_{x\in E_j}$ and $\tilde{v}_j\colon 2^{E_j}\to\mathbb{R}\cup\{-\infty\}$ is a GS function with $\tilde{v}_j(\emptyset)>-\infty$.
Here, $\tilde{v}_j(X)$ is the buyer's valuation of the trades in $X$, and $\sum_{x\in X}t_x$ is its total payment.
Buyer $j$'s demand correspondence is 
\begin{align*} 
\cD_j(\bm{t})=\argmax\{V_j(X,\bm{t}_j)\mid X\subseteq E_j\}. 
\end{align*}
Since $\tilde{v}_j(\emptyset)>-\infty$, the maximum is finite and no set of value $-\infty$ is~demanded.
The buyers' utilities are fixed and public, while the sellers report their utility functions.

An \emph{allocation} $X\subseteq E$ contains at most one trade per seller, and we write $X_j\coloneqq X\cap E_j$.
A pair $(X,\bm{t})$ of an allocation and a transfer vector is a \emph{competitive equilibrium} if, given $\bm{t}$, each seller's assignment maximizes her utility over her trades and the outside option, and $X_j\in\cD_j(\bm{t})$ for each buyer $j$.
It is \emph{seller-optimal} if it maximizes every seller's utility simultaneously among all competitive equilibria.
An allocation is an \emph{equilibrium allocation} if it forms a competitive equilibrium with some transfer vector, and a \emph{seller-optimal equilibrium allocation} if it forms a seller-optimal one.
Seller-optimal equilibria give each seller the same utility but need not share an allocation, and a mechanism is \emph{seller-optimal} if it selects a seller-optimal equilibrium at every report profile.

The analysis below builds on the existence and incentive results of Schlegel~\cite{Schlegel2022}, which require bounded compensating variations.
We therefore assume $\inf\{t\in\mathbb{R}\mathrel{:} U_i(x,t)>U_i(\varnothing)\}>-\infty$ for every $i\in I$ and $x\in E_i$, and sellers report such utility functions.
The buyers satisfy the corresponding condition since $\emptyset$ is feasible and each $\tilde{v}_j$ takes finitely many finite values.
These results also assume full substitutability and the laws of aggregate demand and supply.
Each buyer's quasilinear utility with a GS valuation satisfies these conditions, and so does each seller's utility function, since a seller supplies at most one trade.
Schlegel~\cite{Schlegel2022} states the incentive result for unit-demand terminal buyers and notes that it holds symmetrically for unit-supply terminal sellers, which is the case here.
His framework also allows utility functions to take the value $-\infty$, as the buyers' valuations do here.
A mechanism is \emph{seller WGSP} if no coalition of sellers can misreport so that every member is strictly better off.
We use these results in the following form.
\begin{theorem}[{Schlegel~\cite[Theorems~2 and~3 and Remarks~4 and~5]{Schlegel2022}}]\label[theorem]{thm:schlegel}
In the market with transfers $(I,J,E,(U_i)_{i\in I},(V_j)_{j\in J})$, a seller-optimal equilibrium exists.
Furthermore, every deterministic seller-optimal mechanism is seller WGSP.
\end{theorem}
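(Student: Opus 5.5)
Since the statement is imported from Schlegel~\cite{Schlegel2022}, I would not reprove it from scratch. Instead I would show that the market $(I,J,E,(U_i)_{i\in I},(V_j)_{j\in J})$ is an instance of his framework and then read off his conclusions. Schlegel works with agents on both sides who have general (not necessarily quasilinear) utilities over bundles of trades and transfers. I would match his primitives as follows: sellers are unit-supply agents with utility $U_i$ on $(E_i\times\mathbb{R})\cup\{\varnothing\}$, and buyers have utilities $V_j(X,\bm{t}_j)=\tilde v_j(X)-\sum_{x\in X}t_x$. I would then verify his standing assumptions one by one.

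\textbf{Continuity and monotonicity in transfers.} For sellers, $U_i(x,\cdot)$ is continuous and strictly increasing by assumption. For buyers, $V_j$ is affine in $\bm{t}_j$ and strictly decreasing on each demanded trade.

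\textbf{Bounded compensating variations.} For sellers, this is exactly the assumed finiteness of $\inf\{t: U_i(x,t)>U_i(\varnothing)\}$. For buyers, $\tilde v_j$ takes finitely many finite values, so the compensation needed to make any feasible bundle at least as good as $\emptyset$ is bounded.

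\textbf{Full substitutability and the laws of aggregate demand and supply.} For a seller, these are trivial because she supplies at most one trade: her choice is a single best option, and raising the transfer on some trades never makes her drop a trade whose transfer is unchanged unless she switches to a raised one. For a buyer with quasilinear GS valuation, demand-side full substitutability is the GS condition itself, as defined in \Cref{sec:gs-preferences}. The law of aggregate demand for quasilinear GS valuations is the standard fact that, when prices rise, some new demanded set loses at most as many trades as price-increased trades dropped. I would cite this via the exchange axiom~\eqref{eq:exc}, or equivalently via \Mnat-concavity~\cite{Murota2003}.

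\textbf{Values of $-\infty$.} Schlegel's framework allows utilities to take the value $-\infty$. Since $\tilde v_j(\emptyset)>-\infty$, demand is always nonempty and never contains an infeasible bundle, so this causes no trouble.

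With the hypotheses in place, existence of a seller-optimal equilibrium is Schlegel's Theorem~2. That theorem gives a lattice structure on equilibrium utility profiles, with a side-optimal element for each side. The incentive conclusion is Theorem~3 together with Remarks~4 and~5. Theorem~3 is stated for unit-demand agents on the terminal side receiving their optimal equilibrium. Remark~4/5 gives the mirror-image statement with roles reversed, which covers unit-supply sellers on the optimal side, as here. Determinism is needed because Schlegel's conclusion concerns each agent's utility in the selected seller-optimal equilibrium. All seller-optimal equilibria give every seller the same utility, so any deterministic selection yields the same utility profile. Hence a coalition that strictly gains under some selection would contradict his weak group strategy-proofness result.

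The main obstacle is making sure the symmetric version applies verbatim. Schlegel's incentive theorem is phrased for unit-demand buyers, and I would have to check that his proof uses only unit multiplicity on the manipulating side. It must not use some property of the opposite side that our GS buyers, who take several trades, might lack, such as unit supply. If the remark turned out not to cover this case cleanly, I would fall back on reproving the weak group version directly. The route would be a blocking-lemma argument in the style of Hatfield and Kojima~\cite{HatfieldKojima2009}. Given a coalition that strictly gains, the manipulated seller-optimal equilibrium, with transfers adjusted, would yield an equilibrium under the true reports. In that equilibrium all coalition members are better off, contradicting seller-optimality. This argument relies on the lattice and existence results already obtained.
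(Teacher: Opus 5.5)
Your proposal matches the paper, which also imports the result rather than reproving it. Like you, it checks Schlegel's hypotheses: bounded compensating variations (assumed for sellers, automatic for buyers because $\emptyset$ is feasible and $\tilde v_j$ takes finitely many finite values), full substitutability and the laws of aggregate demand and supply (GS for buyers, unit supply for sellers), and admissibility of $-\infty$ values. It then invokes the symmetric unit-supply-seller version of his incentive theorem from his remarks.

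There are two minor slips. The law of aggregate demand says the new demanded set \emph{adds} at most as many trades as it drops, not ``loses''. Your fallback sketch would not work as stated, since the manipulated equilibrium need not be an equilibrium under the true reports; but, as in the paper, no fallback is needed.
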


\subsection{A tiered auxiliary market}\label{sec:worker-tiers}
Here, we construct a market with transfers from the ordinal matching market $(I,J,(\succsim_i)_{i\in I},(\succsim_j)_{j\in J})$ of \Cref{sec:matching-model}.
This \emph{auxiliary market} is $(I,J,E,(U_i)_{i\in I},(V_j)_{j\in J})$ with $E=I\times J$, where the sellers are the workers, the buyers are the firms, each pair $(i,j)$ is a trade with $\iota(i,j)=i$, and an allocation is a matching.
For $X\subseteq E$, write $X(j)=\{i\in I\mathrel{:}(i,j)\in X\}$, hence $X_j=X\cap E_j=X(j)\times\{j\}$.
Before workers report, fix a GS representation $v_j$ of each firm preference and a positive integer weight $\alpha_j$, and set $v(X)=\sum_{j\in J}\alpha_jv_j(X(j))$ for $X\subseteq E$.
The function $v$ is GS because its summands have disjoint ground sets.
Its effective domain consists of the sets $X$ for which every $v_j(X(j))$ is finite.
\Cref{thm:auxiliary-guarantees} below holds for any such weights, and \Cref{sec:gda-analysis} chooses them so that maximizing $v$ reproduces the firms' serial dictatorship.

We construct the buyers' valuations to reproduce the selection in each round of generalized deferred acceptance. They rank matchings first by \(v\), and then lexicographically by matched workers in their index order.
The valuations do not encode the final selection of $\Psi(F)$, which only picks one of the retained matchings, and \Cref{thm:auxiliary-guarantees} below applies to every deterministic selection.
For the worker index order $I=\{i_1,\dots,i_n\}$, fix positive integer weights $\omega_i$ with $\omega_{i_q}-1>\sum_{r>q}\omega_{i_r}$ for every $q$, and set $\bar{\omega}=\sum_{i\in I}\omega_i$.
For $X\subseteq E$, write $\omega(X)=\sum_{x\in X}\omega_{\iota(x)}$, counting each trade separately.
Fix an integer $\kappa>\max\{m\bar{\omega},\,\bar{\omega}+n\}$.\footnote{For instance, $\omega_{i_q}=2^{n-q+1}$ encodes the order, since $\sum_{r>q}\omega_{i_r}=\omega_{i_q}-2$, and $\kappa=(m+1)\bar{\omega}+1$ suffices.}
For each $j\in J$ and $X\subseteq E_j$, define the valuation
\begin{align*}
 \tilde{v}_j(X)=\kappa\alpha_jv_j(X(j))+\omega(X).
\end{align*}
Each $\tilde{v}_j$ is GS, being a positive scaling of $v_j$ plus an additive function.
Hence, so is their sum $\tilde{v}(X)=\sum_{j\in J}\tilde{v}_j(X_j)=\kappa v(X)+\omega(X)$.
Since $v$ is integer-valued on its domain and $0\le\omega(X)\le m\bar{\omega}<\kappa$, maximizing $\tilde{v}$ first maximizes $v$ and uses $\omega$ only to break ties.
Set $\bar{v}
\coloneqq\kappa\max\left\{1,\max_{X\in\dom(v)}v(X)\right\}
+m\bar{\omega}$ 
as an upper bound on $\tilde{v}$.
These constants play separate roles.
The weights $\alpha_j$ order the firms inside $v$, the weights $\omega_i$ break the remaining ties among workers in their index order, and $\kappa$ makes one unit of $v$ outweigh all $\omega$ terms and transfers below one.
The constant $\lambda$ defined next makes a one-tier difference in a worker's ranking outweigh every transfer up to $\bar{v}$.

A worker's utility from a transfer depends on which firm pays it, and the higher she ranks that firm, the more a given transfer yields.
For each reported preference $\succsim_i$ and $j\in J\cup\{\varnothing\}$, define the rank $r_i(j)=\lvert\{k\in J\cup\{\varnothing\}\mathrel{:} j\succ_i k\}\rvert$, hence $j\succsim_i k$ if and only if $r_i(j)\ge r_i(k)$.
Write $r_i(x)=r_i(j)$ for $x=(i,j)$.
Setting $\lambda\coloneqq2\bar{v}+1$, define each worker $i$'s utility from a trade $x\in E_i$ by
\begin{align*}
 U_i(x,t_x)=t_x\lambda^{r_i(x)}.
\end{align*}
A firm in a higher tier therefore delivers her the same positive payoff with a transfer smaller by a factor of at least $\lambda$.
Set $U_i(\varnothing)=\beta_i$ for an outside payoff vector $\bm{\beta}>\bm{0}$.
The \emph{baseline} market takes $\bm{\beta}=\bm{\beta}^0$ with $\beta_i^0\coloneqq\omega_i\lambda^{r_i(\varnothing)}$.
In \Cref{sec:gda-analysis}, only $\bm{\beta}$ varies across the procedure's~rounds. 
For every $\bm{\beta}>\bm{0}$, the seller utilities are continuous, strictly increasing in transfers, and satisfy bounded compensating variations, hence the auxiliary market falls within the setting of \Cref{thm:schlegel}.

Since all representations, weights, and constants are fixed before workers report, the auxiliary market depends on the reports only through each worker's own trade utilities and baseline outside payoff.
We now state the main result of this section, deferring its proof to \Cref{sec:ordinal-guarantees}.
\begin{theorem}\label[theorem]{thm:auxiliary-guarantees}
Every equilibrium matching of the baseline market is Pareto-stable under the original worker and firm preferences.
Furthermore, every deterministic mechanism selecting a seller-optimal equilibrium matching of this market at each report profile is worker WGSP. 
\end{theorem}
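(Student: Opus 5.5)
The proof has two parts. First, every equilibrium matching of the baseline market is Pareto-stable under the original preferences. Second, worker WGSP follows by transferring seller WGSP from \Cref{thm:schlegel}.

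\textbf{Equilibrium properties.} Let $(X,\bm{t})$ be a competitive equilibrium of the baseline market, and let $\mu=X$ be the corresponding matching. I would first collect bounds on transfers.

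Since $X_j\in\cD_j(\bm{t})$ and $\emptyset$ is feasible for buyer $j$, we have $\sum_{x\in X_j}t_x\le\tilde v_j(X_j)\le\bar v$.

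Every matched worker $i$ weakly prefers her trade to $\varnothing$, so $t_x\lambda^{r_i(x)}\ge\beta^0_i>0$. Hence every transfer on a traded pair is positive and lies in $(0,\bar v]$.

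For any unmatched trade $y=(i,j)\notin X$ with $r_i(j)>r_i(\mu(i))$, seller optimality gives $t_y\lambda^{r_i(j)}\le U_i(\text{current})$. The current utility is either $t_{x}\lambda^{r_i(\mu(i))}\le\bar v\lambda^{r_i(\mu(i))}$ or $\beta^0_i=\omega_i\lambda^{r_i(\varnothing)}$. Since $\lambda=2\bar v+1>\bar v\ge\omega_i$, this forces $t_y<1$, in fact $t_y\le \bar v/\lambda<1/2$.

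\textbf{Individual rationality.} Workers satisfy it by the seller optimality condition. For firms, buyer $j$'s demand compares $X_j$ against subsets of $X_j$. Removing workers saves transfers but loses $\kappa\alpha_j$ per unit of $v_j$ dropped. Using $\kappa>\bar\omega+n$ together with the transfer bounds should give $v_j(\mu(j))\ge v_j(Y)$ for every $Y\subseteq\mu(j)$.

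\textbf{No blocking coalition.} Suppose $(j,T)$ blocks $\mu$. Every $i\in T\setminus\mu(j)$ strictly prefers $j$ to $\mu(i)$, so by the bound above her transfer on $(i,j)$ is below $1/2$. Buyer $j$'s gain from switching to $T\times\{j\}$ is then at least $\kappa\alpha_j$ minus $\omega$ differences (at most $\bar\omega$) minus the new transfers. The new transfers are the old ones on $T\cap\mu(j)$ plus at most $n/2$. Here I expect care is needed: dropped workers only reduce payments, so the gain exceeds $\kappa-\bar\omega-n>0$, contradicting demand maximality.

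\textbf{Pareto efficiency.} This is the first welfare theorem adapted to the tiered utilities. Suppose $\nu$ Pareto dominates $\mu$. Sum buyer utilities at the equilibrium prices: $\sum_j[\tilde v_j(\nu_j)-t(\nu_j)]\le\sum_j[\tilde v_j(X_j)-t(X_j)]$. Each worker weakly prefers $\nu(i)$ to $\mu(i)$.
\begin{itemize}
\item If the tiers are equal, seller optimality gives $t_{(i,\nu(i))}\le t_{(i,\mu(i))}$, or the analogous statement when both are $\varnothing$.
\item If $\nu(i)$ is in a strictly higher tier, the price of that trade is below $1/2$.
\end{itemize}
Then $\kappa[v(\nu)-v(\mu)]\le\omega(\nu)-\omega(\mu)+$ (small transfer terms), so $v(\nu)\le v(\mu)$. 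Since each firm is weakly better off, every $v_j$ is unchanged, so no firm strictly improves. Hence some worker strictly improves, and her trade has price below $1/2$.

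To reach a contradiction, I would use the weights $\omega$ and the baseline $\beta^0_i=\omega_i\lambda^{r_i(\varnothing)}$. The idea is that the $\omega$ bonus pays each matched worker at least roughly $\omega_i$ net, so the firm gains from recruiting such a worker. Balancing the $\omega$ terms against the transfers is the main obstacle. My plan is to show that in equilibrium each traded pair satisfies $t_x\ge\omega_{\iota(x)}\lambda^{r_i(\varnothing)-r_i(x)}$. Then, with $v$ tied between $\nu$ and $\mu$ and workers weakly better, buyer optimality should give strictly higher surplus under $\nu$, yielding the contradiction.

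\textbf{Incentives.} The auxiliary market depends on worker $i$'s report only through $U_i$. Under her true preference, her utility ordering over matchings coincides with the ordering of ranks $r_i$. Suppose a coalition $S$ misreports and each member strictly gains ordinally, moving to a strictly higher tier. Evaluated with the true $U_i$, her new allocation gives utility at least $\lambda^{r}\cdot t$ with $t>0$, where positivity of equilibrium transfers follows under the misreport's own individual rationality. The old one gives utility at most $\bar v\lambda^{r-1}$. Since $\lambda>\bar v\cdot\lambda^0$, the new utility is strictly larger as long as the new transfer is at least about $1$. I would secure this lower bound from the $\beta^0$ bound above, which holds by the reported $\beta$ applying $\omega_i\ge1$ and adjusting exponents. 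This gives a strict utility gain for every member of $S$ in the transfer market, contradicting seller WGSP from \Cref{thm:schlegel}.
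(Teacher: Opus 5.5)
Your weak-stability argument is sound and is essentially the paper's. The genuine gap is in the incentive step. To beat the bound $\bar v\lambda^{r-1}$ you use for member $i$'s old payoff, her transfer $t'$ on $(i,X'(i))$ at the seller-optimal equilibrium of the misreported market must exceed $\bar v/\lambda$. Seller individual rationality at the reported profile gives only $t'\ge\omega_i\lambda^{r'_i(\varnothing)-r'_i(X'(i))}$. Whenever the report ranks $X'(i)$ strictly above $\varnothing$, this is at most $\omega_i/\lambda<\bar v/\lambda$, and since the coalition chooses the reported ranks, adjusting exponents cannot help. The bound must come from seller optimality. With one worker, one firm with $v_j(\{i\})=0$, and a report placing $j$ one rank above $\varnothing$, every $t\in[\omega_i/\lambda,\omega_i]$ supports an equilibrium.

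The missing idea is \Cref{lem:payoff-lower}. At any equilibrium whose payoffs admit no coordinatewise increase with the same matching, in particular a seller-optimal one, every matched worker has $t_{x_i}\ge1$, that is, $p_i\ge\lambda^{r_i(x_i)}$. This is where the integrality of $v$, the scale $\kappa$, and the distinct weights $\omega_i$ are used. Suppose some matched transfers lie in $(0,1)$, and collect the matched workers whose adjusted transfers $t_{x_i}-\omega_i$ are congruent modulo $\kappa$ to one of them. Their transfers are nonintegral, so no demanded set deletes their trades. By \eqref{eq:exchange-value}, every demanded exchange stays inside this set and never moves the worker of the added trade to a higher tier. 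The g-matroid local optimality criterion (\Cref{lem:raising}) then lets you raise all their transfers by a small $\varepsilon$ while keeping $X$ demanded, contradicting seller optimality. With $t'\ge1$ you get $t'\lambda^{r_i(X'(i))}\ge\lambda^{r_i(X(i))+1}>p_i$. You should also say how a matching-selecting mechanism becomes a seller-optimal mechanism on the whole domain of \Cref{thm:schlegel}: pair each selected matching with the seller-optimal payoffs, extend arbitrarily, and restrict.

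Your Pareto step has the right idea, but as written the strict part does not close. Knowing that an improving worker's new trade costs less than $1/2$ does not beat her old transfer, which at an arbitrary equilibrium can itself be far below $1/2$. Replace the absolute bound by the relative one from seller optimality, $t_{(i,\nu(i))}\le\lambda^{r_i(\mu(i))-r_i(\nu(i))}t_{(i,\mu(i))}$. Also use $t_{(i,\nu(i))}\le\omega_i\lambda^{r_i(\varnothing)-r_i(\nu(i))}$ when $\mu(i)=\varnothing$, and $t_{(i,\mu(i))}\ge\omega_i$ when $\nu(i)=\varnothing$, since then $\mu(i)\sim_i\varnothing$. Each worker's term $\omega_i-t$ (taken as $0$ when she is unmatched) then weakly rises from $\mu$ to $\nu$, strictly for a strict improver. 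So the buyers' total utility at $\nu$ exceeds that at $\mu$ by $\kappa[v(\nu)-v(\mu)]$ plus these nonnegative increments. This single inequality, which is the paper's computation, handles improving firms and workers at once and makes your two-stage argument unnecessary.
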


Our construction is inspired by Domani\c{c}, Lam, and Plaxton~\cite{DLP2017}, who build a tiered-slope market within the generalized assignment game of Demange and Gale~\cite{DemangeGale1985}.
Their market is one-to-one, and either side may carry the tiered slopes.
In our market, a firm hires several workers and must therefore be a buyer, while each worker supplies at most one trade and is a seller.
The many-to-one structure thus removes the choice and leaves the tiered slopes on the worker side.

\subsection{Equilibrium payoff bounds}\label{sec:tier-bounds}

Here, we show that a worker's payoff at a seller-optimal equilibrium of the baseline market lies in $[\lambda^{r},\lambda^{r+1})$, where $r$ is the rank of the firm she is matched to.
Her payoff therefore identifies her tier, and \Cref{sec:ordinal-guarantees} restates the equilibrium conditions in terms of the original preferences.

We first express equilibria in terms of worker payoffs rather than a transfer for each trade.
Payoffs ignore transfers on unassigned trades, which affect no agent's utility, and can be compared coordinatewise across equilibria.
Since $U_i(x,t_x)=t_x\lambda^{r_i(x)}$, worker $i$ obtains payoff $p_i$ from her trade with firm $j$ exactly when $t_{ij}=\lambda^{-r_i(j)}p_i$.
For $\bm{p}\in\mathbb{R}_+^I$, we use these transfers $\bm{t}$ throughout and define the joint demand at $\bm{t}$ by
\begin{align*}
 \hat{\cD}(\bm{p})=\argmax\left\{\tilde{v}(X)-\sum_{(i,j)\in X}\lambda^{-r_i(j)}p_i\mathrel{:} X\subseteq E\right\}.
\end{align*}
Because the buyers' trade sets $E_j$ are disjoint, $X\in\hat{\cD}(\bm{p})$ if and only if $X_j\in\cD_j(\bm{t})$ for every buyer $j$.
At these transfers, worker $i$ obtains the same payoff $p_i$ from all her trades.
Each worker's assignment is therefore optimal if and only if $p_i\ge\beta_i=U_i(\varnothing)$, with equality for unmatched workers.

A matching $X$ supports an equilibrium payoff vector $\bm{p}$ if and only if
\begin{align}
 X\in\hat{\cD}(\bm{p}),\qquad
 \bm{p}\ge\bm{\beta},\qquad\text{and}\qquad
 p_i=\beta_i\ \text{for every $i$ with $X(i)=\varnothing$}.
 \label{eq:standard-equilibrium}
\end{align}
Indeed, in an equilibrium $(X,\bm{t}')$ with payoffs $\bm{p}$, worker optimality implies $\bm{t}'\le\bm{t}$ with equality on assigned trades, and raising $\bm{t}'$ to $\bm{t}$ keeps each buyer's assigned set demanded.
Conversely, $(X,\bm{t})$ is an equilibrium whenever $(X,\bm{p})$ satisfies \eqref{eq:standard-equilibrium}.
Henceforth, we write an equilibrium as $(X,\bm{p})$ and $x_i$ for the trade assigned to a matched worker $i$.

We begin with upper bounds, which hold at every equilibrium.
\begin{lemma}\label[lemma]{lem:payoff-upper}
Let $(X,\bm{p})$ be an equilibrium with outside payoffs $\bm{\beta}>\bm{0}$.
Then, $0<t_x\le\bar{v}$ for every trade $x$ of every $Y\in\hat{\cD}(\bm{p})$.
In particular, $0<p_i\le\bar{v}\lambda^{r_i(x_i)}$ for every matched worker $i$.
\end{lemma}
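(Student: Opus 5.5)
The plan is to obtain each inequality from an optimality comparison for the demanded set $Y$. Throughout, the transfers are $t_{ij}=\lambda^{-r_i(j)}p_i$, as fixed above.

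Positivity comes first. Since $(X,\bm{p})$ is an equilibrium, \eqref{eq:standard-equilibrium} gives $\bm{p}\ge\bm{\beta}>\bm{0}$, and $\lambda^{-r_i(j)}>0$. Hence $t_x>0$ for every trade $x\in E$, not just those in $Y$.

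For the upper bound, fix $Y\in\hat{\cD}(\bm{p})$ and a trade $x=(i,j)\in Y$, and compare $Y$ with $Y\setminus\{x\}$. Because all transfers are positive and $Y\in\dom(\tilde v)$, the objective at $Y$ is at least its value at $\emptyset$, namely $\tilde v(\emptyset)=\kappa v(\emptyset)=0$. This gives $\sum_{y\in Y}t_y\le\tilde v(Y)\le\bar v$, so $t_x\le\sum_{y\in Y}t_y\le\bar v$, using positivity of the other terms.

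I check the two facts this step relies on. First, $\tilde v(\emptyset)=0$ holds since $v_j(\emptyset)=0$ and $\omega(\emptyset)=0$. Second, $\tilde v(Y)\le\bar v$ holds: $v(Y)\le\max_{X\in\dom(v)}v(X)\le\max\{1,\cdot\}$, the weights satisfy $\alpha_j>0$, and $\omega(Y)\le m\bar\omega$. Comparing with $\emptyset$ avoids needing $Y\setminus\{x\}$ to lie in the domain, which is not guaranteed because the effective domain is only a g-matroid. This is the only subtle point, and comparing against $\emptyset$ handles it.

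For the payoff bound, a matched worker $i$ has trade $x_i\in X$, and $X\in\hat{\cD}(\bm{p})$. Applying the first part to $Y=X$ gives $t_{x_i}\le\bar v$. Then $p_i=\lambda^{r_i(x_i)}t_{x_i}\le\bar v\lambda^{r_i(x_i)}$, and $p_i>0$ follows from $p_i\ge\beta_i>0$.
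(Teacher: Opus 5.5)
Your proof is correct and follows the paper's argument: positivity from $\bm{p}\ge\bm{\beta}>\bm{0}$, then comparing the demanded set $Y$ with $\emptyset$ to get $\sum_{y\in Y}t_y\le\tilde{v}(Y)\le\bar{v}$, and finally specializing to $Y=X$. The one flaw is in the exposition: you say you will compare $Y$ with $Y\setminus\{x\}$ but never use that comparison, so that sentence should simply say you compare $Y$ with $\emptyset$.
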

\begin{proof}
Each $t_x$ is positive because $\bm{p}\ge\bm{\beta}>\bm{0}$.
For $Y\in\hat{\cD}(\bm{p})$, comparing $Y$ with $\emptyset$ in the definition of $\hat{\cD}(\bm{p})$ shows $\sum_{x\in Y}t_x\le\tilde{v}(Y)\le\bar{v}$, and each term is at most this sum.
Taking $Y=X$ yields the second bound, since $p_i=t_{x_i}\lambda^{r_i(x_i)}$.
\end{proof}

Since $\bar{v}<\lambda$, we have $p_i<\lambda^{r_i(x_i)+1}$ for every matched worker $i$ by \Cref{lem:payoff-upper}.
By contrast, the lower bound $p_i\ge\lambda^{r_i(x_i)}$ can fail at some equilibria, because a matched worker may be paid only her outside payoff $\beta_i$.
We show that it holds whenever no equilibrium with the same matching has coordinatewise larger payoffs, in particular at the seller-optimal equilibria.
Otherwise, the payoffs of a suitable set of workers can be raised, each within her assigned tier, without changing the equilibrium matching.
For an equilibrium $(X,\bm{p})$, a set $R$ of matched workers, and $\varepsilon>0$, raising the transfer on $x_i$ by $\varepsilon$ for every $i\in R$ produces the payoff vector $\bm{p}^{\varepsilon}$ with $p_i^{\varepsilon}=p_i+\varepsilon\lambda^{r_i(x_i)}$ for $i\in R$ and $p_i^{\varepsilon}=p_i$ otherwise.
The next lemma provides conditions under which $X$ remains an equilibrium matching at $\bm{p}^{\varepsilon}$.

\begin{lemma}\label[lemma]{lem:raising}
Let $(X,\bm{p})$ be an equilibrium with outside payoffs $\bm{\beta}>\bm{0}$, and let $R$ be a nonempty set of matched workers such that
\begin{enumerate}
 \item[(i)] $X-x_i\notin\hat{\cD}(\bm{p})$ for every $i\in R$, and
 \item[(ii)] for every $i\in R$ and $y\in E\setminus X$ with $X-x_i+y\in\hat{\cD}(\bm{p})$, the worker $k=\iota(y)$ belongs to $R$ and satisfies $r_k(y)\le r_k(x_k)$.
\end{enumerate}
Then, $(X,\bm{p}^{\varepsilon})$ is an equilibrium for every sufficiently small $\varepsilon>0$.
In particular, $(X,\bm{p})$ is not seller-optimal.
\end{lemma}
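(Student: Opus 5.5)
We verify the three conditions of \eqref{eq:standard-equilibrium} for $(X,\bm{p}^{\varepsilon})$. The conditions $\bm{p}^{\varepsilon}\ge\bm{\beta}$ and $p^{\varepsilon}_i=\beta_i$ for unmatched $i$ are immediate, since payoffs only rise and only on matched workers in $R$. The substance is showing $X\in\hat{\cD}(\bm{p}^{\varepsilon})$ for small $\varepsilon$.

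Write $g_{\bm{p}}(Y)=\tilde{v}(Y)-\sum_{(i,j)\in Y}\lambda^{-r_i(j)}p_i$. The family $\hat{\cD}(\bm{p})$ is the maximizer set of a GS function minus an additive function, so by the local optimality criterion \cite[Theorem~6.26]{Murota2003} (in its \Mnat-concave form) it suffices to show that no single addition, deletion, or exchange improves $X$ at $\bm{p}^{\varepsilon}$. Since $g_{\bm{p}^\varepsilon}$ is again GS, we check only neighbors $Y$ of the form $X+y$, $X-x$, and $X-x+y$.

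Passing from $\bm{p}$ to $\bm{p}^{\varepsilon}$ changes $g(Y)$ by exactly $-\varepsilon\sum_{(i,j)\in Y,\,i\in R}\lambda^{r_i(x_i)-r_i(j)}$. We split the neighbors into two cases.

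\textbf{Neighbors not in $\hat{\cD}(\bm{p})$.} There are finitely many such $Y$, and each has $g_{\bm{p}}(Y)<g_{\bm{p}}(X)$ with a strict gap. Perturbations of order $\varepsilon$ preserve these gaps for small $\varepsilon$.

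\textbf{Neighbors in $\hat{\cD}(\bm{p})$.} For these we need the $\varepsilon$-coefficient of $Y$ to be at least that of $X$, namely $\sum_{i\in R}1=|R|$; that is,
\[
c(Y):=\sum_{(i,j)\in Y,\ i\in R}\lambda^{r_i(x_i)-r_i(j)}\ \ge\ |R|.
\]
We treat each neighbor type in turn.
\begin{itemize}
\item[(a)] \emph{Additions $X+y$.} Any added $y$ belongs to an unmatched worker, since $X+y$ must remain a matching. That worker is not in $R$, so $c(X+y)=|R|$.
\item[(b)] \emph{Deletions $X-x$.} For $x=x_i$ with $i\in R$, condition (i) excludes $X-x_i$ from $\hat{\cD}(\bm{p})$. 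For $x=x_i$ with $i\notin R$, we get $c=|R|$.
\item[(c)] \emph{Exchanges $X-x+y$ with $x=x_i$.}
\begin{itemize}
\item[(c1)] If $\iota(y)=k\ne i$, then $k$ must be unmatched or $X-x+y$ is not a matching. But if $i\in R$, condition (ii) forces $k\in R$, so $k$ is matched, a contradiction. Hence $i\notin R$, and $k$ is unmatched and not in $R$, so $c=|R|$.
\item[(c2)] If $\iota(y)=i$, the exchange is a reassignment of $i$. If $i\in R$, condition (ii) gives $r_i(y)\le r_i(x_i)$, so the term for $i$ is $\lambda^{r_i(x_i)-r_i(y)}\ge1$ and $c\ge|R|$. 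If $i\notin R$, then $c=|R|$.
\end{itemize}
\end{itemize}
In every case $c(Y)\ge|R|$, so $X$ stays optimal at $\bm{p}^{\varepsilon}$.

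For the ``in particular'' claim, $\bm{p}^{\varepsilon}$ is coordinatewise at least $\bm{p}$ and strictly larger on $R\ne\emptyset$. Seller-optimal equilibria maximize every seller's utility simultaneously, so $(X,\bm{p})$ cannot be seller-optimal.

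The main obstacle is justifying the reduction to local neighbors. The local optimality theorem applies to the \Mnat-concave function $g_{\bm{p}^{\varepsilon}}$ on its domain, which is a g-matroid. We must make sure the cited criterion covers maximizing a GS function minus an additive function, rather than only minimizing an additive function over a g-matroid. This follows from the standard \Mnat-concave local optimality theorem \cite{Murota2003}.
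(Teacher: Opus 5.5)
There is a genuine gap in your case analysis, although it is repairable with the lemma's own hypotheses. Your overall route is sound. You check local optimality of $\tilde v$ minus transfers at $\bm p^{\varepsilon}$ directly and split the neighbors by membership in $\hat{\cD}(\bm p)$. The paper instead reduces to minimizing the additive increment $\delta$ over the g-matroid $\hat{\cD}(\bm p)$. Either packaging works.

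The gap is that you assume every neighbor of $X$ is a matching (``since $X+y$ must remain a matching'', ``$k$ must be unmatched or $X-x+y$ is not a matching''). But $\hat{\cD}(\bm p)$ is the $\argmax$ over all $Y\subseteq E$. Each buyer $j$ optimizes over all subsets of $E_j$, including trades with workers whom other firms hire in $X$. So $X\in\hat{\cD}(\bm p^{\varepsilon})$ requires comparing $X$ with non-matching sets, and the local optimality criterion you invoke is valid only if all neighbors in $2^E$ are checked.

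In particular, in case (c1) you derive a ``contradiction'' for $i\in R$. In fact the exchange $X-x_i+y$, with $y=(k,j)$ a trade of a different matched worker $k$, can lie in $\hat{\cD}(\bm p)$. This is exactly the situation hypothesis~(ii) is written for, which is why it refers to $x_k$ rather than $x_i$. It is also the main case that arises in the proof of \Cref{lem:payoff-lower}. As written, your argument never uses the rank inequality of (ii) when $k\ne i$.

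The repair is short.
\begin{itemize}
\item For an addition $X+y$ with $\iota(y)$ matched, $c(X+y)\ge c(X)=|R|$, because every term of $c$ is nonnegative.
\item For an exchange $X-x_i+y\in\hat{\cD}(\bm p)$ with $k=\iota(y)\ne i$ and $i\notin R$, the removed term is zero, so again $c\ge|R|$.
\item For such an exchange with $i\in R$, condition (ii) gives $k\in R$ and $r_k(y)\le r_k(x_k)$. Since $x_k$ stays in the set, $c(X-x_i+y)=|R|-1+\lambda^{r_k(x_k)-r_k(y)}\ge|R|$.
\item If $i\in R$ and $k$ is unmatched, (ii) would force $k\in R$, which is impossible. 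Such exchanges therefore lie outside $\hat{\cD}(\bm p)$ and fall under your strict-gap case.
\end{itemize}

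With these cases included, your analysis becomes the paper's uniform argument. Additions only add nonnegative terms, and removing $x_i$ with $i\notin R$ removes a zero term. For $i\in R$, deletion is excluded by (i), and any exchange partner $y$, whoever its seller, contributes at least the unit that $x_i$ contributed.
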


\begin{proof}
Since $\bm{p}^{\varepsilon}\ge\bm{p}\ge\bm{\beta}$ and $\bm{p}^{\varepsilon}$ agrees with $\bm{p}$ on unmatched workers, \eqref{eq:standard-equilibrium} reduces the claim to $X\in\hat{\cD}(\bm{p}^{\varepsilon})$.
The transfer on a trade $y=(k,j)$ rises by $\delta(y)=\varepsilon\lambda^{r_k(x_k)-r_k(y)}$ if $k\in R$ and by $\delta(y)=0$ otherwise, and we write $\delta(Y)=\sum_{y\in Y}\delta(y)$.
For small $\varepsilon$, every set of finite value outside $\hat{\cD}(\bm{p})$ remains strictly worse than $X$ for the buyers, while each $Y\in\hat{\cD}(\bm{p})$ loses $\delta(Y)$ from a common total utility.
Hence, it suffices that $X$ minimizes $\delta$ over $\hat{\cD}(\bm{p})$.
Since $\delta$ is additive and $\hat{\cD}(\bm{p})$ is a g-matroid, it suffices to check single-element additions, deletions, and exchanges at $X$ within $\hat{\cD}(\bm{p})$ (\Cref{sec:gs-preferences}).

An addition cannot decrease $\delta$ because $\delta\ge0$.
A deletion or an exchange at $x_i$ with $i\notin R$ cannot either, because $\delta(x_i)=0$.
For $i\in R$, condition~(i) excludes deleting $x_i$, and for an exchange $X-x_i+y$, we have $\delta(y)=\varepsilon\lambda^{r_k(x_k)-r_k(y)}\ge\varepsilon=\delta(x_i)$ by condition~(ii).
Finally, $\bm{p}^{\varepsilon}$ exceeds $\bm{p}$ on the nonempty set $R$, and $(X,\bm{p})$ is therefore not seller-optimal.
\end{proof}

The next lemma also covers the outside payoffs that arise in \Cref{sec:gda-analysis}.
Its condition on $\bm{p}$ is weaker than seller optimality, which \Cref{sec:rank-lattice} uses.
\begin{lemma}\label[lemma]{lem:payoff-lower}
Let $\bm{\beta}$ be an outside payoff vector in which each $\beta_i$ equals $\beta_i^0$ or $\tfrac12\lambda^{\rho_i}$ for some integer $\rho_i>r_i(\varnothing)$, and let $(X,\bm{p})$ be an equilibrium with outside payoffs $\bm{\beta}$ such that no equilibrium $(X,\bm{q})$ satisfies $\bm{q}\ge\bm{p}$ and $\bm{q}\ne\bm{p}$.
Then, $p_i\ge\lambda^{r_i(x_i)}$ for every matched worker $i$.
\end{lemma}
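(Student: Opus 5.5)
We argue by contradiction using \Cref{lem:raising}. Suppose some matched worker has $p_i<\lambda^{r_i(x_i)}$. Let $R_0$ be the set of such workers; this set is nonempty. We will build a nonempty set $R\subseteq R_0$ satisfying conditions (i) and (ii) of \Cref{lem:raising}. That lemma then gives an equilibrium $(X,\bm{p}^{\varepsilon})$ with $\bm{p}^{\varepsilon}\ge\bm{p}$ and $\bm{p}^{\varepsilon}\ne\bm{p}$, which contradicts the hypothesis.

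\textbf{Step 1: condition (i) holds for every $i\in R_0$.} Compare the buyers' objective at $X$ and at $X-x_i$. The difference is $\tilde v(X)-\tilde v(X-x_i)-t_{x_i}$, where $t_{x_i}=\lambda^{-r_i(x_i)}p_i<1$. If $v(X)>v(X-x_i)$, integrality gives $\kappa(v(X)-v(X-x_i))\ge\kappa$, while the $\omega$ change is $\omega_i>0$, so $X$ is strictly better. If $v(X)=v(X-x_i)$, the gain is exactly $\omega_i$. So we need $t_{x_i}<\omega_i$.

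For this we first show $p_i\ge\beta_i$. Note that $\beta_i^0=\omega_i\lambda^{r_i(\varnothing)}$, and since $x_i$ is acceptable, $r_i(x_i)\ge r_i(\varnothing)$, so the baseline case gives $t\ge\omega_i\lambda^{r_i(\varnothing)-r_i(x_i)}$. That is a lower bound, whereas we need an upper bound, so we argue directly instead. Since $p_i<\lambda^{r_i(x_i)}$, we have $t_{x_i}<1\le\omega_i$ as long as $\omega_i\ge1$, which holds. In either case $X$ is strictly better than $X-x_i$, so (i) holds. If $v(X)<v(X-x_i)$, then $X$ could not be demanded, since the gap $\kappa$ exceeds $\omega_i+1$. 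This case therefore cannot occur.

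\textbf{Step 2: closing under (ii).} Take $i\in R_0$ and $y=(k,j)\notin X$ with $X-x_i+y\in\hat{\cD}(\bm{p})$.

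First, $k$ is matched. If $k$ were unmatched, then $p_k=\beta_k$. The trade $y$ then costs $t_y=\beta_k\lambda^{-r_k(j)}$. We need to rule out $y$ being demanded in exchange for $x_i$. Equal demand means $\tilde v(X)-t_{x_i}=\tilde v(X-x_i+y)-t_y$. The $v$ parts must agree up to the $\kappa$ scale, since all transfers and $\omega$'s are below $\kappa$. Then $\omega_i-t_{x_i}=\omega_k-t_y$. This is where the outside-payoff form enters, and it is the main obstacle. For $\beta_k=\beta_k^0$ and $j$ acceptable, $t_y\le\omega_k$; for $j$ unacceptable, $t_y\ge\lambda\omega_k$, which is too large. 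For $\beta_k=\tfrac12\lambda^{\rho_k}$, tiers $r_k(j)<\rho_k$ give $t_y\ge\lambda/2>\bar v$, which is impossible by \Cref{lem:payoff-upper}. Tiers $r_k(j)\ge\rho_k$ give $t_y\le\tfrac12$.

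I would first try to show that these cases either contradict \Cref{lem:payoff-upper} or force a strict comparison of the $\omega$ terms using the separation $\omega_{i_q}-1>\sum_{r>q}\omega_{i_r}$. The idea is that the equation $\omega_i-\omega_k=t_{x_i}-t_y\in(-1,1)$ forces $i=k$, which is impossible since $k$ is unmatched. This separation argument is exactly why the $\omega$ gaps exceed $1$. With $i\ne k$, $|\omega_i-\omega_k|\ge2$, a contradiction. So $k$ is matched.

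Next, $k\ne i$ would again force $|\omega_i-\omega_k|<1$ by the same equation, so $k=i$. We have $y\ne x_i$, and now $t_y=\lambda^{-r_i(y)}p_i$ with $t_y=t_{x_i}$. Since $\lambda^{-r}$ is injective, this means $r_i(y)=r_i(x_i)$. Hence $k=i\in R_0$ and $r_k(y)\le r_k(x_k)$.

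Thus $R=R_0$ satisfies (ii), and together with Step 1, \Cref{lem:raising} applies and yields the desired contradiction. The delicate point, which needs careful checking, is the $\omega$-separation equation when $v$ values differ across firms. Exchanging one trade changes only the $v_j$ summands of the involved firms, but still by integer multiples of $\kappa$. Since $\kappa$ exceeds $\bar\omega+n$, the totals can only match when those integer changes cancel, which is the case analyzed above.
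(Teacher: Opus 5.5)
Your plan of contradicting maximality through \Cref{lem:raising} is the paper's plan, and your Step~1 comparison does give condition~(i) for workers with $t_{x_i}<1$. The gap is in Step~2, at the claim that $v(X)=v(X-x_i+y)$ because ``all transfers and $\omega$'s are below $\kappa$.'' That claim is false. \Cref{lem:payoff-upper} bounds transfers only by $\bar v$, and $\bar v\ge\kappa+m\bar\omega>\kappa$. Equal buyer utility gives
\[
t_y=\kappa z'+\omega_k-\omega_i+t_{x_i},\qquad z'=v(X-x_i+y)-v(X),
\]
and $z'\ge1$ is possible whenever $t_y$ is large. So nothing forces $t_{x_i}-t_y\in(-1,1)$ or $k=i$, and $R_0$ itself can violate condition~(ii).

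Here is an instance. Take quota-one firms with $v_A(\{i\})=1$, $v_A(\{k\})=2$, $v_B(\{k\})=5$, all other nonempty sets infeasible, and $\alpha_A=\alpha_B=1$. Let $A\succ_i\varnothing\succ_iB$ and $A\sim_kB\succ_k\varnothing$. Let $X=\{(i,A),(k,B)\}$ with $t_{x_i}=\tfrac12$ and $t_{x_k}=\kappa+\omega_k-\omega_i+\tfrac12$. This is a baseline equilibrium with $R_0=\{i\}$. Firm $A$ is indifferent between $\{(i,A)\}$ and $\{(k,A)\}$, so $X-x_i+(k,A)$ is demanded although $k\notin R_0$. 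Raising $i$'s payoff alone makes $A$ strictly prefer $k$, so $X$ leaves the demand. The set you need is a superset of $R_0$, not a subset.

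The missing idea is this closure. The paper takes $R$ to be the matched workers whose adjusted transfer $t_{x_i}-\omega_i$ is congruent modulo $\kappa$ to $t_{x_s}-\omega_s$ for some $s\in R_0$.

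\begin{itemize}
\item Since $v$ is integer-valued, every demanded exchange $X-x_i+y$ keeps $t_y$ in the same residue class.
\item Every member of $R$ has a nonintegral transfer. This gives condition~(i) at once, including for members with $t_{x_i}\ge1$, where your Step~1 bound $t_{x_i}<\omega_i$ is unavailable.
\end{itemize}

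Closure under condition~(ii) then follows by comparing $r_k(y)$ with $r_k(x_k)$ for a matched $k$.

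\begin{itemize}
\item If $r_k(y)=r_k(x_k)$, then $t_{x_k}=t_y$ lies in the class, so $k\in R$.
\item If $r_k(y)>r_k(x_k)$, then $t_y<1$. This forces $k=s$ and $t_y=t_{x_s}=t_{x_k}$, contradicting $t_y<t_{x_k}$.
\item If $r_k(y)<r_k(x_k)$, then $t_{x_k}<1$, so $k\in R_0$.
\end{itemize}

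Your unmatched-$k$ case also rested on the false equality of the $v$ parts. It is repaired the same way, since $t_y$ is then either the integer $\omega_k$ or lies in $(0,1)$. The first contradicts nonintegrality, and the second forces $k=s$, which is matched.
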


\begin{proof}
Since $p_i=t_{x_i}\lambda^{r_i(x_i)}$, it suffices to show that $t_{x_i}\ge1$ for every matched worker $i$.
Suppose, for a contradiction, that the set $S$ of matched workers $s$ with $t_{x_s}<1$ is nonempty.
We construct a nonempty set $R$ satisfying conditions~(i) and~(ii) of \Cref{lem:raising}, which then contradicts the assumption on $\bm{p}$.
By \eqref{eq:exchange-value} below, an exchange that keeps a demanded set preserves the adjusted transfer $t_x-\omega_{\iota(x)}$ modulo $\kappa$.
The set $R$ collects the matched workers whose adjusted transfers are congruent modulo $\kappa$ to that of a worker in $S$.

Let $R$ consist of the matched workers $i$ such that
\begin{align}
 t_{x_i}=\kappa z+\omega_i-\omega_s+t_{x_s}
 \quad\text{for some }s\in S\text{ and }z\in\mathbb{Z}.
 \label{eq:transfer-set}
\end{align}
Taking $i=s$ and $z=0$ shows $S\subseteq R$, hence $R\ne\varnothing$.
For $i\in R$, the number $\kappa z+\omega_i-\omega_s$ is an integer and $0<t_{x_s}<1$, which makes $t_{x_i}$ nonintegral.
Whenever $X-x_i+y\in\hat{\cD}(\bm{p})$, the two demanded sets attain the same total buyer utility, hence $t_{x_i}-t_y=\tilde{v}(X)-\tilde{v}(X-x_i+y)$.
By $\tilde{v}=\kappa v+\omega$, this reads
\begin{align}
 (t_{x_i}-\omega_i)-(t_y-\omega_{\iota(y)})=\kappa\bigl[v(X)-v(X-x_i+y)\bigr],
 \label{eq:exchange-value}
\end{align}
and the right side is an integer multiple of $\kappa$ because $v$ is integer-valued.

\medskip\noindent\textbf{Condition~(i).}
If $X-x_i\in\hat{\cD}(\bm{p})$ for some $i\in R$, then $t_{x_i}=\tilde{v}(X)-\tilde{v}(X-x_i)$ would be an integer.

\medskip\noindent\textbf{Condition~(ii).}
Let $X-x_i+y\in\hat{\cD}(\bm{p})$ with $i\in R$ and $y\in E\setminus X$, and write $k=\iota(y)$.
We must show $k\in R$ and $r_k(y)\le r_k(x_k)$.
With $s\in S$ as in \eqref{eq:transfer-set} for this $i$, that identity and \eqref{eq:exchange-value} give
\begin{align}
 t_y=\kappa z'+\omega_k-\omega_s+t_{x_s}
 \quad\text{for some }z'\in\mathbb{Z}.
 \label{eq:transfer-residue}
\end{align}
Since $X-x_i+y\in\hat{\cD}(\bm{p})$, we have $0<t_y\le\bar{v}$ by \Cref{lem:payoff-upper}.
If $t_y<1$, then the integer $\kappa z'+\omega_k-\omega_s=t_y-t_{x_s}$ lies in $(-1,1)$ because $0<t_{x_s}<1$, and hence vanishes.
Since $|\omega_k-\omega_s|<\kappa$, this forces $z'=0$ and $\omega_k=\omega_s$, and thus $k=s$ by the distinctness of the weights and $t_y=t_{x_s}$.

We next show that $k$ is matched.
Otherwise, $p_k=\beta_k$ and $t_y=\lambda^{-r_k(y)}\beta_k$, where $t_y\le\bar{v}<\lambda/2$.
If $\beta_k=\omega_k\lambda^{r_k(\varnothing)}$, then $t_y=\omega_k\lambda^{r_k(\varnothing)-r_k(y)}$ forces $r_k(y)\ge r_k(\varnothing)$, and $t_y$ equals $\omega_k$ when $r_k(y)=r_k(\varnothing)$ and is at most $\omega_k/\lambda<1$ otherwise.
If $\beta_k=\tfrac12\lambda^{\rho_k}$, then $t_y=\tfrac12\lambda^{\rho_k-r_k(y)}$ forces $\rho_k\le r_k(y)$, giving $t_y\le\tfrac12$.
Thus, $t_y=\omega_k$ or $0<t_y<1$.
The former makes $t_y$ an integer, contrary to \eqref{eq:transfer-residue}, and the latter implies $k=s$, whereas $s$ is matched.

Since $k$ is matched, $t_{x_k}\le\bar{v}$ by \Cref{lem:payoff-upper}, and $t_y=t_{x_k}\lambda^{r_k(x_k)-r_k(y)}$.
If $r_k(y)=r_k(x_k)$, then $t_{x_k}=t_y$, and \eqref{eq:transfer-residue} gives $k\in R$.
If $r_k(y)>r_k(x_k)$, then $t_y\le t_{x_k}/\lambda\le\bar{v}/\lambda<1$, hence $k=s$ and $t_y=t_{x_s}=t_{x_k}$, contradicting $t_y<t_{x_k}$.
In the remaining case, $r_k(y)<r_k(x_k)$, and $t_{x_k}\le t_y/\lambda\le\bar{v}/\lambda<1$ implies $k\in S\subseteq R$.
\end{proof}

We summarize these bounds for the baseline market.
\begin{lemma}\label[lemma]{lem:rank-support}
Every baseline equilibrium $(X,\bm{p})$ satisfies
\begin{align}
 p_i<\lambda^{r_i(X(i))+1}\quad\text{for every }i\in I.
 \label{eq:baseline-payoff-upper}
\end{align}
If $(X,\bm{p})$ is seller-optimal, then
\begin{align}
 \lambda^{r_i(X(i))}\le p_i<\lambda^{r_i(X(i))+1}\quad\text{for every }i\in I.
 \label{eq:tier-payoffs}
\end{align}
\end{lemma}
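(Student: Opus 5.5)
The plan is to treat matched and unmatched workers separately and to read off each bound from the lemmas already established.

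\textbf{Upper bound \eqref{eq:baseline-payoff-upper}.} For a matched worker $i$, \Cref{lem:payoff-upper} gives $p_i\le\bar{v}\lambda^{r_i(x_i)}$. Since $\lambda=2\bar{v}+1>\bar{v}$, this is strictly less than $\lambda^{r_i(X(i))+1}$.

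For an unmatched worker, \eqref{eq:standard-equilibrium} forces $p_i=\beta_i^0=\omega_i\lambda^{r_i(\varnothing)}$. We then need $\omega_i<\lambda$. The weights satisfy $\omega_i\le\bar{\omega}<\kappa\le\bar{v}<\lambda$, using $\bar{v}\ge\kappa\cdot 1$ by its definition. Hence $p_i<\lambda^{r_i(\varnothing)+1}$, which is the claim because $X(i)=\varnothing$.

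\textbf{Lower bound in \eqref{eq:tier-payoffs}.} Suppose $(X,\bm{p})$ is seller-optimal. Then no equilibrium $(X,\bm{q})$ can satisfy $\bm{q}\ge\bm{p}$ with $\bm{q}\ne\bm{p}$: such a $\bm{q}$ would dominate the seller-optimal payoff vector, which already maximizes every seller's utility among all equilibria. So the hypothesis of \Cref{lem:payoff-lower} holds with $\bm{\beta}=\bm{\beta}^0$, which is one of the admissible forms. That lemma yields $p_i\ge\lambda^{r_i(x_i)}$ for every matched $i$.

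For an unmatched $i$, we have $p_i=\omega_i\lambda^{r_i(\varnothing)}\ge\lambda^{r_i(\varnothing)}$, since $\omega_i$ is a positive integer. The upper half of \eqref{eq:tier-payoffs} is just \eqref{eq:baseline-payoff-upper}.

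\textbf{Main obstacle.} The only delicate point is confirming the constant chain $\omega_i<\lambda$ and applying the definition of seller-optimality correctly. Both steps are routine.
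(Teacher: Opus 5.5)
Your proposal is correct and follows the paper's proof essentially step for step. For matched workers you use \Cref{lem:payoff-upper} with $\bar{v}<\lambda$, for unmatched workers $p_i=\beta_i^0=\omega_i\lambda^{r_i(\varnothing)}$ with $1\le\omega_i<\lambda$, and for the matched lower bound \Cref{lem:payoff-lower}, whose maximality hypothesis follows from seller optimality; your explicit chain $\omega_i\le\bar{\omega}<\kappa\le\bar{v}<\lambda$ only spells out what the paper asserts in one phrase.
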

\begin{proof}
An unmatched worker has $p_i=\beta_i^0=\omega_i\lambda^{r_i(\varnothing)}<\lambda^{r_i(\varnothing)+1}$ because $\omega_i<\lambda$, and a matched worker has $p_i\le\bar{v}\lambda^{r_i(X(i))}<\lambda^{r_i(X(i))+1}$ by \Cref{lem:payoff-upper}, which gives~\eqref{eq:baseline-payoff-upper}.
The lower bound in~\eqref{eq:tier-payoffs} holds for unmatched workers because $\omega_i>1$, and for matched workers whenever no equilibrium $(X,\bm{q})$ satisfies $\bm{q}\ge\bm{p}$ and $\bm{q}\ne\bm{p}$, by \Cref{lem:payoff-lower}.
Seller-optimal payoffs satisfy this condition.
\end{proof}

\subsection{Proof of Theorem~\ref{thm:auxiliary-guarantees}}\label{sec:ordinal-guarantees}

We now translate the equilibrium conditions into the original preferences.
The payoff bounds of \Cref{lem:rank-support} identify each worker's tier, and the factor $\kappa$ makes differences in $v$ dominate the transfers.

\begin{proof}[Proof of \Cref{thm:auxiliary-guarantees}]
Fix any baseline equilibrium $(X,\bm{p})$, with transfers $t_{ij}=\lambda^{-r_i(j)}p_i$.
We show that $X$ is individually rational, weakly stable, and Pareto efficient, and then prove worker WGSP.

\medskip\noindent\textbf{Weak stability.}
If $X(i)\prec_i\varnothing$, then we have $p_i<\lambda^{r_i(X(i))+1}\le\lambda^{r_i(\varnothing)}\le\beta^0_i$ by \eqref{eq:baseline-payoff-upper}, contradicting $p_i\ge\beta^0_i$.
Thus, every worker finds her assignment acceptable.

We next rule out blocking coalitions.
As these include a firm's strict improvement by dropping workers, firm individual rationality follows as well.
Suppose a firm $j$ and a set $T\subseteq I$ block $X$.
Since $T\succ_jX(j)$ and $X(j)$ has finite value, we have $v_j(T)-v_j(X(j))\ge1$.
Every $i\in T\setminus X(j)$ has $j\succ_iX(i)$, and we then have $t_{ij}=\lambda^{-r_i(j)}p_i<\lambda^{r_i(X(i))+1-r_i(j)}\le1$ by \eqref{eq:baseline-payoff-upper}.
Buyer $j$ compares all subsets of $E_j$, including trades with workers assigned to other firms in $X$.
Therefore,
\begin{align*}
 \MoveEqLeft V_j(T\times\{j\},\bm{t}_j)-V_j(X_j,\bm{t}_j)\\
 &=\kappa\alpha_j[v_j(T)-v_j(X(j))]+\omega(T\times\{j\})-\omega(X_j)-\sum_{i\in T\setminus X(j)}t_{ij}+\sum_{i\in X(j)\setminus T}t_{ij}\\
 &\ge\kappa-\bar{\omega}-\sum_{i\in T\setminus X(j)}t_{ij}
 >\kappa-\bar{\omega}-n>0,
\end{align*}
contradicting $X_j\in D_j(\bm{t})$.
Thus, $X$ is weakly stable.

\medskip\noindent\textbf{Pareto efficiency.}
Suppose next that a matching $Y$ Pareto improves upon $X$.
Since $Y(j)\succsim_jX(j)\succsim_j\emptyset$ for every $j$, $v(Y)$ is finite and $v(Y)\ge v(X)$, strictly if a firm improves.
For any matching $Z$, $\sum_jV_j(Z_j,\bm{t}_j)=\kappa v(Z)+\sum_{i\in I\mathrel{:} Z(i)\ne\varnothing}(\omega_i-p_i\lambda^{-r_i(Z(i))})$.
Since $X_j\in D_j(\bm{t})$ for every $j$, we have
$\sum_j[V_j(Y_j,\bm{t}_j)-V_j(X_j,\bm{t}_j)]\le0$.
Expanding this difference and using $p_i\lambda^{-r_i(\varnothing)}=\omega_i$ for workers unmatched in $X$ yields
\begin{align*}
 0\ge\kappa[v(Y)-v(X)]+\sum_{i\in I}p_i\left(\lambda^{-r_i(X(i))}-\lambda^{-r_i(Y(i))}\right)+\sum_{i\in I\mathrel{:} Y(i)=\varnothing}\left(p_i\lambda^{-r_i(\varnothing)}-\omega_i\right).
\end{align*}
The three terms on the right are nonnegative because $v(Y)\ge v(X)$, $r_i(Y(i))\ge r_i(X(i))$, and $p_i\ge\beta_i^0=\omega_i\lambda^{r_i(\varnothing)}$.
A strict improvement by a firm makes the first term positive, and one by a worker makes her term in the sum over $i\in I$ positive, a contradiction.

\medskip\noindent\textbf{Worker WGSP.}
Finally, fix a deterministic mechanism selecting seller-optimal equilibrium matchings.
Pairing the selected matching with the seller-optimal payoffs at each report profile defines a deterministic seller-optimal mechanism on the profiles of the auxiliary market that arise from ordinal reports.
Extend it to all profiles in the domain of \Cref{thm:schlegel} by any seller-optimal selection.
The extension is seller WGSP by \Cref{thm:schlegel}, and so is its restriction to these profiles.
Let $X$ be the matching selected at the true profile and $\bm{p}$ its seller-optimal payoffs.
Suppose a coalition's joint misreport changes the selected matching to $X'$, supported by seller-optimal payoffs $\bm{p}'$ at the misreported profile, and strictly improves every member's assignment under her true preference.
Each member $i$ is then matched in $X'$, since $X(i)\succsim_i\varnothing$, and her transfer $t_i'=\lambda^{-r'_i(X'(i))}p'_i$ satisfies $t_i'\ge1$ by \Cref{lem:rank-support} at the misreported profile, where $r'_i$ is her reported rank.
Using her true rank $r_i$ and \eqref{eq:tier-payoffs} at the true profile gives $t_i'\lambda^{r_i(X'(i))}\ge\lambda^{r_i(X'(i))}\ge\lambda^{r_i(X(i))+1}>p_i$.
Every member therefore strictly increases her true utility in the auxiliary market, contradicting the seller WGSP of \Cref{thm:schlegel}.
\end{proof}

\section{An Equilibrium Characterization of Generalized Deferred Acceptance}\label{sec:gda-analysis}
In this section, we show that generalized deferred acceptance returns a seller-optimal equilibrium matching of the baseline market (\Cref{thm:terminal-equivalence}), which proves \Cref{thm:generalized-da} by \Cref{thm:auxiliary-guarantees}.
Although the procedure computes no transfers, we associate each of its rounds with the auxiliary market in which the workers' outside payoffs come from their current tiers, keeping all trades and buyer valuations fixed.
Later rounds lower the outside payoffs of the workers still unmatched, just as a descending auction lowers the price until a buyer accepts.
\Cref{sec:current-tier-clearing} characterizes the matchings retained in a round by welfare maximization, \Cref{sec:descending-payoffs} shows that they are the seller-optimal equilibrium matchings of that round's market, and \Cref{sec:terminal-optimality} passes from the terminal outside payoffs to the baseline ones.

\subsection{Welfare maximization within current tiers}\label{sec:current-tier-clearing}

Fix a round of generalized deferred acceptance, and let $F=(F_i)_{i\in I}$ be its tier profile after workers advance in Step~3.
The round retains the matchings in $\cM_{m+n}(F)$ and selects $\Psi(F)$ among them.
We describe $\cM_{m+n}(F)$ as the set of maximizers of a welfare problem over $\cM(F)$, and relate these maximizers to the equilibria of an auxiliary market whose outside payoffs depend on $F$.

The auxiliary market for the round uses the outside payoffs $\bm{\beta}(F)=(\beta_i(F_i))_{i\in I}$ given by
\begin{align*}
 \beta_i(F_i)=
 \begin{cases}
  \tfrac12\lambda^{\ell_i(F_i)}&\text{if an acceptable tier below }F_i\text{ remains},\\
  \beta_i^0&\text{otherwise},
 \end{cases}
\end{align*}
where $\ell_i(F_i)$ is the rank of tier $F_i$, that is, $\ell_i(F_i)=r_i(j)$ for all $j\in F_i$ when $F_i\ne\emptyset$, and $\ell_i(\emptyset)=r_i(\varnothing)$.

When an acceptable tier below $F_i$ remains, $F_i$ lies strictly above $\varnothing$, hence $\ell_i(F_i)>r_i(\varnothing)$.
Thus, \Cref{lem:payoff-lower} applies to $\bm{\beta}(F)$ with $\rho_i=\ell_i(F_i)$, and $\beta_i(F_i)\ge\tfrac12\lambda^{r_i(\varnothing)+1}>\omega_i\lambda^{r_i(\varnothing)}=\beta_i^0$ because $\lambda>2\bar{v}\ge2\omega_i$.
Since $\beta_i(F_i)=\beta_i^0$ otherwise, we have $\bm{\beta}(F)\ge\bm{\beta}^0$, which we use in the remaining subsections.
By \Cref{lem:payoff-upper,lem:payoff-lower}, at a seller-optimal equilibrium, an assignment in the current tier yields worker $i$ a payoff of at least $\lambda^{\ell_i(F_i)}$, while one in a lower tier yields at most $\bar{v}\lambda^{\ell_i(F_i)-1}$.
The outside payoff $\tfrac12\lambda^{\ell_i(F_i)}$ lies strictly between these values, and worker $i$ therefore accepts her current tier and rejects lower ones.

For each worker $i$, let $z_i(F_i)=\lambda^{-\ell_i(F_i)}\beta_i(F_i)$ be the transfer at which an assignment in her current tier gives her exactly her outside payoff.
We maximize the buyers' total valuation plus these transfers of the unmatched workers over the matchings within the current tiers:
\begin{align}
 \text{maximize}\quad \tilde{v}(\nu)+\sum_{i\in I\mathrel{:}\nu(i)=\varnothing}z_i(F_i)\qquad\text{subject to}\quad \nu\in\cM(F).
 \label{eq:round-welfare}
\end{align}

To reflect the firm index order, we now fix the weights $\alpha_j$, arbitrary so far, to be the positive integers $\alpha_{j_t}=M^{m-t}$ for $t=1,\dots,m$, where $U_j=\sum_{i\in I}\max\{0,v_j(\{i\})\}$ and $M=1+\max\{1,\max_{j\in J}U_j\}$.
The weights depend only on the GS representations $v_j$ and are therefore fixed before workers report.
Then, $v(X)=\sum_{t=1}^{m}M^{m-t}v_{j_t}(X(j_t))$.
Applying the exchange axiom to $S\in\dom(v_j)$ and $\emptyset$ yields $v_j(S)\le v_j(S\setminus\{i\})+v_j(\{i\})$ for each $i\in S$, and iterating shows that every finite value of $v_j$ is at most $U_j$.
Since $M$ exceeds every value a single firm can attain, maximizing $v$ over matchings in which every firm has a nonnegative value compares the firms lexicographically in their index order, just as their serial dictatorship does.
The proof of \Cref{prop:ordered-welfare} shows that no other matching is retained.

\begin{proposition}\label[proposition]{prop:ordered-welfare}
Fix a tier profile $F$ and consider the auxiliary market with outside payoffs $\bm{\beta}(F)$.
\begin{enumerate}[label=\textup{(\roman*)},leftmargin=*,topsep=0.2em,itemsep=0.2em]
 \item The maximizers of \eqref{eq:round-welfare} are exactly the matchings in $\cM_{m+n}(F)$.
 \item Let $(X,\bm{p})$ with $X\in\cM(F)$ be an equilibrium.
 Then, $\nu\in\cM(F)$ maximizes \eqref{eq:round-welfare} if and only if $(\nu,\bm{p})$ is an equilibrium.
\end{enumerate}
\end{proposition}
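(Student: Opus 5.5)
The plan is to prove (i) by a lexicographic decomposition of the objective in~\eqref{eq:round-welfare}, and (ii) by a first-welfare-theorem accounting identity at the transfers $t_{ij}=\lambda^{-r_i(j)}p_i$.

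\textbf{Part (i).} Write the objective as $W(\nu)=\kappa v(\nu)+\sum_{i}w_i(\nu)$. Here $w_i(\nu)=\omega_i$ if $i$ is matched in $\nu$ and $w_i(\nu)=z_i(F_i)$ otherwise.

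First I bound the worker terms. If a tier below $F_i$ remains, then $z_i=\tfrac12$. Otherwise $z_i=\omega_i\lambda^{r_i(\varnothing)-\ell_i(F_i)}$, which equals $\omega_i$ when $\ell_i(F_i)=r_i(\varnothing)$ and is below $1$ when $\ell_i(F_i)>r_i(\varnothing)$. Hence $0\le\sum_i w_i\le\bar\omega<\kappa$. The empty matching is feasible, so $\max v\ge 0$, and maximizers of $W$ must first maximize $v$ over $\cM(F)$.

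Second, I show that the maximizers of $v$ over $\cM(F)$ are exactly $\cM_m(F)$. The key observation is that for $\nu\in\cM_m(F)$, every firm has $v_{j_s}(\nu(j_s))\ge0$. Indeed, when $j_s$ chooses, unmatching all of $j_s$'s workers leaves the earlier firms' sets unchanged and stays in $\cM_{s-1}(F)$. Combined with the bound $v_j\le U_j<M$ derived just before the proposition, this gives the comparison. Take $\xi\notin\cM_m(F)$ with finite $v$, and let $j_t$ be the first firm strictly worse off in $\xi$; its loss is at least $1$ by integrality. Then
\[
v(\nu)-v(\xi)\ge M^{m-t}-\sum_{s>t}M^{m-s}U_{j_s}>0.
\]
All matchings in $\cM_m(F)$ share the same firm values, so they tie in $v$.

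Third, I handle the worker stage. Among $\cM_m(F)$, maximizing $\sum_i w_i$ is lexicographic in the worker index order. For a worker whose tier lies strictly above $\varnothing$, we have $\omega_i-z_i>\omega_i-1>\sum_{r>q}\omega_{i_r}$. For the remaining workers, either $\omega_i-z_i=0$ or the worker cannot be matched at all, and the serial dictatorship leaves the candidates unchanged for them as well. So this stage reproduces $\cM_{m+1},\dots,\cM_{m+n}$ exactly.

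\textbf{Part (ii).} For $\nu\in\cM(F)$ every matched worker has $\nu(i)\in F_i$, so $t_{i\nu(i)}=\lambda^{-\ell_i(F_i)}p_i$. Summing buyer utilities then gives the identity
\[
\sum_j V_j(\nu_j,\bm t_j)=W(\nu)-\sum_i\lambda^{-\ell_i(F_i)}p_i+\sum_{i:\nu(i)=\varnothing}\lambda^{-\ell_i(F_i)}(p_i-\beta_i).
\]
At $X$ the last sum vanishes and $X$ maximizes total buyer utility over all allocations. Hence for any $\nu\in\cM(F)$,
\[
W(\nu)\le W(X)-\sum_{i:\nu(i)=\varnothing}\lambda^{-\ell_i(F_i)}(p_i-\beta_i)\le W(X),
\]
so $X$ is a maximizer. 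Moreover, $\nu$ is a maximizer iff both inequalities are tight, that is, iff $\nu\in\hat\cD(\bm p)$ and $p_i=\beta_i$ for every worker unmatched in $\nu$. Since $\bm p\ge\bm\beta$ already holds, this is exactly~\eqref{eq:standard-equilibrium} for $(\nu,\bm p)$.

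\textbf{Main obstacle.} The delicate step is the lexicographic comparison in part (i), because firm values can be negative, so the bound $U_j$ alone does not control later firms. What resolves it is the nonnegativity of every firm's value at a serial-dictatorship outcome, coming from the feasibility of unmatching a firm's workers. I would verify this carefully first, including that such a deletion keeps $v$ finite, since $v_j(\emptyset)=0$.
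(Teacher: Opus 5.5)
Your proposal is correct and follows essentially the paper's proof. In part (i) you make the same reductions: the $\omega$/$z$ terms vary by less than $\kappa$; deleting a firm's workers forces nonnegative firm values, so the weights $M^{m-t}$ reproduce the firms' serial dictatorship; and $\omega_{i_q}-1>\sum_{r>q}\omega_{i_r}$ reproduces the workers' choices. Part (ii) is the same accounting identity between total buyer utility at the transfers $\lambda^{-\ell_i(F_i)}p_i$ and the objective. Your explicit bound $M^{m-t}-\sum_{s>t}M^{m-s}U_{j_s}>0$ is a spelled-out form of the paper's base-$M$ digit comparison. It relies on the first firm whose value differs being the one that is strictly worse off, which follows from the serial-dictatorship structure.
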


\begin{proof}
For (i), by $\tilde{v}=\kappa v+\omega$, the objective of \eqref{eq:round-welfare} can be written as
\begin{align}
 \kappa v(\nu)+\sum_{i\in I}z_i(F_i)
 +\sum_{i\in I\mathrel{:}\nu(i)\ne\varnothing}(\omega_i-z_i(F_i)).
 \label{eq:ordered-welfare}
\end{align}
Since $\emptyset\in\cM(F)$, the maximum in \eqref{eq:round-welfare} is finite.
The last sum in \eqref{eq:ordered-welfare} varies by at most $\bar{\omega}<\kappa$, while $\sum_{i\in I}z_i(F_i)$ is constant.
Thus, the objective first maximizes $v$.
Replacing a firm's set by $\emptyset$ keeps a matching in $\cM(F)$ and leaves the other firms' sets unchanged.
Hence, both this maximization and the firms' serial dictatorship retain only matchings in which every firm has a value in $\{0,\ldots,M-1\}$.
On such matchings, the digits of $v$ in base $M$ are the firms' values in index order, and maximizing $v$ therefore retains exactly $\cM_m(F)$.

Among these maximizers, matching worker $i$ changes the remaining terms by $\omega_i-z_i(F_i)$.
A worker with $F_i=\emptyset$ is unmatched throughout $\cM(F)$ and contributes a constant.
For a nonempty $F_i$ not strictly above $\varnothing$, the difference is zero.
When $F_i$ is strictly above $\varnothing$, we have $z_i(F_i)<1$, and the difference exceeds $\omega_i-1$.
This exceeds the later workers' differences because $\omega_{i_q}-1>\sum_{r>q}\omega_{i_r}$.
Maximizing these terms therefore reproduces the workers' choices in index order, giving $\cM_{m+n}(F)$.

For (ii), suppose $(X,\bm{p})$ with $X\in\cM(F)$ is an equilibrium.
Put $q_i=\lambda^{-\ell_i(F_i)}p_i$, the transfer on each of her current-tier trades, abbreviate $z_i(F_i)$ to $z_i$, and let $u$ be the sum of the buyers' maximum utilities at $\bm{p}$. 
The value $u+\sum_{i\in I}q_i$ does not depend on $\nu$, and every $\nu\in\cM(F)$ with finite value satisfies
\begin{align*}
 u+\sum_{i\in I}q_i-\biggl[\tilde{v}(\nu)+\sum_{i\in I\mathrel{:}\nu(i)=\varnothing}z_i\biggr]=\biggl[u-\tilde{v}(\nu)+\sum_{i\in I\mathrel{:}\nu(i)\ne\varnothing}q_i\biggr]+\sum_{i\in I\mathrel{:}\nu(i)=\varnothing}(q_i-z_i).
\end{align*}
Both terms on the right are nonnegative, since
$X\in\hat{\cD}(\bm{p})$ and
$\bm{p}\ge\bm{\beta}(F)$, respectively.
Since $(X,\bm{p})$ is an equilibrium, both terms vanish at $X$.
Thus, $\nu$ maximizes the objective in \eqref{eq:round-welfare} if and only if $\nu\in\hat{\cD}(\bm{p})$ and $p_i=\beta_i(F_i)$ whenever $\nu(i)=\varnothing$.
By \eqref{eq:standard-equilibrium}, these two conditions and $\bm{p}\ge\bm{\beta}(F)$ hold if and only if $(\nu,\bm{p})$ is an equilibrium.
\end{proof}

Equilibrium itself, however, is not restricted to the domain $\cM(F)$ of \eqref{eq:round-welfare}, since each buyer $j$ compares all subsets of $E_j$, including trades outside the current tiers.
\Cref{lem:round-maximizers} shows that, at every tier profile reached after a round, every seller-optimal equilibrium matching nevertheless lies in $\cM(F)$.

\subsection{Equilibrium matchings in each round}\label{sec:descending-payoffs}

For the retained matchings after each round to coincide with the seller-optimal equilibrium matchings of that round's market, two kinds of pairs must be excluded, namely those below the current tiers and those in previously left tiers.
The outside payoffs of \Cref{sec:current-tier-clearing} exclude the former, since such a pair would require a transfer above $\bar{v}$, contrary to \Cref{lem:payoff-upper}.
For the latter, a worker who leaves a tier of rank $k$ has payoff equal to her outside payoff $\tfrac12\lambda^k$, below the level $\lambda^k$ that \Cref{lem:payoff-lower} requires for assignment at rank $k$.
It remains to show that her payoff never returns to this level.
After the first round, $\bm{\beta}(F)$ weakly decreases whenever unmatched workers move to lower tiers.
This includes a worker $i$ reaching her last acceptable tier, since the tier she leaves has outside payoff above $\beta_i^0$.
By \Cref{thm:schlegel}, the equilibrium payoff vectors with outside payoffs $\bm{\beta}$ have a greatest element, the payoff vector of the seller-optimal equilibria, which we denote by $\bm{p}^+(\bm{\beta})$.
It therefore suffices that $\bm{p}^+(\bm{\beta})$ is monotone in the outside payoffs, which the following lemma shows.

\begin{lemma}\label[lemma]{lem:payoff-monotone}
If $\bm{\beta}\ge\bm{\beta}'$, then $\bm{p}^+(\bm{\beta})\ge\bm{p}^+(\bm{\beta}')$ coordinatewise.
\end{lemma}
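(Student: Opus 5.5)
Raising outside payoffs can only help sellers at the top of the lattice. I would argue directly from the equilibrium characterization \eqref{eq:standard-equilibrium} together with the lattice structure from \Cref{thm:schlegel}. Fix $\bm{\beta}\ge\bm{\beta}'$. Let $(X',\bm{p}')$ be a seller-optimal equilibrium for $\bm{\beta}'$, so $\bm{p}'=\bm{p}^+(\bm{\beta}')$, and let $(X,\bm{p})$ be a seller-optimal equilibrium for $\bm{\beta}$. The goal is $\bm{p}\ge\bm{p}'$. The natural route is to build an equilibrium for $\bm{\beta}$ whose payoff vector dominates $\bm{p}'$; the conclusion then follows because $\bm{p}^+(\bm{\beta})$ is the greatest equilibrium payoff vector for $\bm{\beta}$.

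\textbf{Step 1 (join candidate).} Set $\bm{q}=\bm{p}\vee\bm{p}'$ coordinatewise. Then $\bm{q}\ge\bm{\beta}$ holds automatically, since $\bm{p}\ge\bm{\beta}$. I would show that some matching $Z$ supports $\bm{q}$ in the $\bm{\beta}$-market. The standard lattice argument for full substitutability (the one underlying Schlegel's proof that equilibrium payoffs form a lattice) takes each buyer's demand at the coordinatewise-max transfers and builds a demanded set from pieces of $X$ and $X'$. Concretely, apply GS to $\bm{t}(\bm{p})\le\bm{t}(\bm{q})$ starting from $X$ to get $Y\in\hat{\cD}(\bm{q})$ that keeps every trade of $X$ whose seller has $p_i\ge p'_i$. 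Symmetrically, starting from $X'$, get $Y'$ that keeps the trades of workers with $p'_i\ge p_i$. Because all buyers' trade sets are disjoint and $\hat{\cD}(\bm{q})$ is a g-matroid, I would then exchange these into a single $Z\in\hat{\cD}(\bm{q})$ that contains at most one trade per worker.

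\textbf{Step 2 (outside-option condition).} I must check that $q_i=\beta_i$ for every worker unmatched in $Z$. This is the main obstacle. If $p_i\ge p'_i$ and $i$ is unmatched in $X$, then $q_i=p_i=\beta_i$, which is fine. If $p'_i>p_i$, then $q_i=p'_i$. If moreover $i$ is unmatched in $X'$, we get $p'_i=\beta'_i\le\beta_i\le p_i<p'_i$, a contradiction. So such an $i$ is matched in $X'$, and $Z$ should retain her trade from $X'$. The real difficulty is the exchange in Step 1: it must not drop trades of workers whose payoff is strictly above their outside payoff.

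\textbf{Alternative route.} To sidestep this, I would instead try to invoke Schlegel's comparative statics directly. Lowering a seller's outside payoff is equivalent to a report change that makes the seller weakly more eager to trade at every transfer. In Schlegel's framework, the seller-optimal payoff of other sellers is monotone under such changes, since sellers are substitutes for one another. Failing that, I would run an ascending argument. Starting from $\bm{p}'$, which is feasible for the buyers, raise the payoffs of the set $R$ of workers with $p'_i<\beta_i$ until they reach $\beta_i$, letting those workers drop out. By the g-matroid demand property, as in \Cref{lem:raising}, the remaining sellers' payoffs need not fall. This yields an equilibrium for $\bm{\beta}$ dominating $\bm{p}'$.
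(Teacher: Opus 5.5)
\textbf{There is a genuine gap: Step~1 never produces the matching $Z$, and the g-matroid exchange you propose cannot produce it.}

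Your overall plan matches the paper's: build an equilibrium for $\bm{\beta}$ with payoffs $\bm{p}\vee\bm{p}'$, then use that $\bm{p}^+(\bm{\beta})$ is the greatest equilibrium payoff vector. Your Step~2 observation is exactly the key fact: every worker in $L=\{i\in I\mid p'_i>p_i\}$ is matched in $X'$. The break is at the point you flag yourself.

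Applying GS separately from $X$ and from $X'$ gives two demanded sets, $Y\supseteq X_H$ and $Y'\supseteq X'_L$, where $H=I\setminus L$. The g-matroid exchange property of $\hat{\cD}(\bm{q})$ cannot merge them. A g-matroid need not have any member containing $X_H\cup X'_L$; the family of sets of size at most one is a counterexample. Exchange also gives no control over extra trades. So you can rule out neither a worker holding two trades nor a worker with $q_i>\beta_i$ being dropped. Once you pass to $\hat{\cD}(\bm{q})$, you have discarded the fact that $X$ and $X'$ were demanded at the original transfers, and that fact is what makes the merge possible.

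\textbf{The missing ingredient} is the demand comparison \eqref{eq:demand-comparison} (Schlegel's Lemma~4). Its cardinality bound is a law-of-aggregate-demand property, not a consequence of the g-matroid structure of one demand family. It says that for each buyer $j$ there is $Z_j\in\cD_j(\bm{t}\vee\bm{t}')$ with
\[
\{x\in X_j\mid t_x\ge t'_x\}\cup\{y\in X'_j\mid t'_y>t_y\}\subseteq Z_j
\quad\text{and}\quad |Z_j|\le|X_j| .
\]
The argument then closes as follows.
\begin{enumerate}
\item The transfer on $(i,j)$ is $\lambda^{-r_i(j)}$ times $i$'s payoff. So the joined payoffs give the joined transfers, and $t_x\ge t'_x$ on $x=(i,j)\in X$ exactly when $i\in H$.
\item Hence the union of the $Z_j$ lies in $\hat{\cD}(\bm{p}\vee\bm{p}')$, contains $X_H\cup X'_L$, and has at most $|X|$ elements.
\item By your Step~2, $|X_H|+|X'_L|=|X_H|+|L|\ge|X|$.
\item So the union equals $X_H\cup X'_L$. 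This is a matching, and its unmatched workers lie in $H$ and are unmatched in $X$, so $q_i=p_i=\beta_i$ for them.
\end{enumerate}
This is precisely \Cref{lem:equilibrium-splicing}.

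Your alternative routes do not fill the gap. The comparative-statics claim is asserted rather than derived, and it only concerns the other sellers. The ascending argument overlooks that raising some transfers can create overdemand for other sellers. Then $\bm{p}'\vee\bm{\beta}$ need not be an equilibrium payoff vector, and you would need a full convergence argument. \Cref{lem:raising} does not help here, because it keeps the matching fixed under specific exchange conditions.
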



The next lemma combines two equilibria at the coordinatewise maximum or minimum of their payoffs.
We use the maximum case to prove \Cref{lem:payoff-monotone} and the minimum case in \Cref{cor:rank-lattice}.
Its proof uses the following two demand comparisons from Schlegel~\cite[Lemma~4]{Schlegel2022}, with the roles of $X$ and $Y$ exchanged in the second.
Write $\bm{a}\vee\bm{b}$ and $\bm{a}\wedge\bm{b}$ for the coordinatewise maximum and minimum of real vectors $\bm{a}$ and $\bm{b}$ with the same index set.
For any buyer $j$, transfer vectors $\bm{t},\bm{t}'\in\mathbb{R}^E$, and sets $X\in\cD_j(\bm{t})$ and $Y\in\cD_j(\bm{t}')$, there exist $Z\in\cD_j(\bm{t}\vee\bm{t}')$ and $Z'\in\cD_j(\bm{t}\wedge\bm{t}')$ such~that
\begin{align}
  \{x\in X\mid t_x\ge t'_x\}\cup\{y\in Y\mid t'_y>t_y\}\subseteq Z,\qquad\quad &|Z|\le|X|,
  \label{eq:demand-comparison}\\
  Z'\subseteq\{y\in Y\mid t_y\ge t'_y\}\cup\{x\in X\mid t'_x>t_x\},\qquad\quad &|Z'|\ge|Y|.
  \label{eq:demand-comparison-min}
\end{align}

\begin{lemma}\label[lemma]{lem:equilibrium-splicing}
Let $(X,\bm{p})$ and $(Y,\bm{p}')$ be equilibria of the auxiliary market with outside payoffs $\bm{\beta}$ and $\bm{\beta}'$, respectively, where $\bm{\beta}\ge\bm{\beta}'$.
Let $Z^+$ give each worker $i$ her assignment in $X$ if $p_i\ge p'_i$ and in $Y$ otherwise, and let $Z^-$ give her the other one.
Then, $(Z^+,\bm{p}\vee\bm{p}')$ is an equilibrium with outside payoffs $\bm{\beta}$.
If $\bm{\beta}=\bm{\beta}'$, then $(Z^-,\bm{p}\wedge\bm{p}')$ is an equilibrium as well.
\end{lemma}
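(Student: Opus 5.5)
We verify the equilibrium conditions \eqref{eq:standard-equilibrium} directly, working with transfers. Let $\bm{t}$ and $\bm{t}'$ be the transfers associated with $\bm{p}$ and $\bm{p}'$, namely $t_{ij}=\lambda^{-r_i(j)}p_i$ and $t'_{ij}=\lambda^{-r_i(j)}p'_i$. The factor $\lambda^{-r_i(j)}$ depends only on the trade and is positive. Hence, for every trade $x=(i,j)$, we have $t_x\ge t'_x$ if and only if $p_i\ge p'_i$. It follows that $\bm{t}\vee\bm{t}'$ is exactly the transfer vector associated with $\bm{p}\vee\bm{p}'$, and $\bm{t}\wedge\bm{t}'$ is the one associated with $\bm{p}\wedge\bm{p}'$. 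We also have $Z^+=\{x\in X\mid t_x\ge t'_x\}\cup\{y\in Y\mid t'_y>t_y\}$, and $Z^-$ is the complementary selection $\{y\in Y\mid t_y\ge t'_y\}\cup\{x\in X\mid t'_x>t_x\}$. Both are matchings, since each worker contributes only the trade from one of $X$ and $Y$.

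\textbf{Buyer demand.} Fix a buyer $j$ and apply \eqref{eq:demand-comparison} to $X_j\in\cD_j(\bm{t})$ and $Y_j\in\cD_j(\bm{t}')$. This yields $Z\in\cD_j(\bm{t}\vee\bm{t}')$ with $Z^+_j\subseteq Z$ and $|Z|\le|X_j|$. We must show $Z^+_j$ is itself demanded. The main obstacle is to exclude extra trades in $Z\setminus Z^+_j$, and the cardinality bound alone does not suffice when $Z^+_j$ is smaller than $X_j$.

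My plan is a global counting argument over all buyers. Summing the bounds gives $\sum_j|Z^{(j)}|\le|X|$. Symmetrically, applying \eqref{eq:demand-comparison} with the roles of $X$ and $Y$ swapped gives sets containing the same $Z^+_j$ up to ties, with size at most $|Y_j|$. I would combine these with the fact that $Z^+$ saturates workers as follows. Each worker matched in $X$ with $p_i\ge p'_i$ contributes a trade to $Z^+$. So does each worker matched in $Y$ with $p'_i>p_i$.

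\textbf{Worker conditions.} A worker who is unmatched in the source she is assigned from has $p_i=\beta_i$ there (or $p'_i=\beta'_i\le\beta_i$). Her max payoff is then at least $\beta_i$ and also at most $\max(p_i,p'_i)$. From this I would argue that any extra trade in $Z$ at a worker $k$ not covered by $Z^+$ must belong to a worker whose $(\bm{p}\vee\bm{p}')$-payoff equals $\beta_k$. Dropping that trade then costs the buyer exactly zero net utility. I expect this to require checking that such trades have transfer equal to the buyer's marginal value, which follows from tightness. Removing these trades one at a time, while staying in the demand g-matroid, would give $Z^+_j\in\cD_j(\bm{t}\vee\bm{t}')$.

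For the worker side we need $p_i\vee p'_i\ge\beta_i$. This holds because $p_i\ge\beta_i$. We also need equality for workers unmatched in $Z^+$. Such a worker takes the unmatched option from whichever equilibrium attains the maximum. If that is $X$, then the payoff is $p_i=\beta_i$. If it is $Y$ with $p'_i>p_i\ge\beta_i$, then $p'_i=\beta'_i\le\beta_i$, a contradiction. So the second case cannot occur.

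\textbf{The minimum case.} When $\bm{\beta}=\bm{\beta}'$, the argument is symmetric using \eqref{eq:demand-comparison-min}. That result gives $Z'\subseteq Z^-_j$ with $|Z'|\ge|Y_j|$, and the task becomes showing that $Z^-_j$ contains no superfluous trades. For the worker side, $p_i\wedge p'_i\ge\beta_i$ holds trivially. A worker unmatched in $Z^-$ took the unmatched option from the equilibrium attaining the minimum, so her payoff there equals $\beta_i$.
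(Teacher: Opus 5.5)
Your worker-side checks are correct, but the buyer-demand step has a genuine gap. After applying \eqref{eq:demand-comparison} buyer by buyer, you rightly note that the per-buyer bound $|Z^{(j)}|\le|X_j|$ is not enough, and that summing gives $\sum_j|Z^{(j)}|\le|X|$. However, you never obtain the matching lower bound $|Z^+|\ge|X|$, which is what closes the argument. Instead you propose deleting ``extra'' trades one at a time at zero cost to the buyer, and nothing supports this. A worker whose payoff equals her outside payoff need not have a trade whose transfer equals the buyer's marginal value. Moreover, an extra trade in $Z^{(j)}\setminus Z^+_j$ could belong to a worker whom $Z^+$ assigns to a different buyer, a case your argument does not address. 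The swapped-roles application of \eqref{eq:demand-comparison} also does not produce sets containing $Z^+_j$, since ties are then broken toward $Y$.

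The missing idea is a fact you already prove in your worker paragraph: every worker $i$ with $p'_i>p_i$ satisfies $p'_i>\beta_i\ge\beta'_i$ and is therefore matched in $Y$. Write $H=\{i\in I\mid p_i\ge p'_i\}$ and $L=I\setminus H$. This fact gives $|Z^+|=|X_H|+|L|\ge|X_H|+|X_L|=|X|$. Hence $\sum_j|Z^+_j|\ge|X|\ge\sum_j|Z^{(j)}|$, and together with $Z^+_j\subseteq Z^{(j)}$ this forces $Z^{(j)}=Z^+_j$ for every buyer $j$. So $Z^+\in\hat{\cD}(\bm{p}\vee\bm{p}')$ without any deletion argument. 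Likewise, in the minimum case, $|Z^-|=|Y_H|+|X_L|\le|Y_H|+|Y_L|=|Y|$. The sets from \eqref{eq:demand-comparison-min} are each contained in $Z^-_j$ and have total size at least $|Y|$, so they must coincide with the $Z^-_j$. Your proposal names this as the remaining task but gives no argument for it.
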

\begin{proof}
Write $H=\{i\in I\mid p_i\ge p'_i\}$ and $L=I\setminus H$, and write $X_H$ for the trades in $X$ of the workers in $H$, with $X_L$, $Y_H$, and $Y_L$ defined likewise.
Then, $Z^+=X_H\cup Y_L$ and $Z^-=Y_H\cup X_L$.
Every $i\in L$ satisfies $p'_i>p_i\ge\beta_i\ge\beta'_i$ and is therefore matched in $Y$.
Hence, $|Y_L|=|L|\ge|X_L|$ and $|Z^+|=|X_H|+|Y_L|\ge|X|$. 

Since the transfer for a trade $(i,j)$ is $\lambda^{-r_i(j)}p_i$, the transfer vectors at $\bm{p}\vee\bm{p}'$ and at $\bm{p}\wedge\bm{p}'$ are the coordinatewise maximum and minimum of those at $\bm{p}$ and at $\bm{p}'$.
Applying \eqref{eq:demand-comparison} to each buyer and taking the union of the resulting sets produces a set in $\hat{\cD}(\bm{p}\vee\bm{p}')$ that contains $Z^+$ and has at most $|X|$ elements.
Since $|Z^+|\ge|X|$, that set is $Z^+$.
A worker unmatched in $Z^+$ lies in $H$ and is unmatched in $X$, hence $(\bm{p}\vee\bm{p}')_i=p_i=\beta_i$.
With $\bm{p}\vee\bm{p}'\ge\bm{p}\ge\bm{\beta}$, \eqref{eq:standard-equilibrium} makes $(Z^+,\bm{p}\vee\bm{p}')$ an equilibrium with outside payoffs $\bm{\beta}$.

Suppose now that $\bm{\beta}=\bm{\beta}'$.
Then, $|Z^-|=|Y_H|+|X_L|\le|Y_H|+|Y_L|=|Y|$.
Similarly, \eqref{eq:demand-comparison-min} yields a set in
$\hat{\cD}(\bm{p}\wedge\bm{p}')$ contained in $Z^-$
with at least $|Y|$ elements, which must therefore equal $Z^-$.
Each worker takes her assignment in $Z^-$ from the equilibrium with the smaller payoff, hence $(\bm{p}\wedge\bm{p}')_i=\beta_i$ whenever she is unmatched in $Z^-$.
With $\bm{p}\wedge\bm{p}'\ge\bm{\beta}$, the pair $(Z^-,\bm{p}\wedge\bm{p}')$ is an equilibrium as well.
\end{proof}

\begin{proof}[Proof of \Cref{lem:payoff-monotone}]
Let $X$ and $Y$ be matchings supporting $\bm{p}^+(\bm{\beta})$ and $\bm{p}^+(\bm{\beta}')$, respectively.
By \Cref{lem:equilibrium-splicing}, $\bm{p}^+(\bm{\beta})\vee\bm{p}^+(\bm{\beta}')$ is an equilibrium payoff vector with outside payoffs $\bm{\beta}$.
As $\bm{p}^+(\bm{\beta})$ is the greatest such vector, we have $\bm{p}^+(\bm{\beta})\ge\bm{p}^+(\bm{\beta})\vee\bm{p}^+(\bm{\beta}')\ge\bm{p}^+(\bm{\beta}')$.
\end{proof}

We now use payoff monotonicity to exclude pairs in previously left tiers.

\begin{lemma}\label[lemma]{lem:round-maximizers}
Let $F$ be a tier profile reached after a round.
Then, $X\in\cM_{m+n}(F)$ if and only if $(X,\bm{p}^+(\bm{\beta}(F)))$ is an equilibrium of the auxiliary market with outside payoffs $\bm{\beta}(F)$.
\end{lemma}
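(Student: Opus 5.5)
The plan is to prove the lemma by induction on the rounds of generalized deferred acceptance, reducing everything to \Cref{prop:ordered-welfare}(ii). The key claim is that \emph{every seller-optimal equilibrium matching at $\bm{\beta}(F)$ lies in $\cM(F)$}. Granting this, fix one seller-optimal equilibrium $(X^*,\bm{p}^+(\bm{\beta}(F)))$, which exists by \Cref{thm:schlegel}; the claim gives $X^*\in\cM(F)$. \Cref{prop:ordered-welfare}(ii) then says that $\nu\in\cM(F)$ maximizes \eqref{eq:round-welfare} if and only if $(\nu,\bm{p}^+(\bm{\beta}(F)))$ is an equilibrium. By \Cref{prop:ordered-welfare}(i), the maximizers are exactly $\cM_{m+n}(F)$. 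Applying the claim once more to any $X$ with $(X,\bm{p}^+(\bm{\beta}(F)))$ an equilibrium puts $X$ in $\cM(F)$, which gives both directions.

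To prove the claim, take a seller-optimal equilibrium $(X,\bm{p})$ with $\bm{p}=\bm{p}^+(\bm{\beta}(F))$ and a matched worker $i$ with $x_i=(i,j)$. By \Cref{lem:payoff-lower}, which applies to $\bm{\beta}(F)$ as noted in \Cref{sec:current-tier-clearing}, we have $p_i\ge\lambda^{r_i(j)}$; by \Cref{lem:payoff-upper}, $p_i\le\bar{v}\lambda^{r_i(j)}$. There are two cases to exclude.

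\textbf{$j$ below $F_i$.} If $F_i$ has a lower acceptable tier, then $p_i\ge\beta_i(F_i)=\tfrac12\lambda^{\ell_i(F_i)}$. If $r_i(j)<\ell_i(F_i)$, this gives $t_{x_i}\ge\tfrac12\lambda>\bar{v}$, contradicting \Cref{lem:payoff-upper}. Otherwise $\beta_i=\beta^0_i$, and $F_i$ is either the last acceptable tier or empty. An assignment below it is unacceptable, and the computation in the weak-stability part of the proof of \Cref{thm:auxiliary-guarantees} shows $p_i<\lambda^{r_i(\varnothing)}\le\beta^0_i$, a contradiction. (If $F_i=\emptyset$, every firm counts as below, handled the same way.)

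\textbf{$j$ above $F_i$.} This is the main obstacle, where the induction enters. Then $j$ lies in a tier $G$ that $i$ left in some earlier round with profile $F'$. Since $i$ left $G$, she was unmatched in the selected matching at $F'$. By the induction hypothesis that matching is a seller-optimal equilibrium matching at $\bm{\beta}(F')$, so $p^+_i(\bm{\beta}(F'))=\beta_i(G)=\tfrac12\lambda^{\ell_i(G)}$; the same holds at the profile of the round in which she left. I then verify that $\bm{\beta}$ weakly decreases along the rounds, i.e.\ $\bm{\beta}(F')\ge\bm{\beta}(F)$. Matched workers keep their tiers, while advancing workers move from $\tfrac12\lambda^{\ell}$ to a smaller value: either $\tfrac12\lambda^{\ell'}$ with $\ell'<\ell$, or $\beta^0_i$, using $\beta_i(F_i)>\beta^0_i$ as shown above. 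For the first round, every $F'_i=\emptyset$, and $i$ cannot have left any tier yet, so there is nothing to check. \Cref{lem:payoff-monotone} then gives $p_i\le\tfrac12\lambda^{\ell_i(G)}<\lambda^{r_i(j)}$, contradicting the lower bound. Hence $X(i)\in F_i$ for all matched $i$, so $X\in\cM(F)$.

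Finally, I organize the argument as an induction on rounds in which the statement of the lemma at earlier profiles is the hypothesis. The base case is the first round, where the ``above'' case is vacuous. The one remaining point to verify is that the worker leaving $G$ was indeed unmatched in a seller-optimal equilibrium matching of that round. This holds because $\Psi(F')\in\cM_{m+n}(F')$, and whether she is matched is fixed across $\cM_{m+n}(F')$.
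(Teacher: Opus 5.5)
Your proposal is correct and follows essentially the same route as the paper. Both arguments prove by induction over rounds that every seller-optimal equilibrium matching lies in $\cM(F)$: lower tiers are excluded by the transfer upper bound, and previously left tiers by payoff monotonicity together with the payoff $\tfrac12\lambda^{k}$ of a worker who leaves a tier of rank $k$. Both then conclude via \Cref{prop:ordered-welfare}, and the only cosmetic difference is that the paper propagates the invariant $p_i^t<\lambda^{r_i(j)}$ round by round, whereas you chain the monotonicity directly back to the round of departure.
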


\begin{proof}
Write $F^t$ for the tier profile after round $t$, $\bm{\beta}^t=\bm{\beta}(F^t)$, and $\bm{p}^t=\bm{p}^+(\bm{\beta}^t)$.
We prove by induction that $p_i^t<\lambda^{r_i(j)}$ whenever worker $i$ has left the tier containing $j$.
By \Cref{lem:payoff-upper,lem:payoff-lower}, we have $1\le t_{x_i}\le\bar{v}$ for each matched worker $i$ at a seller-optimal equilibrium with outside payoffs $\bm{\beta}^t$.
The transfer upper bound also holds at any equilibrium.
No equilibrium assigns a pair $(i,j)$ below worker $i$'s current tier, since such a pair would require a transfer of at least $\lambda/2>\bar{v}$ when $\beta_i=\tfrac12\lambda^{\ell_i(F_i)}$ and at least $\omega_i\lambda>\bar{v}$ when $\beta_i=\beta_i^0$.
Both contradict the transfer upper bound.

\medskip\noindent\textbf{First round.}
After the first round, every worker with an acceptable firm is at her highest acceptable tier.
No tier has been left, and the exclusion of pairs below the current tiers implies that every seller-optimal equilibrium matching belongs to $\cM(F^1)$.
By \Cref{thm:schlegel}, a seller-optimal equilibrium exists, and its matching lies in $\cM(F^1)$.
By \Cref{prop:ordered-welfare}, we then have the claimed equality.

\medskip\noindent\textbf{Induction step.}
Suppose the claims hold after round $t$.
Since $\bm{\beta}^{t+1}\le\bm{\beta}^t$, we have $\bm{p}^{t+1}\le\bm{p}^t$ by \Cref{lem:payoff-monotone}, preserving the invariant for previously left tiers.
A worker $i$ newly leaving a tier of rank $k$ was unmatched in $\Psi(F^t)$ and had a next acceptable tier.
By the induction hypothesis, $(\Psi(F^t),\bm{p}^t)$ is an equilibrium, and hence $p_i^{t+1}\le p_i^t=\beta_i^t=\tfrac12\lambda^k<\lambda^k$, the invariant for the newly left tier.
Assigning a worker $i$ to a firm $j$ in a previously left tier at a seller-optimal equilibrium would require $p_i^{t+1}\ge\lambda^{r_i(j)}$ by the transfer lower bound, contradicting the invariant.
Together with the exclusion of pairs below the current tiers, this puts every seller-optimal equilibrium matching in $\cM(F^{t+1})$.
Applying \Cref{prop:ordered-welfare} again completes the induction.
\end{proof}

\subsection{Seller optimality at termination}\label{sec:terminal-optimality}

\Cref{lem:round-maximizers} identifies the matchings retained at termination with the seller-optimal equilibrium matchings of the auxiliary market with outside payoffs $\bm{\beta}(F^T)$, where $F^T$ is the terminal tier profile.
To apply \Cref{thm:auxiliary-guarantees}, we show that replacing these outside payoffs by $\bm{\beta}^0$ leaves the seller-optimal payoff vector and its equilibrium matchings unchanged.

\begin{theorem}\label[theorem]{thm:terminal-equivalence}
Let $F^T$ be the terminal tier profile of generalized deferred acceptance, and let the firm weights be $\alpha_{j_t}=M^{m-t}$ for $t=1,\dots,m$.
Then, $X\in\cM_{m+n}(F^T)$ if and only if $(X,\bm{p}^+(\bm{\beta}^0))$ is an equilibrium of the baseline market.
\end{theorem}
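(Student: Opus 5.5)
The plan is to show that the terminal outside payoffs $\bm{\beta}^T\coloneqq\bm{\beta}(F^T)$ and the baseline payoffs $\bm{\beta}^0$ have the same seller-optimal payoff vector, $\bm{p}^+(\bm{\beta}^T)=\bm{p}^+(\bm{\beta}^0)$. I would then show that the two markets have exactly the same equilibrium matchings supporting this vector. Once both facts hold, \Cref{lem:round-maximizers} applied to $F^T$ gives the claim directly.

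\emph{Degenerate case.} If no round is executed, every worker has $F_i=\emptyset$. Then $\bm{\beta}^T=\bm{\beta}^0$. Moreover, no worker has an acceptable firm, so the exclusion of pairs below current tiers from the proof of \Cref{lem:round-maximizers} confines equilibrium matchings to $\cM(F^T)=\{\emptyset\}$. I would therefore handle this case by rerunning the first-round argument of that lemma (existence from \Cref{thm:schlegel}, then \Cref{prop:ordered-welfare}). From now on, assume $F^T$ is reached after a round.

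\emph{Key observation about termination.} When Step~2 stops, every worker $i$ unmatched in $\Psi(F^T)$ has no next acceptable tier, so $\beta_i(F^T_i)=\beta_i^0$. Since whether a worker with a remaining lower tier is matched is fixed throughout $\cM_{m+n}(F^T)$, this holds for every retained matching. The general comparison is $\bm{\beta}^T\ge\bm{\beta}^0$ (shown in \Cref{sec:current-tier-clearing}). Where the two differ, we have $\beta^T_i=\tfrac12\lambda^{\ell_i}>\beta_i^0$, and such a worker is matched in every retained matching.

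\emph{Payoff equality.} \Cref{lem:payoff-monotone} gives $\bm{p}^+(\bm{\beta}^T)\ge\bm{p}^+(\bm{\beta}^0)$. For the reverse, take $X\in\cM_{m+n}(F^T)$. By \Cref{lem:round-maximizers}, $(X,\bm{p}^T)$ is an equilibrium at $\bm{\beta}^T$, where $\bm{p}^T=\bm{p}^+(\bm{\beta}^T)$. I check \eqref{eq:standard-equilibrium} at $\bm{\beta}^0$:
\begin{itemize}
\item $X\in\hat{\cD}(\bm{p}^T)$, since demand does not depend on the outside payoffs;
\item $\bm{p}^T\ge\bm{\beta}^T\ge\bm{\beta}^0$;
\item each unmatched worker has $p^T_i=\beta^T_i=\beta^0_i$ by the observation above.
\end{itemize}
Thus $\bm{p}^T$ is an equilibrium payoff vector at $\bm{\beta}^0$, so $\bm{p}^T\le\bm{p}^+(\bm{\beta}^0)$. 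Hence the two payoff vectors are equal; call the common vector $\bm{p}^*$.

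\emph{Same supporting matchings.} One direction was just shown: every $X\in\cM_{m+n}(F^T)$ supports $\bm{p}^*$ at $\bm{\beta}^0$. Conversely, suppose $(X,\bm{p}^*)$ is an equilibrium at $\bm{\beta}^0$. To show it is an equilibrium at $\bm{\beta}^T$, the only nontrivial condition is that a worker unmatched in $X$ satisfies $p^*_i=\beta^T_i$. Since $p^*_i=\beta^0_i$, we need $\beta^T_i=\beta^0_i$. If instead $\beta^T_i>\beta^0_i$, then $p^*_i=p^+_i(\bm{\beta}^T)\ge\beta^T_i>\beta^0_i$, a contradiction. Hence $(X,\bm{p}^+(\bm{\beta}^T))$ is an equilibrium at $\bm{\beta}^T$, and \Cref{lem:round-maximizers} puts $X$ in $\cM_{m+n}(F^T)$.

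The main obstacle I expect is the termination observation: I must make sure that the workers whose outside payoffs differ between $\bm{\beta}^T$ and $\bm{\beta}^0$ are exactly workers matched in all retained matchings. This relies on the remark after the procedure that workers' choices fix matchedness across $\cM_{m+n}(F)$. The rest is bookkeeping with \eqref{eq:standard-equilibrium} and \Cref{lem:payoff-monotone}.
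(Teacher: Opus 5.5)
Your proposal is correct and follows essentially the paper's route. You show that $\bm{p}^+(\bm{\beta}^T)$ is a baseline equilibrium payoff vector, use \Cref{lem:payoff-monotone} for the reverse inequality, and transfer supporting matchings between the two markets before invoking \Cref{lem:round-maximizers}. The only small difference is at the workers with $\beta^T_i>\beta^0_i$: the paper uses \Cref{lem:payoff-lower} to get $p^T_i\ge 2\beta^T_i>\beta^T_i$, which makes the equilibrium conditions at $\bm{p}^T$ under $\bm{\beta}^T$ and $\bm{\beta}^0$ coincide in one step, whereas you use that matchedness is fixed across $\cM_{m+n}(F^T)$ for one direction and $\bm{p}^+(\bm{\beta}^T)\ge\bm{\beta}^T$ for the other.
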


\begin{proof}
If no worker has an acceptable firm, both $\cM_{m+n}(F^T)$ and the set of baseline equilibrium matchings consist only of the empty matching, the latter by individual rationality in \Cref{thm:auxiliary-guarantees}.
Otherwise, $F^T$ is reached after a round, and \Cref{lem:round-maximizers} applies.

Write $\bm{\beta}^T=\bm{\beta}(F^T)$ and $\bm{p}^T=\bm{p}^+(\bm{\beta}^T)$.
Then, $(\Psi(F^T),\bm{p}^T)$ is an equilibrium by \Cref{lem:round-maximizers}.
If $\beta_i^T>\beta_i^0$, worker $i$ has a next acceptable tier and is matched in $\Psi(F^T)$ by Step~2.
By \Cref{lem:payoff-lower}, $p_i^T\ge\lambda^{\ell_i(F_i^T)}=2\beta_i^T>\beta_i^T$.
Since $\bm{\beta}^0\le\bm{\beta}^T\le\bm{p}^T$ and $\beta_i^T=\beta_i^0$ for every other worker, $p_i^T=\beta_i^T$ if and only if $p_i^T=\beta_i^0$.
Buyer demand is independent of the outside payoffs.
Thus, by \eqref{eq:standard-equilibrium}, a pair $(X,\bm{p}^T)$ is an equilibrium of the auxiliary market with outside payoffs $\bm{\beta}^T$ if and only if it is an equilibrium of the baseline market.
In particular, $\bm{p}^T$ is a baseline equilibrium payoff vector, implying $\bm{p}^T\le\bm{p}^+(\bm{\beta}^0)$.
\Cref{lem:payoff-monotone} gives the reverse inequality, hence $\bm{p}^T=\bm{p}^+(\bm{\beta}^0)$.
Together with \Cref{lem:round-maximizers}, these observations show that $X\in\cM_{m+n}(F^T)$ if and only if $(X,\bm{p}^+(\bm{\beta}^0))$ is an equilibrium of the baseline market.
\end{proof}

Combining \Cref{thm:terminal-equivalence} with the guarantees of the auxiliary market in \Cref{thm:auxiliary-guarantees}, we now prove our main result.

\begin{proof}[Proof of \Cref{thm:generalized-da}]
By \Cref{thm:terminal-equivalence}, generalized deferred acceptance selects a seller-optimal equilibrium matching of the baseline market at every report profile, with all representations, weights, and constants fixed before workers report.
By \Cref{thm:auxiliary-guarantees}, the procedure is worker WGSP and selects a Pareto-stable matching.
The running time follows from \Cref{prop:single-vmi}.
\end{proof}

\section{Further Properties of Generalized Deferred Acceptance}\label{sec:further-properties}
This section derives three further properties of generalized deferred acceptance from its equilibrium characterization.
Throughout, we fix the public firm preferences and the index orders.
For each profile $\succsim_I$ of worker weak orders, write $\mathrm{GDA}(\succsim_I)$ for the matching returned by generalized deferred acceptance and $\mathrm{GDA}_i(\succsim_I)$ for worker $i$'s assignment in it.

\subsection{Coalitional stability}\label{sec:coalition-stability}
Weak stability rules out deviations by a single firm together with some workers.
The matching returned by generalized deferred acceptance also resists deviations by coalitions of several firms.

\begin{proposition}\label[proposition]{prop:coalition-stability}
Let $X=\mathrm{GDA}(\succsim_I)$ for some profile $\succsim_I$.
There exist no set $J'\subseteq J$ of firms and matching $Y'$ with $Y'(i)\in J'\cup\{\varnothing\}$ for every worker $i$ such that
\begin{enumerate}[label=\textup{(\roman*)},leftmargin=*,topsep=0.2em,itemsep=0.2em]
 \item $Y'(j)\succsim_jX(j)$ for every $j\in J'$, with $Y'(j)\succ_jX(j)$ for some $j\in J'$, and
 \item $Y'(i)\succsim_iX(i)$ for every worker $i$ with $Y'(i)\ne\varnothing$, with $Y'(i)\succ_iX(i)$ whenever $X(i)\in J\setminus J'$.
\end{enumerate}
\end{proposition}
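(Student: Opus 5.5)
By \Cref{thm:terminal-equivalence}, $X$ is a baseline equilibrium matching supported by the seller-optimal payoffs $\bm{p}=\bm{p}^+(\bm{\beta}^0)$, with transfers $t_{ij}=\lambda^{-r_i(j)}p_i$. By \Cref{lem:rank-support}, every worker then satisfies $\lambda^{r_i(X(i))}\le p_i<\lambda^{r_i(X(i))+1}$. The plan is to sum the buyers' utilities over the firms in $J'$ only and compare $Y'$ with $X$ there, in the spirit of the Pareto-efficiency part of the proof of \Cref{thm:auxiliary-guarantees}. The point is that each firm's demand optimality is separate, so
\[\sum_{j\in J'}\bigl[V_j(Y'_j,\bm{t}_j)-V_j(X_j,\bm{t}_j)\bigr]\le0.\]
Here $Y'_j$ uses only trades of $j$, so it is a legitimate subset of $E_j$ even when its workers are assigned elsewhere in $X$.

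Expanding the difference gives
\[\kappa\bigl[v(Y'|_{J'})-v(X|_{J'})\bigr]+\bigl[\omega(Y')-\omega(X_{J'})\bigr]-\sum_{i:Y'(i)\ne\varnothing}t_{i,Y'(i)}+\sum_{i:X(i)\in J'}t_{i,X(i)},\]
where $v$ is restricted to the firms in $J'$. Condition (i) makes the first bracket at least $\kappa\min_j\alpha_j\ge\kappa$, since $v_j$ is integer-valued and $X(j)$ has finite value. The $\omega$-difference is at least $-\bar\omega$.

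The transfers are handled by classifying workers as follows.
\begin{itemize}
\item \emph{Matched in $Y'$ with $X(i)\in J\setminus J'\cup\{\varnothing\}$.} Condition (ii) gives $Y'(i)\succ_iX(i)$ (this holds for $X(i)=\varnothing$ as well, by weak preference and acceptability), so $t_{i,Y'(i)}=\lambda^{-r_i(Y'(i))}p_i<\lambda^{r_i(X(i))+1-r_i(Y'(i))}\le1$. For the $\varnothing$ case with a tie, $Y'(i)\sim_i\varnothing$ gives $t=p_i\lambda^{-r_i(\varnothing)}<\lambda$. That bound is too weak, so instead I pair it with $\omega$: the net contribution $\omega_i-t_{i,Y'(i)}=\omega_i-p_i\lambda^{-r_i(\varnothing)}\le0$ is bounded by $-(p_i-\beta_i^0)\lambda^{-r_i(\varnothing)}$. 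This is the delicate case. Since $p_i=\beta^0_i$ when $i$ is unmatched in $X$, that term vanishes.
\item \emph{Matched in $Y'$ with $X(i)\in J'$.} Here $Y'(i)\succsim_iX(i)$, so $t_{i,Y'(i)}\le t_{i,X(i)}$, and the term cancels against the $+t_{i,X(i)}$ term.
\item \emph{$X(i)\in J'$ but unmatched in $Y'$.} The $+t_{i,X(i)}$ term is nonnegative.
\end{itemize}

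To make the bookkeeping clean, I will group $\omega$ with transfers per worker, exactly as in the formula $\sum_jV_j(Z_j)=\kappa v(Z)+\sum(\omega_i-p_i\lambda^{-r_i(Z(i))})$ used earlier. Workers with $X(i)\in J\setminus J'$ contribute at least $-\omega_i-1$. Then the whole expression is at least $\kappa-\bar\omega-n>0$, which contradicts the inequality above.

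The main obstacle is the careful accounting when $X(i)\notin J'$, and in particular the distinction between workers unmatched in $X$, handled by $p_i=\beta_i^0$ exactly as in the Pareto argument, and workers matched outside $J'$, handled by strict preference giving a transfer below $1$. The choice $\kappa>\bar\omega+n$ is what absorbs these terms.
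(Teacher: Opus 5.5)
Your proposal is correct and follows the paper's proof essentially step for step. Like the paper, you sum buyer utilities over $J'$ only, split workers by whether $X(i)\in J'$ and whether $Y'(i)\ne\varnothing$, and use $p_i=\beta_i^0$ for workers unmatched in $X$ and the bound $t_{i,Y'(i)}<1$ for workers strictly improving from $J\setminus J'$. The paper groups each worker's $\omega_i$ with her transfers throughout and so obtains the sharper bound $\kappa-\bar\omega$, but your looser $\kappa-\bar\omega-n$ also suffices because $\kappa>\bar\omega+n$. One small correction: when $X(i)=\varnothing$, condition (ii) gives only $Y'(i)\succsim_i\varnothing$, not strict preference, but your separate treatment of the tie case, where $\omega_i-t_{i,Y'(i)}$ is exactly $0$, already covers this.
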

\begin{proof}
By \Cref{thm:terminal-equivalence}, $X$ is a baseline equilibrium matching.
Fix supporting payoffs $\bm{p}$ with transfers $t_{ij}=\lambda^{-r_i(j)}p_i$, and suppose that $J'$ and $Y'$ are as in the statement.
Let $X'=\{(i,j)\in X\mid j\in J'\}$.
Since $X_j\in\cD_j(\bm{t})$ for every buyer $j$, replacing $X'$ by $Y'$ cannot increase the total utility of the buyers in $J'$.
This replacement changes their total utility by
\begin{align*}
 \kappa\sum_{j\in J'}\alpha_j\bigl[v_j(Y'(j))-v_j(X(j))\bigr]+\sum_{i\in I}c_i,
\end{align*}
where $c_i$ collects worker $i$'s terms $\omega_i-t_{ij}$, with a plus sign for her pair in $Y'$ and a minus sign for her pair in $X'$.
The first term is at least $\kappa$, since the values are integers and every firm in $J'$ weakly improves, at least one strictly.
Let $H=\{i\in I\mid X(i)\in J'\}$ and $K=\{i\in I\mid Y'(i)\ne\varnothing\}$, and note that $c_i=0$ for $i\notin H\cup K$.
For $i\in H\cap K$, we have $r_i(Y'(i))\ge r_i(X(i))$, and thus $t_{i,Y'(i)}\le t_{i,X(i)}$ and $c_i\ge0$.
For $i\in K\setminus H$ with $X(i)=\varnothing$, we have $p_i=\beta_i^0$ and $r_i(Y'(i))\ge r_i(\varnothing)$, and thus $t_{i,Y'(i)}\le\omega_i$ and $c_i\ge0$.
For $i\in K\setminus H$ with $X(i)\in J\setminus J'$, we have $t_{i,Y'(i)}<1<\omega_i$ by condition~(ii) and \eqref{eq:baseline-payoff-upper}, and thus $c_i>0$.
For $i\in H\setminus K$, we have $c_i=t_{i,X(i)}-\omega_i\ge-\omega_i$.
The total change is therefore at least $\kappa-\bar{\omega}>0$, a contradiction.
\end{proof}

Pareto-stable matchings need not have this property.
Consider a firm $j$ with quota one and $\{b\}\succ_j\{a\}\succ_j\varnothing$, a worker $a$ with $j\succ_a\varnothing$, and a worker $b$ with $j\sim_b\varnothing$.
The matching $\{(a,j)\}$ is Pareto-stable, whereas $J'=\{j\}$ and $Y'=\{(b,j)\}$ satisfy (i) and (ii).
Generalized deferred acceptance returns $\{(b,j)\}$.

\subsection{Lattice structure of assigned-tier profiles}\label{sec:rank-lattice}
We next describe the workers' assigned tiers across all baseline equilibrium matchings.
Write $\bm{r}(X)=(r_i(X(i)))_{i\in I}$ for the rank vector of a matching $X$, ordered coordinatewise, with larger ranks preferred by workers.
The baseline equilibrium matchings, and hence the lattice below, depend on the index orders through the weights $\alpha_j$ and $\omega_i$.

A supporting payoff vector of a baseline equilibrium matching $X$ is a \emph{rank-identifying support} of $X$ if it satisfies \eqref{eq:tier-payoffs}.
Every baseline equilibrium matching $X$ has one.
Indeed, the set $P_X=\{\bm{p}\mathrel{:}(X,\bm{p})\text{ is a baseline equilibrium}\}$ is nonempty and closed by \eqref{eq:standard-equilibrium}, whose demand conditions are finitely many weak linear inequalities, and bounded by \eqref{eq:baseline-payoff-upper}.
A vector $\bm{p}\in P_X$ maximizing $\sum_ip_i$ admits no coordinatewise increase in $P_X$, and the proof of \Cref{lem:rank-support} shows that it satisfies \eqref{eq:tier-payoffs}.

\begin{corollary}\label[corollary]{cor:rank-lattice}
Fix a profile $\succsim_I$.
The rank vectors of the baseline equilibrium matchings form a distributive lattice under the coordinatewise maximum and minimum, and $\bm{r}(\mathrm{GDA}(\succsim_I))$ is its greatest element.
\end{corollary}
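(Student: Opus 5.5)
Every baseline equilibrium matching has a rank-identifying support, so ranks can be read off payoffs: for such a support $\bm{p}$ of $X$, the rank $r_i(X(i))$ is the unique integer $r$ with $\lambda^{r}\le p_i<\lambda^{r+1}$. I will use the splicing lemma (\Cref{lem:equilibrium-splicing}) with $\bm{\beta}=\bm{\beta}'=\bm{\beta}^0$ to show that the set of rank vectors is closed under coordinatewise maximum and minimum, and \Cref{thm:terminal-equivalence} to identify the greatest element. Distributivity is then automatic, since the lattice operations are the coordinatewise max and min of integer vectors, which distribute over each other; the only work is closure.

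\textbf{Closure under join.} Let $X,Y$ be baseline equilibrium matchings with rank-identifying supports $\bm{p},\bm{p}'$. By \Cref{lem:equilibrium-splicing}, $(Z^+,\bm{p}\vee\bm{p}')$ is a baseline equilibrium, where each worker $i$ takes her assignment from the equilibrium with the larger payoff. If $p_i\ge p'_i$, then $Z^+(i)=X(i)$ and $(\bm{p}\vee\bm{p}')_i=p_i\in[\lambda^{r_i(X(i))},\lambda^{r_i(X(i))+1})$. The symmetric statement holds when $p'_i>p_i$. So $\bm{p}\vee\bm{p}'$ is a rank-identifying support of $Z^+$, and $r_i(Z^+(i))$ is the rank read off $\max(p_i,p'_i)$. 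Because the map from payoffs to ranks (the floor of $\log_\lambda$) is monotone, this rank equals $\max(r_i(X(i)),r_i(Y(i)))$. Hence $\bm{r}(Z^+)=\bm{r}(X)\vee\bm{r}(Y)$.

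\textbf{Closure under meet.} The same argument applies to $(Z^-,\bm{p}\wedge\bm{p}')$. In $Z^-$ each worker takes her assignment from the equilibrium with the smaller payoff, and her payoff lies in that equilibrium's tier interval. This gives $\bm{r}(Z^-)=\bm{r}(X)\wedge\bm{r}(Y)$. The one subtle point is the case $p_i=p'_i$. Then the two payoffs lie in the same interval, so $r_i(X(i))=r_i(Y(i))$ and the choice of assignment does not matter.

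\textbf{Greatest element.} By \Cref{thm:terminal-equivalence}, $\mathrm{GDA}(\succsim_I)$ is supported by $\bm{p}^+(\bm{\beta}^0)$. By \Cref{lem:rank-support}, this support is rank-identifying. For any baseline equilibrium matching $X$ with rank-identifying support $\bm{p}$, we have $\bm{p}\le\bm{p}^+(\bm{\beta}^0)$, because $\bm{p}^+(\bm{\beta}^0)$ is the greatest equilibrium payoff vector. Monotonicity of the payoff-to-rank map then gives $\bm{r}(X)\le\bm{r}(\mathrm{GDA}(\succsim_I))$.

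I expect the main obstacle to be confirming that the rank-identifying property survives splicing, in particular the tie case in the meet. The argument above handles it.
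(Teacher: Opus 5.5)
Your proof is correct and follows the paper's argument essentially step for step. You use the same ingredients: rank-identifying supports, splicing via \Cref{lem:equilibrium-splicing} with $\bm{\beta}=\bm{\beta}'=\bm{\beta}^0$, the fact that the disjoint, rank-ordered tier intervals make ranks monotone in payoffs (your floor-of-$\log_\lambda$ phrasing), and the greatest element obtained from $\bm{p}^+(\bm{\beta}^0)$ via \Cref{thm:terminal-equivalence} and \Cref{lem:rank-support}. Your extra observation that $\bm{p}\vee\bm{p}'$ is itself a rank-identifying support of $Z^+$ is true but not needed.
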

\begin{proof}
Let $X$ and $X'$ be baseline equilibrium matchings with rank-identifying supports $\bm{p}$ and $\bm{p}'$.
For each worker $i$, the intervals in \eqref{eq:tier-payoffs} for different ranks are disjoint and ordered by rank.
Hence, $p_i\ge p'_i$ implies $r_i(X(i))\ge r_i(X'(i))$, and $p_i<p'_i$ implies $r_i(X(i))\le r_i(X'(i))$.
By \Cref{lem:equilibrium-splicing} with $\bm{\beta}=\bm{\beta}'=\bm{\beta}^0$, the matchings $Z^+$ and $Z^-$ are baseline equilibrium matchings, supported by $\bm{p}\vee\bm{p}'$ and $\bm{p}\wedge\bm{p}'$, respectively.
In $Z^+$, each worker receives her assignment from the equilibrium with the larger payoff, whose rank is the higher one by the implications above.
Thus, $\bm{r}(Z^+)=\bm{r}(X)\vee\bm{r}(X')$, and likewise $\bm{r}(Z^-)=\bm{r}(X)\wedge\bm{r}(X')$.
The rank vectors of the baseline equilibrium matchings therefore form a sublattice of $\mathbb{Z}^I$ under the coordinatewise order.
This sublattice is distributive because $\mathbb{Z}^I$ is.

Since $\mathrm{GDA}(\succsim_I)=\Psi(F^T)\in\cM_{m+n}(F^T)$, \Cref{thm:terminal-equivalence} shows that $\mathrm{GDA}(\succsim_I)$ is supported by $\bm{p}^+(\bm{\beta}^0)$.
This vector is a rank-identifying support by \Cref{lem:rank-support}, and it is the greatest equilibrium payoff vector of the baseline market by definition.
By the implications above, every baseline equilibrium matching $X'$ satisfies $\bm{r}(\mathrm{GDA}(\succsim_I))\ge\bm{r}(X')$.
\end{proof}

Under strict preferences, stable matchings form a lattice, and worker-proposing deferred acceptance selects the stable matching that every worker weakly prefers to all others~\cite{GaleShapley1962,HatfieldMilgrom2005}.
With indifferences, weakly stable matchings need not have such a common best element for workers, and \Cref{cor:rank-lattice} shows that the baseline equilibrium matchings restore this structure for assigned tiers.

\subsection{Worker-Pareto undominance}\label{sec:undominance}
Although generalized deferred acceptance is best for workers among baseline equilibrium matchings by \Cref{cor:rank-lattice}, it need not be worker-optimal among weakly stable matchings.
In the example of \Cref{sec:coalition-stability}, it returns $\{(b,j)\}$, whereas both workers weakly prefer the weakly stable matching $\{(a,j)\}$ and worker $a$ strictly.
Such Pareto inefficiency for workers among weakly stable matchings is inevitable for strategy-proof mechanisms that always select weakly stable matchings~\cite{ErdilErgin2008}.

We therefore ask whether a strategy-proof mechanism can at least improve on generalized deferred acceptance for workers.
Abdulkadiro\u{g}lu, Pathak, and Roth~\cite{AbdulkadirogluPathakRoth2009} answer this question for school choice with strict student preferences.
Let $\mathrm{DA}^\tau$ denote student-proposing deferred acceptance after the ties in school priorities are broken by a fixed rule $\tau$.
They show that no strategy-proof mechanism, stable or not, dominates $\mathrm{DA}^\tau$, that is, makes every student weakly better off than under $\mathrm{DA}^\tau$ at every profile and some student strictly better off at some profile.
Hirata and Kasuya~\cite{HirataKasuya2017} extend this result to many-to-one matching with contracts under strict doctor preferences.

In generalized deferred acceptance, the fixed index orders of firms and workers play the role of $\tau$.
Unlike in these results, workers may be indifferent, and a mechanism may then move a worker to another firm she finds equally good.
We require such moves to respect the firms' index order.
For a matching $X$, let $F^X$ be the profile with $F^X_i=\{j\in J\mid j\sim_iX(i)\}$, which places each worker in the indifference class of her assignment.
A matching $X$ satisfies \emph{tierwise firm priority} if $X\in\cM_m(F^X)$, that is, if the firms' serial dictatorship over $\cM(F^X)$ retains $X$.
Since $\cM(F^X)$ contains every matching obtained from $X$ by removing workers, tierwise firm priority implies individual rationality for firms.
A mechanism satisfies tierwise firm priority if its outcome does at every profile.

We first show that, at each profile, workers can improve on generalized deferred acceptance only by lowering the weighted sum $v$ of the firms' representations.
\begin{lemma}\label[lemma]{lem:lex-efficiency}
Let $X=\mathrm{GDA}(\succsim_I)$ for some profile $\succsim_I$, and let $Y$ be a matching with $v(Y)\ge v(X)$ for the weights of \Cref{sec:current-tier-clearing}.
If $Y(i)\succsim_iX(i)$ for every worker $i\in I$, then $Y(i)\sim_iX(i)$ for every worker $i\in I$.
\end{lemma}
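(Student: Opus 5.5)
By \Cref{thm:terminal-equivalence}, $X$ is a baseline equilibrium matching supported by the seller-optimal payoffs $\bm{p}=\bm{p}^+(\bm{\beta}^0)$, with transfers $t_{ij}=\lambda^{-r_i(j)}p_i$. By \Cref{lem:rank-support}, these payoffs satisfy the rank-identifying bounds \eqref{eq:tier-payoffs}. I would reuse the welfare comparison from the Pareto-efficiency part of the proof of \Cref{thm:auxiliary-guarantees}, this time with the seller-optimal payoffs, so that a worker's strict improvement costs the buyers at least one unit of transfer. The hypothesis $v(Y)\ge v(X)$ stands in for the firms' weak improvement.

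First, note that $v(Y)\ge v(X)>-\infty$, so every firm's set in $Y$ has finite value. Since $X_j\in\cD_j(\bm{t})$ for every buyer, we get $\sum_j[V_j(Y_j,\bm{t}_j)-V_j(X_j,\bm{t}_j)]\le 0$. Expanding as in that proof gives
\begin{align*}
0\ge\kappa[v(Y)-v(X)]+\sum_{i\in I}p_i\bigl(\lambda^{-r_i(X(i))}-\lambda^{-r_i(Y(i))}\bigr)+\sum_{i\in I\mathrel{:}Y(i)=\varnothing}\bigl(p_i\lambda^{-r_i(\varnothing)}-\omega_i\bigr),
\end{align*}
where the terms for workers unmatched in $X$ have been absorbed using $p_i=\beta_i^0=\omega_i\lambda^{r_i(\varnothing)}$. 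The first term is nonnegative by hypothesis. The third is nonnegative since $\bm{p}\ge\bm{\beta}^0$. Each summand of the middle sum is nonnegative because $r_i(Y(i))\ge r_i(X(i))$. Hence every summand must vanish.

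Now suppose some worker $i$ has $Y(i)\succ_iX(i)$, so that $r_i(Y(i))\ge r_i(X(i))+1$. Her middle term is $p_i\lambda^{-r_i(X(i))}(1-\lambda^{r_i(X(i))-r_i(Y(i))})$. By the lower bound in \eqref{eq:tier-payoffs}, $p_i\lambda^{-r_i(X(i))}\ge 1$, so this term is at least $1-1/\lambda>0$, which is a contradiction. The case $X(i)=\varnothing$ is covered by the same bound, since $p_i=\omega_i\lambda^{r_i(\varnothing)}\ge\lambda^{r_i(\varnothing)}$.

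The main point to check is exactly this last step. Positivity of the worker term requires $p_i>0$, which holds for any equilibrium, but the argument is cleanest with the seller-optimal support. I expect no further obstacle, since the ordinal tiers make any strict rank gain a positive term. The integer structure of $v$ and the size of $\kappa$ are not needed here, because $v(Y)\ge v(X)$ is assumed directly.
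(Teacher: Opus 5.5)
Your proposal is correct and matches the paper's proof: the paper also observes that the Pareto-efficiency argument in the proof of \Cref{thm:auxiliary-guarantees} uses the firms' preferences only through $v(Y)\ge v(X)$, and applies it at the baseline equilibrium matching $X$ given by \Cref{thm:terminal-equivalence}. Using the seller-optimal support and the lower bound in \eqref{eq:tier-payoffs} is harmless but, as you note yourself, unnecessary, since $p_i>0$ already makes a worker's strict rank gain contribute a positive term.
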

\begin{proof}
By \Cref{thm:terminal-equivalence}, $X$ is a baseline equilibrium matching.
The argument for Pareto efficiency in the proof of \Cref{thm:auxiliary-guarantees} uses the firms' preferences only through $v(Y)\ge v(X)$, and it therefore shows that no worker strictly prefers $Y(i)$ to $X(i)$.
\end{proof}

Combined with strategy-proofness, \Cref{lem:lex-efficiency} yields the main result of this section.
\begin{theorem}\label[theorem]{thm:undominance}
Let $\Phi$ be a deterministic mechanism that is worker strategy-proof and satisfies tierwise firm priority.
If $\Phi_i(\succsim_I)\succsim_i\mathrm{GDA}_i(\succsim_I)$ for every profile $\succsim_I$ and every worker $i\in I$, then $\Phi_i(\succsim_I)\sim_i\mathrm{GDA}_i(\succsim_I)$ for every profile $\succsim_I$ and every worker $i\in I$.
\end{theorem}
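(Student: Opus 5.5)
The plan is to argue by contradiction, in the style of Abdulkadiro\u{g}lu, Pathak, and Roth, by reducing to a profile where \Cref{lem:lex-efficiency} applies. Suppose $\Phi$ weakly dominates $\mathrm{GDA}$ for workers at every profile, and at some profile $\succsim_I$ some worker $i$ has $\Phi_i(\succsim_I)\succ_i\mathrm{GDA}_i(\succsim_I)$. Call such a worker a \emph{strict improver} at that profile.

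\textbf{Truncation step.} Since $\mathrm{GDA}$ is individually rational by \Cref{thm:generalized-da}, the firm $\Phi_i(\succsim_I)$ lies strictly above $\varnothing$ for a strict improver $i$. Let $\succsim'_i$ declare acceptable exactly the indifference class $T_i$ of $\Phi_i(\succsim_I)$, with every other firm unacceptable.
\begin{itemize}[itemsep=0.2em, topsep=0.3em]
 \item Strategy-proofness of $\Phi$, with $\succsim'_i$ taken as the true preference, gives $\Phi_i(\succsim'_i,\succsim_{-i})\in T_i$. Otherwise the report $\succsim_i$ would give her $\Phi_i(\succsim_I)\in T_i$, which she strictly prefers under $\succsim'_i$.
 \item Strategy-proofness of $\mathrm{GDA}$ (\Cref{thm:generalized-da}) under the true $\succsim_i$ gives $\mathrm{GDA}_i(\succsim'_i,\succsim_{-i})\precsim_i\mathrm{GDA}_i(\succsim_I)\prec_i T_i$. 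Since $\mathrm{GDA}$ is individually rational under $\succsim'_i$, worker $i$ must be unmatched there.
\end{itemize}
Hence at $(\succsim'_i,\succsim_{-i})$, worker $i$ is again a strict improver, and her report has a single acceptable tier.

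We iterate this step. While some strict improver has not yet been truncated, truncate her in the same way. Every truncated worker keeps a single-tier report, and each step truncates one new worker, so the process ends after at most $|I|$ steps. The set of strict improvers is always nonempty, because the worker just truncated is one. At the final profile, still written $\succsim_I$, every strict improver $i$ has a single acceptable tier $T_i$ strictly above $\varnothing$. She receives a firm in $T_i$ under $\Phi$ and is unmatched under $\mathrm{GDA}$.

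\textbf{Contradiction at the final profile.} Let $Y=\Phi(\succsim_I)$ and $X=\mathrm{GDA}(\succsim_I)$. Every non-improver has $Y(i)\sim_iX(i)$, so $F^Y_i=F^X_i$. For every strict improver, $F^X_i$ is the set of firms tied with $\varnothing$, which is empty because her only acceptable tier lies strictly above $\varnothing$. Hence $F^X_i\subseteq F^Y_i$ for every worker $i$, and so $\cM(F^X)\subseteq\cM(F^Y)$. Since $X\in\cM(F^X)\subseteq\cM(F^Y)$ and $Y\in\cM_m(F^Y)$ by tierwise firm priority, the serial dictatorship over $\cM(F^Y)$ gives $Y$ a lexicographically weakly larger vector of firm values than $X$ in the firm index order.

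Both $X$ and $Y$ are individually rational for firms, so every firm value lies in $\{0,\dots,M-1\}$. As in the proof of \Cref{prop:ordered-welfare}, the base-$M$ digits of $v$ are then the firms' values in index order, so $v(Y)\ge v(X)$. Now \Cref{lem:lex-efficiency} applies, since $Y(i)\succsim_iX(i)$ for all $i$, and yields $Y(i)\sim_iX(i)$ for every $i$. This contradicts the existence of a strict improver.

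\textbf{Expected obstacle.} I expect the delicate point to be the termination bookkeeping of the truncation process. Truncating one worker can change outcomes so that earlier strict improvers cease to improve and new ones appear. The plan handles this by never untruncating a worker and only requiring that the current set of strict improvers be nonempty and consist of truncated workers. A second point to verify is that ties between a tier and $\varnothing$ never place firms in $F^X_i$ for an unmatched improver; this is why the truncated tier must be strictly above $\varnothing$, which individual rationality of $\mathrm{GDA}$ guarantees.
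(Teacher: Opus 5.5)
Your proof is correct and follows the paper's argument: truncate strict improvers, use strategy-proofness of $\mathrm{GDA}$ and $\Phi$ to reach a profile where every strict improver is unmatched under $\mathrm{GDA}$, then combine tierwise firm priority with \Cref{lem:lex-efficiency}. The paper instead takes a minimal counterexample (fewest acceptable pairs), placing $\varnothing$ just above the tier of $X(i)$ and only for improvers matched in $X$, and it reads off $X\in\cM(F^Y)$ directly because an unmatched worker imposes no constraint in $\cM(F^Y)$, so your check that no firm is tied with $\varnothing$ is unnecessary.
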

\begin{proof}
We first show that $\mathrm{GDA}$ satisfies tierwise firm priority.
Fix a profile $\succsim_I$, and write $X=\mathrm{GDA}(\succsim_I)=\Psi(F^T)$, where $F^T$ is the terminal tier profile.
At termination, every matched worker's current tier is the indifference class of her assignment, and every unmatched worker has reached her last acceptable tier, if she has one, which contains every firm she finds indifferent to remaining unmatched.
Hence, $F^X_i\subseteq F^T_i$ for every worker $i$, and thus $\cM(F^X)\subseteq\cM(F^T)$.
Since $X$ lies in $\cM(F^X)$ and in $\cM_m(F^T)$, the set of matchings lexicographically maximal for the firms over $\cM(F^T)$, it also lies in $\cM_m(F^X)$.

Now suppose that the conclusion fails.
Among the profiles at which some worker strictly prefers her assignment under $\Phi$ to that under $\mathrm{GDA}$, choose $\succsim_I$ with the fewest acceptable pairs, that is, pairs $(i,j)$ with $j\succsim_i\varnothing$, and write $X=\mathrm{GDA}(\succsim_I)$ and $Y=\Phi(\succsim_I)$.
Suppose that some worker $i$ satisfies $Y(i)\succ_iX(i)\ne\varnothing$.
Let $\succsim'_i$ keep $i$'s ranking of firms and place $\varnothing$ alone just above the tier of $X(i)$, so that her acceptable firms are those she strictly prefers to $X(i)$, and put $\succsim'_I=(\succsim'_i,\succsim_{I\setminus\{i\}})$.
If $\mathrm{GDA}$ matched $i$ at $\succsim'_I$, she would gain at $\succsim_I$ by reporting $\succsim'_i$, contradicting the strategy-proofness of $\mathrm{GDA}$ (\Cref{thm:generalized-da}).
Hence, $\mathrm{GDA}_i(\succsim'_I)=\varnothing$.
By the strategy-proofness of $\Phi$ at $\succsim'_I$, we have $\Phi_i(\succsim'_I)\succsim'_iY(i)\succ'_i\varnothing$.
Thus, worker $i$ strictly prefers her assignment under $\Phi$ to that under $\mathrm{GDA}$ at $\succsim'_I$, which has fewer acceptable pairs, contradicting the choice of $\succsim_I$.

Hence, every worker who strictly prefers $Y(i)$ to $X(i)$ is unmatched in $X$, and every worker matched in $X$ is indifferent between $X(i)$ and $Y(i)$.
Thus, $X\in\cM(F^Y)$.
By tierwise firm priority, $Y\in\cM_m(F^Y)$, and the argument in the proof of \Cref{prop:ordered-welfare}~(i) shows that $\cM_m(F^Y)$ consists of the maximizers of $v$ over $\cM(F^Y)$.
Hence, $v(Y)\ge v(X)$, and \Cref{lem:lex-efficiency} shows that every worker is indifferent between $X(i)$ and $Y(i)$, contradicting the choice of $\succsim_I$.
\end{proof}

\Cref{thm:undominance} requires neither stability of $\Phi$ nor any condition involving the worker index order.
Its proof also applies to mechanisms defined only on strict worker preferences, since the truncation of a strict order is strict.
On this domain, $\cM(F^X)$ consists of the subsets of $X$, and tierwise firm priority reduces to individual rationality for firms.
With positive additive scores and quotas that rank workers strictly by the school priorities broken by $\tau$, generalized deferred acceptance coincides with $\mathrm{DA}^\tau$, and we recover the result of Abdulkadiro\u{g}lu, Pathak, and Roth~\cite{AbdulkadirogluPathakRoth2009}.

\section{Concluding Remarks}\label{sec:discussion}

In this paper, we construct generalized deferred acceptance and establish its Pareto stability and worker WGSP under public GS-representable firm preferences.
We also show that its outcome resists deviations by coalitions of several firms, attains the greatest profile of assigned tiers among equilibrium matchings, and is not uniformly Pareto dominated for workers by any strategy-proof mechanism respecting the firms' index order within tiers.

Pareto stability and worker WGSP hold not only for serial dictatorship but also when the firms' stage of each round maximizes a positive-integer weighted sum of their GS representations, fixed before workers report, while retaining all maximizers and applying the same ordered worker choices and final selection.
More generally, the same argument applies when the firms' stage instead maximizes a single integer-valued GS function $v$ on worker--firm pairs with $v(\emptyset)=0$ that is monotone in the firms' preferences.
Monotonicity means that, for all sets $X$ and $Y$ of pairs with $v(Y)>-\infty$, $v(X)\ge v(Y)$ whenever every firm $j$ weakly prefers $X(j)$ to $Y(j)$, with strict inequality whenever some firm $j$ strictly prefers $X(j)$ to $Y(j)$.
In this case, the firms act as a single buyer with valuation $\kappa v+\omega$, the proof of \Cref{thm:auxiliary-guarantees} uses the monotonicity of $v$ in place of each firm's demand, comparing $X$ with $(X\setminus X_j)\cup(T\times\{j\})$ for a blocking coalition $(j,T)$, and \Cref{sec:gda-analysis} applies with the firms' stage of each round maximizing $v$.
\Cref{thm:undominance} also extends, with tierwise firm priority defined through the firms' stage of the variant in place of their serial dictatorship.

For instance, when each firm compares sets of workers by the preorder generated by responsiveness from a weak ranking of individual workers, an assignment valuation whose slot weights decrease lexicographically in the round-robin order satisfies this condition, because an optimal assignment places better workers in slots with larger weights.
If every worker finds all firms acceptable and, writing $n=km+r$ with $0\le r<m$, firms $j_1,\ldots,j_r$ have $k+1$ slots and the others have $k$ slots, maximizing it reproduces the round-robin choice of Igarashi et al.~\cite{IgarashiEtAl2026}, since the weights make it fill the slots one at a time in that order, each with a best remaining worker.

For our procedure, GS-representability cannot be weakened to the induced choice correspondences.
A choice correspondence records choices from sets of available workers, not comparisons among the bundles left by earlier firms' optimality.
In \Cref{app:pi-lad-failure}, every firm's choice correspondence is induced by an additive GS function on a partition-matroid domain, yet the procedure returns an unstable matching even though a Pareto-stable one exists.

A natural extension would evaluate firm welfare using only the revealed preference preorders generated by choice correspondences, following the approach of Alkan and Gale~\cite{AlkanGale2003}.
Natural conditions on these correspondences are path independence and the law of aggregate demand in the sense of Bando, Imamura, and Kawase~\cite{BIK2025}.
When workers may be indifferent, it remains open whether these conditions suffice for some mechanism to guarantee weak stability, Pareto efficiency with respect to these preorders and the workers' preferences, and worker WGSP.

\section*{Acknowledgments}
This work was supported by JSPS KAKENHI Grant Numbers JP25K00137 and JP26K21174, and JST ERATO Grant Number JPMJER2301.

We used OpenAI's GPT and Anthropic's Claude to assist with developing research ideas, constructing and checking proofs, and drafting and revising the manuscript.
The authors take full responsibility for the content of this paper, including all mathematical claims and proofs.

\bibliographystyle{plain}
\bibliography{references}

\appendix

\section{Failure under GS-Induced Choice Correspondences}
\label[appendix]{app:pi-lad-failure}

The following example separates GS-representability of full firm preferences from GS rationalization of their choice correspondences.
Changing bundle comparisons that leave the choice correspondence unchanged can cause the procedure to return an unstable matching.
A Pareto-stable matching nevertheless exists in this market.

\begin{proposition}\label[proposition]{prop:pi-lad-failure}
Even when every firm's choice correspondence is induced by a GS function, generalized deferred acceptance can fail weak stability.
\end{proposition}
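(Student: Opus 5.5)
The plan is to prove the proposition with an explicit small market, obtained by taking a GS-representable firm preference and changing only bundle comparisons that the choice correspondence does not record. Since $\cC_j(A)$ only identifies the maximal subsets of each available set $A$, two bundles that are both non-maximal in every set containing them can be reordered without changing $\cC_j$. Serial dictatorship over feasible matchings, however, asks a later firm to compare exactly such bundles: these are the sets left to it by earlier firms' choices, which need not be all subsets of some available set. The market will exploit this gap.

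\textbf{Construction.} I would use two firms $A$ (index first) and $B$, and three or four workers. Firm $B$'s choice correspondence is induced by an additive function $u_B$ on a partition-matroid domain, for instance a quota of two with two workers $a,b$ in separate blocks. Its full weak order $\succsim_B$ agrees with $u_B$ on every comparison involving a maximal element of some $\cC_B(A')$. On the non-maximal singletons it is reversed: $u_B$ ranks $\{a\}$ above $\{b\}$, while $\succsim_B$ ranks $\{b\}\succ_B\{a\}$. Since $\{a,b\}$ is the unique choice from $\{a,b\}$, this reversal leaves every $\cC_B(A')$ unchanged. Firm $A$'s preference is GS-representable, say additive with quota one or an assignment valuation. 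I would choose it, together with ties in the workers' tiers (for example $A\sim B$ for $a$ and $b$, plus a third worker $c$ who ranks $A$ first), so that after $A$ retains its best matchings, only two kinds of feasible options remain for $B$. In one, $B$ obtains $\{b\}$ and $a$ goes to $A$ or is left unmatched. In the other, $B$ obtains $\{a\}$ and $b$ goes to $A$.

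\textbf{Verification.} I would then follow these steps in order.
\begin{enumerate}
\item Check that every firm's choice correspondence equals the one induced by its GS function. For $B$ this means checking every available set; for $A$ it is immediate.
\item Run the procedure round by round, as in \Cref{ex:gda-execution}. This traces firm $A$'s serial dictatorship, then $B$'s choice under $\succsim_B$, then the worker stage and $\Psi$, up to termination.
\item Exhibit a blocking coalition for the returned matching $\mu$. I would arrange it as $(B,T)$, where $T$ contains a worker who ends unmatched, or ends at a strictly worse tier, and who strictly prefers $B$ to her final outcome, with $T\succ_B\mu(B)$ holding under the true $\succsim_B$.
\item Exhibit a Pareto-stable matching of the same market by a direct finite check of blocking coalitions and Pareto improvements.
\end{enumerate}

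\textbf{Main obstacle.} The hard part is the tension in step 3. The reversed comparison $\{b\}\succ_B\{a\}$ must steer the serial dictatorship toward $\mu$. Yet a blocking pair needs a bundle $T\succ_B\mu(B)$ that was unavailable in the round where $B$ chose, and that becomes blockable only because some worker later advanced to a strictly worse tier; this is what makes the worker's preference for $B$ strict. My first attempt would use the within-tier switching of matched workers. In an early round $B$ takes $b$, so $a$ stays with $A$. Later a newly arriving worker displaces $a$ from $A$, and $a$ falls to a tier strictly below $B$. By then the tie-based options that would have given $B$ the bundle $\{a,b\}$ are gone. So $T=\{a,b\}$ blocks $\mu$, since $u_B$, and hence $\succsim_B$, ranks it above $\{b\}$. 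If this configuration cannot be realized with two firms, I would add a third firm that absorbs $b$ in the early round.
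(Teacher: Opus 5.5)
Your proposal has a genuine gap. The statement is existential, so it is proved only by exhibiting a specific market, tracing the run, and showing that the output is blocked. That is exactly the part you leave open as the ``main obstacle.'' Moreover, both concrete ingredients you propose would fail.

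First, \Cref{thm:generalized-da} says that if every firm's full weak order is GS-representable, the procedure is weakly stable. So reordering $\succsim_B$ must destroy GS-representability, not merely preserve $\cC_B$, and your plan never targets or checks this. The singleton reversal does not achieve it in general. If $B$ accepts only $a$ and $b$, the order $\emptyset\prec_B\{a\}\prec_B\{b\}\prec_B\{a,b\}$ is represented by an additive function with values $1$ and $2$, so no run can be unstable. The reordered comparisons must actually violate \eqref{eq:exc}. In the paper's example, the bonus on $\{c,d\}$ gives $cd\succ_B ac\sim_B ad$ and $a\succ_B c\sim_B d$. This violates the exchange axiom for $X=cd$, $Y=a$, $e=c$, while $g_B(X)=|X|+2\cdot\mathbf{1}[a\in X]$ still induces the same choice correspondence.

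Second, the blocking scenario you sketch is excluded by the serial dictatorship itself. Consider the round in which $a$ is displaced from $A$ and ends unmatched. Her tier is still $\{A,B\}$, because she advances only at the next Step~3, so the option of giving her to $B$ is not yet gone. Suppose $B$ held $\{b\}$ in $\Psi(F)$. Then $\Psi(F)\cup\{(a,B)\}$ lies in $\cM(F)$ and gives every other firm the same set, so it survives the retention of every firm preceding $B$. $B$'s own retention then forces $\Psi(F)(B)\succsim_B\{a,b\}\succ_B\{b\}$, a contradiction. This argument uses only that adding $(a,B)$ leaves the other firms' sets unchanged, so it holds with any number of firms.

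A block of the kind you want therefore requires $B$'s set to change after the blocking worker has left $B$'s tier, and the manipulated comparison must drive that change. The paper's example does exactly this. In round~1, $B$ compares $\{a\}$ with $\{b,d\}$, sets arising from different optimal choices of $A$, and rejects $b$, who advances to $A$. In round~3, the unrecorded comparison $cd\succ_B ac$ makes $B$ release $a$ to $A$ and take $cd$. Then $B$ and $bcd$ block, since $u_B(bcd)=6>5=u_B(cd)$ and $b$ strictly prefers $B$ to $A$.
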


\begin{proof}
Let $I=\{a,b,c,d\}$ and $J=\{A,B\}$, with firm index order $A,B$.
We abbreviate worker sets as $ac=\{a,c\}$.
Workers $a$ and $d$ have preference $A\sim B\succ\varnothing$, worker $b$ has $B\succ A\succ\varnothing$, and worker $c$ has $A\succ B\succ\varnothing$.
Define the partition-matroid domains
\begin{align*}
 \cD_A&=\{X\subseteq I\mathrel{:} |X\cap\{a,d\}|\le1,\ |X\cap\{b,c\}|\le1\},\\
 \cD_B&=\{X\subseteq I\mathrel{:} |X\cap\{a,b\}|\le1\}.
\end{align*}
Define the firms' preferences using the utilities
\begin{align*}
 u_A(X)&=|X|+\mathbf{1}[b\in X] &&(X\in\cD_A),\\
 u_B(X)&=|X|+2\cdot\mathbf{1}[a\in X]+3\cdot\mathbf{1}[\{c,d\}\subseteq X] &&(X\in\cD_B),
\end{align*}
with value $-\infty$ outside the respective domains.
The function $u_A$ is GS, whereas the weak order induced by $u_B$ has no GS representation, as shown below.

Let $g_B(X)=|X|+2\cdot\mathbf{1}[a\in X]$ for $X\in\cD_B$, with value $-\infty$ otherwise.
Both $u_B$ and $g_B$ uniquely choose all available members of $\{c,d\}$, together with $a$ if available and otherwise $b$ if available.
Thus, $\cC_B(S)=\argmax\{u_B(X)\mathrel{:} X\subseteq S\}=\argmax\{g_B(X)\mathrel{:} X\subseteq S\}$ for every $S\subseteq I$.
Since $u_A$ and $g_B$ are additive on partition-matroid domains, they are GS functions inducing the firms' choice correspondences.

We now run generalized deferred acceptance in the firm order $A,B$.
\begin{itemize}
\item In the first round, $A$ retains $ac$ and $cd$.
Since $u_B(a)=3>2=u_B(bd)$, the unique retained matching is $\mu^{(1)}=\{(a,B),\allowbreak(c,A),\allowbreak(d,A)\}$, leaving $b$ unmatched.

\item In the second round, worker $b$ advances to $A$.
Firm $A$ retains $ab$ and $bd$, and $B$ prefers $a$ to $d$, yielding $\mu^{(2)}=\{(a,B),\allowbreak(b,A),\allowbreak(d,A)\}$ and leaving $c$ unmatched.

\item In the third round, worker $c$ advances to $B$.
Firm $A$ again retains $ab$ and $bd$, but now $u_B(cd)=5>4=u_B(ac)$.
The procedure terminates at $\mu^{(3)}=\{(a,A),\allowbreak(b,A),\allowbreak(c,B),\allowbreak(d,B)\}$.
\end{itemize}
Firm choices uniquely determine the matching in each round, establishing independence from the worker index order and the final selection rule.
Firm $B$ and $bcd$ block the terminal matching because $u_B(bcd)=6>5=u_B(cd)$ and worker $b$ strictly prefers $B$ to $A$.
\end{proof}

Although $g_B$ reproduces $B$'s choice correspondence, no GS function represents $B$'s full weak order.
Any such representation $v$ would satisfy the exchange axiom~\eqref{eq:exc} for $X=cd$, $Y=a$, and $e=c$, giving $v(cd)+v(a)\le\max\{v(d)+v(ac),\ v(ad)+v(c)\}$.
However, $cd\succ_B ac\sim_B ad$ and $a\succ_B c\sim_B d$ make both terms on the right strictly smaller than the left.

Nevertheless, $\widehat\mu=\{(a,B),\allowbreak(b,A),\allowbreak(c,B),\allowbreak(d,B)\}$ is Pareto-stable.
It is individually rational and admits no block, since $B$ receives its unique best bundle $acd$, workers $a,d$ are indifferent between the firms, and $A$ prefers $b$ to $c$.
Preserving $B$'s best bundle and $A$'s welfare fixes all assignments, ruling out a Pareto improvement.

\section{Strongly Polynomial Implementation}\label[appendix]{app:implementation}
We show that each round of generalized deferred acceptance amounts to one valuated matroid intersection.
Before workers report, compute the base $M$ of \Cref{sec:current-tier-clearing} with $nm$ singleton value queries, and fix $\kappa$ and the weights $\omega_i$ of \Cref{sec:worker-tiers}.
For a tier profile $F$, let $E_F=\{(i,j)\mid j\in F_i\}$, so that $\cM(F)$ consists of the matchings contained in $E_F$, and put $N=m+1$.
Let $a_i=1$ if worker $i$'s current tier lies strictly above $\varnothing$, and $a_i=0$ otherwise.

\begin{proposition}\label[proposition]{prop:single-vmi}
For every tier profile $F$, the matching $\Psi(F)$ is the unique maximizer over $\cM(F)$ of
\begin{align*}
 Q_F(\nu)=N^n\biggl[\kappa v(\nu)+\sum_{(i,j)\in\nu}a_i\omega_i\biggr]+\sum_{(i_q,j_t)\in\nu}(m+1-t)N^{n-q}.
\end{align*}
This maximizer can be computed by one valuated matroid intersection on $|E_F|+n$ elements.
Consequently, generalized deferred acceptance runs in strongly polynomial time given a value oracle for each GS representation.
\end{proposition}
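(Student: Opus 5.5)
The plan has two parts: first show that $\Psi(F)$ is the unique maximizer of $Q_F$ over $\cM(F)$, then cast this maximization as a valuated matroid intersection.

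\textbf{Uniqueness via lexicographic comparison.} The bracketed term is the objective of \eqref{eq:round-welfare} up to a constant. Since $z_i(F_i)=1$ when $F_i$ is nonempty and not strictly above $\varnothing$, and $z_i(F_i)<1$ otherwise, the proof of \Cref{prop:ordered-welfare}(i) carries over: maximizing $\kappa v(\nu)+\sum_{(i,j)\in\nu}a_i\omega_i$ over $\cM(F)$ retains exactly $\cM_{m+n}(F)$. The base-$M$ argument gives the firms' serial dictatorship, and the weights $a_i\omega_i$ give the workers' choices in index order, since $\omega_{i_q}>\sum_{r>q}\omega_{i_r}$.

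This bracket is an integer. The tie-breaking sum $\sum(m+1-t)N^{n-q}$ lies in $[0,\sum_q mN^{n-q}]$. Because $N=m+1$, this range is less than $N^n$. Hence $Q_F$ first maximizes the bracket and then, among the maximizers, the tie-breaking sum.

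Each worker $i_q$ contributes a single base-$N$ digit: $m+1-t$ if she is matched to $j_t$, and $0$ if she is unmatched. Maximizing this digit vector lexicographically is exactly the lexicographic minimization of $(\nu(i_1),\ldots,\nu(i_n))$ in the order $j_1,\ldots,j_m,\varnothing$. Since the digits determine $\nu$, the maximizer is unique and equals $\Psi(F)$.

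\textbf{Valuated matroid intersection formulation.} Use the ground set $E_F\cup\{d_i: i\in I\}$, where $d_i$ is a dummy element meaning ``$i$ unmatched''.

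On the worker side, take the partition matroid whose bases pick exactly one element from $\{(i,j): j\in F_i\}\cup\{d_i\}$ for each $i$. Give it the additive valuation that assigns the terms $N^n a_i\omega_i+(m+1-t)N^{n-q}$ to the pair $(i_q,j_t)$ and $0$ to $d_i$.

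On the firm side, take the valuation $B\mapsto N^n\kappa\, v(B\cap E_F)$. This is a GS function of the pairs, because $v$ is a weighted sum of GS functions on disjoint ground sets. Adding the dummies as free elements, the resulting set function is a valuated matroid on bases, via the standard equivalence of \Mnat-concave functions with valuated matroids obtained by homogenization; dummies play the role of slack. Common bases then correspond bijectively to matchings in $\cM(F)$, with total value $Q_F$. The instance has $|E_F|+n$ elements, as claimed.

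\textbf{Running time and the main obstacle.} Valuated matroid intersection is solvable in strongly polynomial time with value oracles; one can cite Murota's algorithm. Each oracle call to the firm-side valuation costs $m$ queries to the $v_j$. The numbers involved are of the form $N^n\kappa\alpha_j v_j$, and their bit length is polynomial in $n$, $m$, and the input. Arithmetic on these numbers is the delicate point for strong polynomiality. I would avoid it by running an algorithm that works with lexicographic (vector) weights, or one that uses only comparisons and additions of values.

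Combined with at most $nm$ rounds, this gives the claimed running time. I expect this last step to be the main obstacle: certifying strong polynomiality despite the exponentially scaled weights, and checking that the dummy-extended firm valuation is exactly a valuated matroid, including its effective domain.
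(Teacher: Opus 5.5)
Your proposal is correct and follows the paper's proof. The steps match: the integer bracket is separated from the base-$N$ digits, using the argument of \Cref{prop:ordered-welfare}~(i) with $a_i\omega_i$ in place of $\omega_i-z_i(F_i)$. The dummy elements with an equal-cardinality restriction, intersected with the worker partition matroid, are also the same; the paper keeps all of $Q_F$ in the dummy-extended valuation rather than moving the additive part to the partition-matroid side, which is equivalent. Finally, Murota's algorithm has an operation count independent of the numbers, so the polynomial encoding lengths you already note settle strong polynomiality without any lexicographic-weight workaround.

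One slip: for nonempty $F_i$ not strictly above $\varnothing$, one has $z_i(F_i)=\omega_i$, not $1$. So the bracket is not the objective of \eqref{eq:round-welfare} up to a constant. Your direct argument with the superincreasing weights $a_i\omega_i$ is what actually justifies the claim.
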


\begin{proof}
The bracket is integer-valued.
The proof of \Cref{prop:ordered-welfare}~(i) applies directly, with the difference $\omega_i-z_i(F_i)$ replaced by $a_i\omega_i$.
Hence, the maximizers of the bracket over $\cM(F)$ are exactly the matchings in $\cM_{m+n}(F)$.
The last sum lies in $\{0,\ldots,N^n-1\}$, and its base-$N$ digits are the scores $m+1-t$ of the workers' firms $j_t$ in the worker index order, with score $0$ for an unmatched worker.
Maximizing $Q_F$ therefore retains $\cM_{m+n}(F)$ and then minimizes the assignment vector lexicographically under the order $j_1,\ldots,j_m,\varnothing$, which selects exactly $\Psi(F)$.

On subsets of $E_F$, the function $Q_F$ is the sum of a positive multiple of the GS function $v$ and an additive function, hence it is GS.
Add a dummy element $d_i$ for each worker $i$, and put $f_F(S)=Q_F(S\cap E_F)$ if $|S|=n$ and $f_F(S)=-\infty$ otherwise for $S\subseteq E_F\cup\{d_i\mid i\in I\}$.
By the equal-cardinality exchange property of GS functions~\cite[Remark~3.3 and (3.10)]{Murota2016}, $f_F$ is a valuated matroid of rank $n$.
The bases of the partition matroid in which each worker $i$ has the block $(E_F\cap(\{i\}\times J))\cup\{d_i\}$ correspond to the matchings $\nu\in\cM(F)$ through $\nu\mapsto\nu\cup\{d_i\mid\nu(i)=\varnothing\}$, and $f_F$ takes the value $Q_F(\nu)$ there.
Hence, $\Psi(F)$ is obtained by maximizing $f_F$ over the common bases, a valuated matroid intersection on $|E_F|+n$ elements.

It is known that valuated matroid intersection on a ground set $V$ with rank $r$ can be solved in $O(|V|r^2\gamma+|V|r\log|V|)$ time, where $\gamma$ is the time for one valuation query~\cite{Murota1996Algorithms}.
Here, $|V|=|E_F|+n\le(m+1)n$ and $r=n$, and each valuation query to $f_F$ uses at most $m$ value queries.
Each round therefore uses $O(n^3m^2)$ value queries and polynomially many arithmetic operations, and the at most $nm$ rounds use $O(n^4m^3)$ value queries in total.
The numbers $N^n$, $\kappa$, $\omega_i$, and $M^{m-t}$, as well as the values of $f_F$, have encoding lengths polynomial in $n$, $m$, and the encoding lengths of the finite oracle values.
Hence, the procedure runs in strongly polynomial time.
\end{proof}

\end{document}